\newtheorem{thm}{Theorem}[section]
\newtheorem{cor}[thm]{Corollary}
\newtheorem{lem}[thm]{Lemma}
\newtheorem{prop}[thm]{Proposition}
\theoremstyle{definition}
\theoremstyle{remark}
\def\beq{\begin{eqnarray}}
\def\eeq{\end{eqnarray}}
\def\bsp{\begin{split}}
\def\esp{\end{split}}
\newcommand{\z}{\zeta}
\newcommand{\bz}{\bar{\zeta}}
\newcommand{\ba}{\bar{a}}
\begin{document}

\title{\Large\textbf{Vacuum Kundt waves }}
\author{{\large\textbf{David McNutt$^{1}$, Robert Milson$^{1}$, and Alan Coley$^{1}$ } }
 \vspace{0.3cm} \\
$^{1}$Department of Mathematics and Statistics,\\
Dalhousie University,
Halifax, Nova Scotia,\\
Canada B3H 3J5
\vspace{0.2cm}\\
\texttt{ddmcnutt@dal.ca, aac@mathstat.dal.ca,rmilson@dal.ca } }
\date{\today}
\maketitle
\pagestyle{fancy}
\fancyhead{} 
\fancyhead[EC]{D. McNutt, R. Milson and A. Coley}
\fancyhead[EL,OR]{\thepage}
\fancyhead[OC]{Vacuum Kundt Waves}
\fancyfoot{} 

\begin{abstract} 

We discuss the invariant classification of vacuum Kundt waves using the Cartan-Karlhede algorithm and determine the  upper bound on the number of iterations of the Karlhede algorithm to classify the vacuum Kundt waves \cite{Collins91, MRVickers}. By choosing a particular coordinate system we partially construct the canonical coframe used in the classification to study the functional dependence of the invariants arising at each iteration of the algorithm. We provide a new upper bound, $q \leq 4$, and show that this bound is sharp by analyzing the subclass of Kundt waves with invariant count beginning with (0,1,...) to show that the class with invariant count $(0,1,3,4,4)$ exists. This class of vacuum Kundt waves is shown to be unique as the only set of metrics requiring the fourth covariant derivatives of the curvature. We conclude with an invariant classification of the vacuum Kundt waves using a suite of invariants. 
\end{abstract} 

\maketitle


\begin{section}{Introduction}
The Kundt waves were originally defined by Kundt in 1961 \cite{Kundt61}, as a special subcase of the class of pure radiation  solutions of Petrov type III or higher and Plebanski-Petrov (PP) type O or vacuum admitting a non-twisting, non-expanding null congruence, $\ell$, that is 
\beq \ell^a\ell_a = 0,~~\ell^a_{~;a} = 0,~~\ell_{(a;b)}\ell^{a;b} =0,~~\ell_{[a;b]}\ell^{a;b} = 0. \nonumber \eeq 
\noindent These conditions restrict the Petrov type for the plane-fronted waves to Petrov type N or O. 
Choosing Kundt coordinates, the metric for the Kundt waves is 
\beq & ds^2 = d\z d\bz - du \left(dv - \frac{2v}{\z + \bz} (d\z + d\bz) + \left(4H(\z,\bz,u) (\z+\bz)- \frac{v^2}{(\z+\bz)^2} \right) du  \right), & \label{RPFWmetric} \eeq
\noindent where $u,v$ are null coordinates, and $\z, \bz$ are complex coordinates for the transverse space\cite{ExactSolns}. 

All polynomial curvature invariants, built from contracting the Riemann tensor and covariant derivatives with each other, vanish for these spacetimes. Thus, the plane-fronted belong to the collection of $VSI$ spacetimes where all polynomial curvature invariants vanish \cite{4DVSI}; this is, in turn, a subclass of the $CSI$ spacetimes in which all polynomial curvature invariants are constant \cite{4DCSI}. These spaces have been explored in four dimensions and were shown to belong to the class of degenerate Kundt metrics \cite{Kundt}. These are the Kundt metrics where the frame used to classify the Riemann tensor (i.e., Petrov or Riemann type \cite{Weyl}) and the kinematic frame are aligned; i.e., they are the same. It is expected that this is the case in higher dimensions as well \cite{HDVSI, Kundt}.

For a given spacetime in four dimensions, a spacetime is either uniquely determined by its polynomial scalar curvature invariants, is a (locally) homogeneous space, or is a degenerate Kundt spacetime \cite{Kundt}. For the degenerate Kundt  spacetimes the equivalence problem is particularly relevant,  given that one cannot determine the inequivalence of two metrics of this class by comparing polynomial scalar curvature invariants \cite{4DVSI, 4DCSI,4DKundt}. To invariantly classify these spacetimes, one must use an alternative tool, the Karlhede algorithm, which utilizes the Cartan equivalence method \cite{Cartan} adapted to the case of Lorentzian manifolds \cite{Karlhede} . 

The first and second stages of the Karlhede algorithm were analyzed for all type N vacuum spacetimes with $\Lambda =0$ by Collins \cite{Collins91}, who produced a theoretical upper bound on the highest order, q, of the covariant derivatives of the curvature tensor required for each of the various subclasses of the type N spacetimes. Interestingly, this gives a hard upper bound for the $VSI$ spacetimes \cite{4DVSI,4DCSI}, as the pp-waves and vacuum Kundt waves make up the entirety of type N $VSI$ spacetimes \cite{Ozvath, 4DVSI, Bicak}. Collins has shown that the pp-waves require $q \leq 4$ while the vacuum Kundt waves need at most $q \leq 6$. Recently it has been shown that the pp-wave upper bound is sharp \cite{Milson}, and that the actual Kundt-wave's upper bound is five \cite{MRVickers}. However, in 2000, Skea produced a {\it non-vacuum} Kundt wave in which $q=5$, suggesting that there might be vacuum solutions for which $q=5$ \cite{skea}.

In this paper, we discuss the upper bound for the vacuum Kundt waves in the Karlhede algorithm or, equivalently,  the highest order, q, covariant derivative of the curvature required to invariantly classify these spaces. We show that the upper bound may be lowered to be less than or equal to four by exploring all possible outcomes of the Karlhede algorithm (see figures  \eqref{0KarlAlg}, \eqref{2KarlAlg} and \eqref{1KarlAlg}). Out of all possible invariant counts only one actual vacuum Kundt wave may be integrated; namely, the class with invariant count $(0,1,3,4,4)$. Due to the exhaustive nature of this analysis we examine the remaining branches of possibilities in the algorithm to produce an invariant classification of all vacuum Kundt waves. This classification is summarized in two tables describing each of the non-diffeomorphic vacuum Kundt wave metrics arising by the choice of the metric function $f(\z,u)$. We present twelve propositions relating the form of the metric function $f(\z,u)$ to the essential Cartan invariants characterizing each spacetime in the first three appendices. The final appendix contains all of the potential subcases of the Karlhede algorithm applied to the vacuum Kundt wave spacetimes prior to examining the geometric structure of these spacetimes.
\end{section}

\begin{section}{Geometric Structure of the Vacuum Kundt Waves}
If we wish to preserve the form of the metric, the permitted coordinate transformations are \cite{4DVSI}:
\beq & \z' = \z +i\tilde{C},~~u'=h(u), v' = \frac{v}{h_{,u}}-(\z + \bz)^2 \frac{h_{,uu}}{2h_{,u}^2},& \label{KWcoordtransf} \\
& H' = \frac{H}{h_{,u}^2}+\frac{(\z+\bz)}{4h_{,u}^4}(-3h^2_{,uu}+2h_{,u}h_{,uuu}),& \nonumber \eeq
\noindent where $\tilde{C}$ is a real constant and $h(u)$ is an arbitrary real function. Taking the metric \eqref{RPFWmetric}, we work with the Newman-Penrose formalism \cite{NP} to calculate the non-vanishing curvature components of the Ricci ($\Phi$) and Weyl $(\Psi$) spinors, respectively: 
\beq \Phi_{22} = x H_{,\z \bz};~~\Psi_4 = 2H_{,\bz \bz}. \nonumber \eeq
\noindent To satisfy the vacuum conditions, $H$ must be harmonic and real-valued; as in the pp-waves, this will be the real part of an analytic function, $2H = f(\z,u) + \bar{f}(\bz,u)$. To examine the geometric structure of these spaces, we work with the class of coframes in which $\Psi_4 =1$. These are found by applying an appropriate spin and boost to the natural metric coframe. 

Without imposing the vacuum condition, the non-vanishing Bianchi identities imply the relationship between the spin-coefficients and the components of the Ricci and Weyl spinors \cite{NP} and their frame derivatives $D,\Delta,\delta,\bar{\delta}$:
\beq &\kappa = \sigma = \rho = 4 \epsilon = 0,~D\Phi_{22} = 0, & \nonumber \\
 & \bar{\delta} \Phi_{22}  = (4\beta - \tau)\Psi_4 + (\bar{\tau} - 2 \bar{\beta}-2\alpha)\Phi_{22}. & \nonumber \eeq
\noindent Imposing the vacuum conditions, we see that $\beta = \frac{\tau}{4}$.  
The non-vanishing Newman-Penrose field equations for the vacuum Kundt waves  are: 
\beq & D\tau =0,~~D\alpha = 0& \label{VacNPbc} \\
&  D\gamma = \frac54 \tau \pi + \tau \alpha + \bar{\pi} \alpha + \frac14 \tau \bar{\tau}, & \label{VacNPd} \\
& D \lambda - \bar{\delta} \pi = \pi^2 + \alpha \pi - \frac14 \bar{\tau} \pi, & \label{VacNPe} \\ 
& D \mu - \delta \pi = \pi \bar{\pi} - \pi \bar{\alpha} + \frac14 \pi \tau, & \label{VacNPf} \\
& D \nu - \Delta \pi = \pi \mu + \bar{\tau} \mu + \bar{\pi} \lambda + \tau \lambda + \gamma \pi - \bar{\gamma}\pi, & \label{VacNPg} \\
& \Delta \lambda - \bar{\delta} \nu = - \mu \lambda - \bar{\mu} \lambda - 3\gamma \lambda + \bar{\gamma} \lambda + 3 \alpha \nu + \pi \nu - \frac34 \bar{\tau} \nu - \Psi_4, & \label{VacNPh} \\
& \delta \alpha - \frac14 \bar{\delta} \tau = \alpha \bar{\alpha} + \frac{1}{16} \tau \bar{\tau} - \frac12 \alpha \tau, & \label{VacNPj} \\
&\delta \lambda - \bar{\delta} \mu = \mu \pi - \bar{\mu} \pi + \mu \alpha + \frac14 \mu \bar{\tau} + \lambda \bar{\alpha} - \frac34 \lambda \tau, & \label{VacNPk} \\
& \delta \nu - \Delta \mu = \mu^2 + \lambda \bar{\lambda} + \gamma \mu + \bar{\gamma} \mu - \bar{\nu} \pi + \frac14 \tau \nu - \bar{\alpha} \nu, &  \label{VacNPl} \\
& \delta \gamma - \frac14 \Delta \tau = \frac12 \tau \gamma - \bar{\alpha} \gamma + \frac54 \mu \tau + \frac14 \tau \bar{\gamma} + \alpha \bar{\lambda}, & \label{VacNPm} \\
& \delta \tau = \frac54 \tau^2 - \tau \bar{\alpha}, & \label{VacNPn} \\
& - \bar{\delta} \tau = -\frac34 \bar{\tau} \tau - \alpha \tau,& \label{VacNPo} \\
& \Delta \alpha - \bar{\delta} \gamma = - \frac54 \tau \lambda + \bar{\gamma} \bar{\alpha} - \bar{\mu} \bar{\alpha} - \frac34 \tau \gamma, \label{VacNPp} \eeq
\noindent while the commutator relations are 
\beq  (\Delta D - D \Delta) f &=& [(\gamma + \bar{\gamma}) D - (\tau + \bar{\pi})\bar{\delta} - (\bar{\tau}+\pi)\delta] f, \nonumber \\
(\delta D - D \delta) f &=& [(\bar{\alpha} + \frac{\tau}{4} - \bar{\pi})D]f,  \nonumber \\
 \delta \Delta - \Delta \delta) f &=& [-\bar{\nu} D + (\frac{3\tau}{4} - \bar{\alpha}) \Delta + \bar{\lambda} \bar{\delta} + (\mu - \gamma + \bar{\gamma})\delta ]f,  \nonumber \\
 (\bar{\delta} \delta - \delta \bar{\delta}) f &=& [(\bar{\mu} + \mu) D - (\bar{\alpha} - \frac{\tau}{4})\bar{\delta} - (\frac{\bar{\tau}}{4} - \alpha) \delta] f. \nonumber \eeq

The benefit of working in the class of coframes for which $\Psi_4 =1$ becomes apparent once one takes frame derivatives of the Weyl tensor, as only spin-coefficients and their derivatives appear as components of the Weyl tensor and its covariant derivatives. To illustrate, the first order derivatives of the Weyl tensor are
\beq & (D \Psi)_{50'} = 4 \alpha,~~ (D \Psi)_{51'} = 4 \gamma,~~ (D \Psi)_{40'} = 0 ,& \nonumber \\
& (D \Psi)_{41'} = \tau ,~~ (D \Psi)_{30'} = 0, ~~ (D \Psi)_{31'} = 0. & \nonumber \eeq

\noindent At first order, one still has 2 degrees of frame freedom, using null rotations with complex parameter $B$, which affects the first order invariant $\gamma$ and leaves $\alpha$ and $\tau$ unchanged: 
\beq \gamma' = \gamma + B \alpha + \frac54 \bar{B} \tau. \label{NullRot} \eeq
\noindent If $|\alpha| \neq \frac54 |\tau|$ it is always possible to set $\gamma' = 0$. However, if equality holds, only one degree of freedom can be fixed, and there are three subcases for the form of $\gamma'$ \cite{Collins91}: 
\begin{itemize} 
 \item $\bar{\alpha} = -\frac54 \tau$: $Im(\gamma') = 0$;
 \item $\bar{\alpha} = \frac54 \tau$: $Re(\gamma') = 0$;
 \item $\bar{\alpha} \neq \pm \frac54 \tau$: $Re(\gamma')$ or $Im(\gamma')$ = 0, but not both.  
\end{itemize}
\noindent Without fixing the frame freedom, the non-zero second order derivatives of the Weyl tensor are: 
\beq (D^2 \Psi)_{50';00'} & = &4D \alpha, \nonumber \\
 (D^2 \Psi )_{50';01'} & = &4( \delta \alpha + 5 \beta \alpha -  \bar{\alpha} \alpha) , \nonumber \\
 (D^2 \Psi )_{50';10'} & = &4( \bar{\delta} \alpha  + 5 \alpha^2), \nonumber \\
 (D^2 \Psi)_{50';11'} & = &4( \Delta \alpha + 5 \gamma \alpha - \bar{\gamma} \alpha +  \bar{\tau} \gamma),   \nonumber \\
 (D^2 \Psi)_{51';00'}  & =& 4 (D \gamma - 5 \pi \beta - \bar{\pi} \alpha),   \nonumber \\
 (D^2 \Psi)_{51';01'} & = &4( \delta \gamma - 5 \mu \beta + 5 \beta \gamma - \bar{\lambda} \alpha +  \bar{\alpha} \gamma),    \nonumber \\
 (D^2 \Psi)_{51';10'} & = &4( \bar{\delta} \gamma - 5 \lambda \beta + 5 \alpha \gamma - \bar{\mu} \alpha +  \bar{\beta} \gamma),   \nonumber \\ 
 (D^2 \Psi)_{51';11'} & = &4( \Delta \gamma - 5 \nu \beta + 5 \gamma^2 - \bar{\nu} \alpha +  \bar{\gamma} \gamma),    \nonumber \\
 (D^2 \Psi)_{40';11'} & = & 4( \tau \alpha + \bar{\tau} \beta),   \nonumber \\
 (D^2 \Psi)_{41';00'} & = &4 D \beta ,  \nonumber \\
 (D^2 \Psi)_{41';01'} & = &4( \delta \beta + 3 \beta^2 + \bar{\alpha} \beta), \nonumber \\
 (D^2 \Psi)_{41';10'} & = &4(\bar{\delta} \beta + 3 \alpha \beta + \bar{ \beta}\beta), \nonumber \\
 (D^2 \Psi)_{41';11'} & = &4(\Delta \beta + 3 \gamma \beta + \tau \gamma  + \bar{\gamma} \beta),  \nonumber \\
 (D^2 \Psi)_{31';11'} & = &8 \tau \beta. \nonumber \eeq

\noindent If $|\alpha| = \frac54 |\tau|$, it is always possible to fix the last parameter of the frame freedom to fix $\Delta \tau$ so that $Re(\Delta \tau) = 0$. Manipulating the spin-coefficients and the remaining degrees of freedom, Collins produced a theoretical upper bound for these spaces \cite{Collins91}, requiring at most six covariant derivatives. This bound was lowered to five covariant derivatives by Machados Ramos and Vickers \cite{MRVickers} using the generalized GHP formalism. In both papers a particular choice of coordinates was avoided so that these bounds were not shown to be sharp.  
\end{section}

\begin{section}{An Alternative Proof That The Upper Bound for the Karlhede Algorithm is Less than Six} \label{Leq6}

The Karlhede algorithm terminates if and only if the dimension of the isotropy group and 
number of functionally independent invariants are unchanged from one iteration to 
the next. Using the invariant count notation, it is possible to map out all possibilities 
for the Karlhede algorithm. The case where the invariant count begins with $(0,0,...)$ is not permitted as the invariant $\tau$ must be non-constant at first order;  if we assume $\tau$ is a constant we find from \eqref{VacNPn} and \eqref{VacNPo} that $\tau =0$, which cannot be true since we are studying the vacuum Kundt waves. With this in mind, it is easily shown that there is only one scenario where $q=6$ at most, (0,1,1,2,3,4,4)\footnote{This notation is adopted in Appendix D to summarize possible states of the Karlhede algorithm compactly.}.

This invariant count would occur for the class of vacuum Kundt waves in which at first order only one functionally independent invariant appears and further that $|\alpha| = \frac{5|\tau|}{4}$. By choosing a particular coordinate system we may produce differential constraints on the metric function $H(\z,\bz,u) = Re(f(\z,u))$ by imposing the vanishing of the wedge products of the differentials of the spin-coefficients of $\alpha$, $\tau$ and their conjugates. As the spins and boosts have been fixed to set $\Psi_4 =1$, and since these two invariants $\alpha$ and $\tau$ are unchanged under the remainder of the isotropy group (null rotations about $\ell$), these are already Cartan invariants. With a little effort and a change of coordinates we intend to prove the following theorem:

\begin{thm} \label{thm:MRVickerBound}
The vacuum Kundt waves require at most $q=5$ iterations of the Karlhede algorithm to completely classify the spacetimes. 
\end{thm}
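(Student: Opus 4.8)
The plan is to prove the theorem by elimination. As established just above the statement, $(0,1,1,2,3,4,4)$ is the \emph{unique} invariant count for which the Karlhede algorithm could possibly run to sixth order, so it suffices to show that no vacuum Kundt wave actually realizes this count. By construction this branch forces exactly one functionally independent invariant at first order, so that $\tau$, $\alpha$, and $\gamma$ (the latter after being fixed by the residual null rotation \eqref{NullRot}) are all functions of a single real invariant, and it additionally forces the equality $|\alpha| = \tfrac54 |\tau|$; otherwise $\gamma$ can be gauged to zero, and since \eqref{VacNPn}--\eqref{VacNPo} already express $\delta\tau$ and $\bar{\delta}\tau$ through $\tau$ and $\alpha$, a direct count of the second-order invariants shows the count cannot follow the pattern $(0,1,1,\ldots)$. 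I would then split along Collins's three subcases for the surviving frame freedom \cite{Collins91}: $\bar{\alpha} = -\tfrac54\tau$, $\bar{\alpha} = \tfrac54\tau$, and $\bar{\alpha} \neq \pm\tfrac54\tau$.

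Next I would fix coordinates explicitly. Starting from the metric \eqref{RPFWmetric} with $2H = f(\z,u)+\bar{f}(\bz,u)$ in the $\Psi_4 = 1$ gauge, one computes $\tau$ and $\alpha$ as explicit differential expressions in $f$ and its $\z$- and $u$-derivatives, as well as the second-order quantities (in particular $\Delta\tau$, whose real part can be normalized using the last parameter of the frame freedom). The hypothesis that all first-order, and then all second-order, Cartan invariants depend on a single invariant is imposed as the vanishing of the wedge products $d\alpha\wedge d\tau$, $d\alpha\wedge d\bar{\tau}$, $d\bar{\alpha}\wedge d\tau$, together with the analogous conditions for the relevant second-order invariants. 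The residual coordinate freedom \eqref{KWcoordtransf} — the reparametrization $u'=h(u)$ and the real shift $\tilde C$ — is used to normalize the resulting constraints.

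These conditions produce an overdetermined PDE system for $f(\z,u)$, which must be integrated simultaneously with the vacuum Newman--Penrose equations \eqref{VacNPbc}--\eqref{VacNPp} and the commutator relations. Carrying out the integration in each of the three subcases, one finds that every admissible $f$ yields a metric whose invariant count does not begin $(0,1,1,\ldots)$: either the second covariant derivative of the Weyl tensor already contributes a new functionally independent invariant, forcing the count to $(0,1,2,\ldots)$ and hence $q\le 5$, or the constrained $f$ collapses into a subclass already classified at lower order. Consequently the branch $(0,1,1,2,3,4,4)$ is empty, which recovers the bound $q\le 5$ of Machado Ramos and Vickers \cite{MRVickers} by a coordinate-based argument.

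The main obstacle is the explicit integration of the functional-dependence constraints coupled with the NP field equations, and — equally — making the case analysis genuinely exhaustive, since the conclusion requires that \emph{every} solution of the order-one constraints be shown to violate the $(0,1,1,\ldots)$ pattern. The delicate bookkeeping lies in propagating the invariant count reliably through the second covariant derivative of the curvature, where the expressions for $(D^2\Psi)$ recorded above must be evaluated on the constrained metric function and tested for new functional independence.
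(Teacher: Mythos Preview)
Your overall strategy---show that the branch $(0,1,1,2,3,4,4)$ is empty---matches the paper's, but the execution you outline is considerably heavier than what the paper actually does, and it misses the key simplification.

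The paper never splits into Collins's three subcases, never integrates the NP system against the $(0,1,1,\ldots)$ constraint, and never inspects second-order invariants. Instead it introduces the adapted coordinate $a=\tfrac14\ln(f_{,\zeta\zeta})$, which makes $\tau$ and $\alpha$ explicit (Proposition~\ref{prop:SpinCoef1}). The condition ``only one functionally independent invariant among $\{\alpha,\bar\alpha,\tau,\bar\tau\}$'' is then just $d\tau\wedge d\bar\tau=0$, which integrates immediately to $\zeta(a,u)=i(C_0a+G(u))+C_1$ and hence to the closed form \eqref{AclassMetric} for $f$ (Lemma~\ref{lem:Aclass}). With $\tau$ and $\alpha$ now written as explicit rational/exponential functions of $a,\bar a$, the equality $|\alpha|=\tfrac54|\tau|$ reduces to a polynomial identity in $a-\bar a$ whose $a$-derivative forces $C_0=0$, contradicting $\zeta_{,a}\neq 0$ (Corollary~\ref{upper bound5}). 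That single algebraic contradiction already kills the $(0,1,1,\ldots)$ branch, because the residual isotropy needed for the second ``1'' requires precisely $|\alpha|=\tfrac54|\tau|$.

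So the comparison is: you propose to impose $|\alpha|=\tfrac54|\tau|$ first and then grind through the three Collins cases and the second-order invariant count; the paper instead imposes the ``one invariant'' condition first, obtains $f$ explicitly, and then checks that $|\alpha|=\tfrac54|\tau|$ is algebraically impossible for that family. Your route would presumably reach the same conclusion, but the paper's ordering replaces an overdetermined PDE integration and a second-derivative bookkeeping exercise with a two-line computation. The adapted coordinate $a$ is the device that makes this shortcut possible, and it is the main idea you are missing.
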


To this end we introduce a new complex coordinate $a = \frac14 ln( f_{,\z \z} )$. Relative to this new coordinate system, $\z = \z(a,u)$ and we find a differential constraint for the metric function $\tilde{f}$, \beq \left( \frac{\tilde{f}_{,a}}{\z_{,a}}\right)_{,a} = e^{4a}\z_{,a}. \label{aKundt} \eeq 
\noindent The metric coframe becomes: 
\beq m &=& \z_{,a } da + \z_{,u} du, \nonumber \\
\ell &=& du, \label{Mecoframe} \\
n &=& dv - \frac{2v}{\z+\bz} ( \z_{,a} da + \z_{,u} du) - \frac{2v}{\z+\bz} \bz_{,\ba} d\ba + \bz_{,u} du) \nonumber \\
& &+ \left(2 Re(\tilde{f}(a,u))(\z+\bz) - \frac{v^2}{(\z+\bz)^2} \right)du. \nonumber \eeq
\noindent In these coordinates the non-zero component of the Weyl tensor is now 
\beq \Psi_4 = 2(\z+\bz)e^{4b}. \nonumber \eeq
\noindent Applying a spin and boost with $p= \frac14 ln(|\bar{\Psi}_4|) = a + \frac14ln(2(\z+\bz))$ to the metric coframe \eqref{Mecoframe}, we produce a new coframe:
\beq m' = e^{p-\bar{p}} m,~~ \ell' = e^{p+\bar{p}} \ell,~~n' = e^{-p-\bar{p}} n. \label{Kacoframe} \eeq
\noindent Relative to this coframe, the non-vanishing Weyl tensor component has been normalized $\Psi'_4 =1$ and the spin-coefficients $\alpha$ and $\tau$ are already Cartan invariants as they are unaffected by the remaining isotropy. 

By direct calculation we produce the following spin-coefficients relative to this coframe:
\begin{prop} \label{prop:SpinCoef1}
The spin-coefficients relative to the class of coframes \eqref{Kacoframe}, in which $\Psi_4 = 1$, may be expressed as 
\beq & \tau = 4\beta = -\bar{\pi} = -\frac{e^{\ba -a}}{\z+\bz}, & \nonumber \\ 
 & \mu = \lambda = 0,& \nonumber \\
& \alpha = \frac{\bar{\tau}}{4}+ \sqrt{\frac{\bar{\tau}}{\tau}} (\bz_{,\ba})^{-1}, & \label{SpinCoef1} \\
&\gamma = -\frac{e^{-a-\ba} |\tau|^{\frac52}}{\sqrt{2}}\left( v + \frac{\bz_{,u}(\bz_{,a})^{-1}}{ |\tau|^{2}} \right), & \nonumber \\
&\nu = e^{-a-3\ba} \left( \int \bz_{,\ba} e^{4\ba} d\ba + f_1 - (f+\bar{f})|\tau|\right).& \nonumber \eeq
\end{prop}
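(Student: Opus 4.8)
The plan is to obtain all of the spin-coefficients by a direct Newman--Penrose computation carried out in two stages: first relative to the natural metric coframe \eqref{Mecoframe} adapted to the Kundt coordinates $(a,\ba,u,v)$, and then transported to the normalized coframe \eqref{Kacoframe} by the spin-boost with parameter $p$. For the first stage I would invert \eqref{Mecoframe} to obtain the frame vectors $(D,\Delta,\delta,\bar\delta)$, take exterior derivatives of $m,\ell,n$, and read the Ricci rotation coefficients off Cartan's first structure equations. The Kundt conditions already force $\kappa=\sigma=\rho=\epsilon=0$, and the explicit structure of $n$ in \eqref{Mecoframe} makes the direct computation return $\mu=\lambda=0$ as well, so the work reduces to pinning down $\tau$, $\beta$, $\pi$, $\alpha$, $\gamma$ and $\nu$. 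Throughout, the coordinate $a$ enters only implicitly through $a=\frac14\ln f_{,\z\z}$, so one must treat $\z=\z(a,u)$, $\bz=\bz(\ba,u)$ and use the constraint \eqref{aKundt} to eliminate second $a$-derivatives of $\tilde f$ whenever they appear.

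For the algebraic spin-coefficients I would argue as follows. The identities $\tau=4\beta=-\bar\pi$ come out of the structure equations together with the vacuum relation $\beta=\tau/4$ noted above; after applying the spin-boost \eqref{Kacoframe} (whose boost part is fixed by $\Psi_4'=1$ and enters through $\tfrac14\ln|\bar\Psi_4|$), the boost weight and phase combine to give $\tau=-e^{\ba-a}/(\z+\bz)$. For $\alpha$ I would use the vacuum field equation \eqref{VacNPo}, which rearranges to $\alpha=\tfrac1\tau\,\bar\delta\tau-\tfrac34\bar\tau$; evaluating $\bar\delta\tau$ with the dual of \eqref{Kacoframe} and simplifying produces the stated $\alpha=\tfrac{\bar\tau}{4}+\sqrt{\bar\tau/\tau}\,(\bz_{,\ba})^{-1}$, the square-root factor being exactly the phase carried by the spin part of $p$. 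Equations \eqref{VacNPn} and \eqref{VacNPj} then serve as consistency checks on this $\alpha$.

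The two remaining coefficients $\gamma$ and $\nu$ are obtained by integrating the radial field equations (those involving $D$). The right-hand side of \eqref{VacNPd} is built only from the $v$-independent quantities $\tau,\pi,\alpha$, so integrating it returns $\gamma$ affine in $v$; matching the $v$-independent integration function against the direct structure-equation computation (equivalently, against \eqref{VacNPm} or \eqref{VacNPp}) fixes it and produces the $\bz_{,u}(\bz_{,a})^{-1}$ term. Substituting $\mu=\lambda=0$, the known $\pi$, and this $\gamma$ into \eqref{VacNPg} reduces it to a first-order linear equation for $\nu$ along $D$; its solution introduces $\int \bz_{,\ba}e^{4\ba}\,d\ba$ together with an integration function $f_1(a,\ba,u)$, and the surviving field equations \eqref{VacNPh} and \eqref{VacNPl} collapse to identities once this $\nu$ is inserted, confirming the formula and showing that $f_1$ is genuinely undetermined at this order.

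The main obstacle is the bookkeeping forced by the implicit coordinate change combined with the non-constant, complex spin-boost: every transformed spin-coefficient picks up $Dp$, $\delta p$, $\bar\delta p$ and $\Delta p$ terms, each of which must be re-expanded by the chain rule $\z=\z(a,u)$ and then collapsed with \eqref{aKundt}. Getting $\alpha$, $\gamma$ and $\nu$ into the compact closed forms quoted, rather than a sprawl of $\z$- and $\bz$-derivatives, is precisely where \eqref{aKundt} must be used most aggressively; a secondary, purely mechanical check is to verify that the integration functions appearing in $\gamma$ and $\nu$ are simultaneously consistent with all of \eqref{VacNPbc}--\eqref{VacNPp}. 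Neither difficulty is conceptual, but together they make the calculation long.
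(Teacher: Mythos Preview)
Your proposal is correct and is essentially the paper's approach: the paper itself proves this proposition only by the phrase ``By direct calculation we produce the following spin-coefficients relative to this coframe,'' so your two-stage Newman--Penrose computation (structure equations in the metric coframe \eqref{Mecoframe}, then transport by the spin-boost \eqref{Kacoframe}, with the NP field equations \eqref{VacNPbc}--\eqref{VacNPp} used as shortcuts and consistency checks) is exactly the kind of direct calculation intended.
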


\noindent Before we fix any more frame freedom to set all or a part of $\gamma$ to zero, we may determine the explicit form of the metric function $f(\z,u)$ for the class of vacuum Kundt waves where {\it only one} functionally independent invariant arises in the set $\{ \alpha, \tau, \bar{\alpha}, \bar{\tau} \}$:
\begin{lem} \label{lem:Aclass}
Those spacetimes in which the spin-coefficients $\alpha, \bar{\alpha}$, $\tau$ and $\bar{\tau}$ are functionally dependent on one invariant will have the following form for the metric function $f(\z,u)$:
\beq & f(\z,u) = \frac{C_0^2}{16} e^{\frac{-4i(\z-iG(u)+C_1)}{C_0}} + f_1(u) \z + f_2(u). & \label{AclassMetric} \eeq
\noindent Relative to the coordinates $a=\frac14 ln(f_{,\z \z})$, the Cartan invariants $\alpha$ and $\tau$ are now 
\beq & \tau = \frac{-e^{\ba-a}}{iC_0(a-\ba)+2C_1},~~\alpha = \frac{\bar{\tau}}{4} + \frac{i}{C_0}\sqrt{\frac{\bar{\tau}}{\tau}}. & \label{AclassTnA} \eeq
\end{lem}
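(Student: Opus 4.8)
The plan is to turn the hypothesis---that $\{\alpha,\bar\alpha,\tau,\bar\tau\}$ has functional rank one---into a differential constraint on the holomorphic coordinate change $\z=\z(a,u)$, integrate that constraint, and then combine it with the defining relation $f_{,\z\z}=e^{4a}$ (equivalently, the constraint \eqref{aKundt}). Since the spin and boost have been fixed so that $\Psi_4=1$ and $\alpha,\tau$ are untouched by the residual null rotations, ``only one functionally independent invariant in $\{\alpha,\tau,\bar\alpha,\bar\tau\}$'' is equivalent to the pair of conditions $d\tau\wedge d\bar\tau=0$ and $d\alpha\wedge d\tau=0$. A constant $\tau$ is impossible---by \eqref{VacNPn}--\eqref{VacNPo} it would force $\tau=0$---so $d\tau\neq0$ and we are genuinely in the rank-one, not rank-zero, situation.

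Next I would extract the constraint from Proposition \ref{prop:SpinCoef1}. There $\tau\bar\tau=(\z+\bz)^{-2}$ and $\tau/\bar\tau=e^{2(\ba-a)}$, so $|\tau|$ is a function of the real quantity $\z+\bz=2\,\mathrm{Re}\,\z$ while $d(\arg\tau)=-2\,d(\mathrm{Im}\,a)\neq0$. Hence, on the three-dimensional base with coordinates $(\mathrm{Re}\,a,\mathrm{Im}\,a,u)$, the condition $d\tau\wedge d\bar\tau=0$ reduces to $d(\mathrm{Re}\,\z)\wedge d(\mathrm{Im}\,a)=0$, i.e.\ $\mathrm{Re}\,\z$ depends on $\mathrm{Im}\,a$ alone (in particular not on $\mathrm{Re}\,a$ or $u$). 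But for each fixed $u$ the map $a\mapsto\z$ is holomorphic (because $a=\tfrac14\ln f_{,\z\z}$ with $f$ analytic), and a holomorphic function whose real part is a function of $\mathrm{Im}\,a$ only must be affine: harmonicity of $\mathrm{Re}\,\z$ forces that function to be linear, and then the harmonic conjugate is pinned down up to an additive imaginary constant, which is allowed to depend on $u$. This yields $\z=iC_0\,a+C_1+iG(u)$ with $C_0\neq0,\ C_1$ real constants and $G$ a real function, whereupon $\bz_{,\ba}=-iC_0$ is constant and the formula for $\alpha$ in Proposition \ref{prop:SpinCoef1} exhibits $\alpha$ as a function of $\tau$, so the second condition $d\alpha\wedge d\tau=0$ holds automatically.

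Finally I would integrate. Inverting $\z=iC_0 a+C_1+iG(u)$ makes $a$ an affine function of $\z$, so $f_{,\z\z}=e^{4a}$ equals a $u$-dependent constant times $e^{-4i\z/C_0}$; two integrations in $\z$ produce exactly \eqref{AclassMetric}, the two $\z$-constants of integration being the free functions $f_1(u),f_2(u)$ and the prefactor $C_0^2/16$ emerging from the double integration (with the branch of the logarithm in $a$ and the residual coordinate freedom \eqref{KWcoordtransf} used to normalise $C_0,C_1,G$). Substituting $\z+\bz=iC_0(a-\ba)+2C_1$ and $\bz_{,\ba}=-iC_0$ back into Proposition \ref{prop:SpinCoef1} then gives \eqref{AclassTnA}; the converse---that \eqref{AclassMetric} does produce only one functionally independent invariant---is immediate, since these formulas make $\tau$ and $\alpha$ functions of $\mathrm{Im}\,a$ alone.

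I expect the main obstacle to be the middle paragraph: making rigorous the passage from $d\tau\wedge d\bar\tau=0$ to ``$\mathrm{Re}\,\z$ is a function of $\mathrm{Im}\,a$ alone'' on the full three-dimensional base, disposing of the degenerate sub-branches (for instance $\z+\bz$ locally constant, or $f_{,\z\z\z}\equiv0$, where the coordinate $a$ itself degenerates and must be handled separately), and then running the Cauchy--Riemann step so as to land precisely on the affine form with exactly the constant/function freedom claimed. The rest is bookkeeping.
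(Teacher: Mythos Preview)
Your proposal is correct and follows essentially the same route as the paper. The paper computes $d\tau\wedge d\bar\tau$ directly in the $(a,\ba,u)$ coordinates, reads off $\z_{,a}=-\bz_{,\ba}$ and $\z_{,u}=-\bz_{,u}$, and writes down $\z=i(C_0 a+G(u))+C_1$; your modulus/argument decomposition together with the Cauchy--Riemann step is the same computation in real notation, and in fact makes explicit the ``purely imaginary holomorphic function is constant'' step that the paper leaves tacit. The remaining integration and the verification of \eqref{AclassTnA} are identical.
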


\begin{proof}
Taking $\tau$ in \eqref{SpinCoef1}, we calculate the double wedge product of $d\tau$ and $d\bar{\tau}$ to get,
\beq & d \tau \wedge d \bar{\tau} = \frac{2}{(\z+\bz)^3}\left[(\z_{,a} + \bz_{,\ba}) da \wedge d\ba + (\z_{,u} + \bz_{,u}) da \wedge du + (\z_{,u} + \bz_{,u}) d\ba \wedge du \right]. & \nonumber \eeq
\noindent Requiring that this must vanish gives a set of equations: $\z_{,a} = -\bz_{,\ba},~~ \z_{,u} = -\bz_{,u}$.
\noindent Thus $\z(a,u)$ is of the form 
\beq & \z(a,u) = i(C_0a + G(u))+C_1. & \label{AclassZ} \eeq
\noindent Plugging this into the expressions for $\tau$ and $\alpha$ in \eqref{SpinCoef1} we recover \eqref{AclassTnA}, and then solving for $a$ and noting that $e^{4a} = f_{,\z \z}$ we may integrate twice to recover the function in the usual coordinate system. 
\end{proof}

The vacuum Kundt wave spacetimes with this property will potentially contain at most two functionally independent invariants at first order: $\tau$ and $\gamma$ which will simplify the search for those vacuum Kundt waves with only one functionally independent invariant at first order. Furthermore, as the necessary conditions for fixing the remaining isotropy is dependent on the Cartan invariants $\alpha$ and $\tau$ we may use the explicit form of these invariants from lemma \ref{lem:Aclass} to show all isotropy may be fixed at first order  (i.e., $|\alpha| \neq \frac54 |\tau|$) and that no vacuum Kundt wave requires $q=6$ in the algorithm.

\begin{cor} \label{upper bound5} 
The invariant count $(0,1,1,2,3,4,4)$ cannot occur in the Karlhede classification of the vacuum Kundt waves. 
\end{cor}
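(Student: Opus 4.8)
The plan is to derive a contradiction from the assumption that some vacuum Kundt wave has invariant count $(0,1,1,2,3,4,4)$, using the explicit data supplied by Lemma~\ref{lem:Aclass}. First I would unpack the combinatorial content of this count. The entry ``$1$'' at first order says that exactly one functionally independent invariant appears there; since the non-vanishing components of $D\Psi$ are constant multiples of $\alpha$, $\gamma$ and $\tau$, each of $\alpha,\bar\alpha,\tau,\bar\tau$ is then functionally dependent on that single invariant, so the hypothesis of Lemma~\ref{lem:Aclass} is met and the metric function has the form \eqref{AclassMetric}, with $\alpha$ and $\tau$ given by \eqref{AclassTnA}. Next, because the count repeats, $(\dots,1,1,\dots)$, between orders one and two, the number of functionally independent invariants does not change there; by the termination criterion for the Karlhede algorithm, the only way the algorithm can proceed beyond second order is for the residual isotropy dimension to keep dropping between orders one and two. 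Since $\tau\neq 0$ for a vacuum Kundt wave, the null-rotation freedom \eqref{NullRot} always fixes at least one real parameter at first order, and by the trichotomy following \eqref{NullRot} it fails to fix \emph{both} precisely when $|\alpha|=\tfrac54|\tau|$. Hence the assumed count forces $|\alpha|=\tfrac54|\tau|$ to hold on an open set.

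Second, I would show that \eqref{AclassTnA} is incompatible with $|\alpha|=\tfrac54|\tau|$. Writing $\tau=|\tau|\,e^{i\theta}$, so that $\sqrt{\bar\tau/\tau}=e^{-i\theta}$ and (as in the proof of Lemma~\ref{lem:Aclass}) $C_0$ is real, \eqref{AclassTnA} gives $\alpha=e^{-i\theta}\bigl(\tfrac{|\tau|}{4}+\tfrac{i}{C_0}\bigr)$, whence $|\alpha|^2=\tfrac{1}{16}|\tau|^2+C_0^{-2}$ and
\[
|\alpha|^2-\tfrac{25}{16}|\tau|^2 \;=\; C_0^{-2}-\tfrac32|\tau|^2 .
\]
Thus $|\alpha|=\tfrac54|\tau|$ on an open set would force $|\tau|^2$ to be the constant $\tfrac{2}{3C_0^{2}}$. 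But from \eqref{AclassTnA}, in the coordinate $a=\tfrac14\ln f_{,\z\z}$ one has $|\tau|=\bigl(2\,\bigl|\,iC_0(a-\ba)+2C_1\,\bigr|\bigr)^{-1}$, which varies with $\mathrm{Im}\,a$ (note $C_0\neq 0$, or else $\alpha$ in \eqref{AclassTnA} is undefined). So $|\alpha|=\tfrac54|\tau|$ can hold at most on a hypersurface, and on a dense open set $|\alpha|\neq\tfrac54|\tau|$.

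Finally, I would close the loop: on that dense open set the remaining null rotation \eqref{NullRot} can be used to set $\gamma=0$ at first order, so the isotropy group is already trivial at first order; together with the functional count being $1$ at both the first and second orders (as the assumed pattern dictates), the termination criterion then forces the algorithm to stop no later than the second iteration, i.e., $q\le 2$. This contradicts the assumed count $(0,1,1,2,3,4,4)$, which establishes the corollary; combined with Lemma~\ref{lem:Aclass} it also finishes the proof of Theorem~\ref{thm:MRVickerBound}. The step that needs the most care is the first one---translating ``count $(0,1,1,2,3,4,4)$'' into the geometric requirement $|\alpha|=\tfrac54|\tau|$, i.e., checking carefully that the only way the algorithm can run past second order while the functional count is stationary is for the isotropy to still be collapsing at first order; once \eqref{AclassTnA} is in hand, the rest is a short computation.
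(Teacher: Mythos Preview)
Your proposal is correct and follows essentially the same route as the paper: reduce to the metrics of Lemma~\ref{lem:Aclass}, observe that the repeated ``$1$'' in the count forces $|\alpha|=\tfrac54|\tau|$, and then use \eqref{AclassTnA} to compute $|\alpha|^2-\tfrac{25}{16}|\tau|^2=C_0^{-2}-\tfrac32|\tau|^2$, which cannot vanish identically since $|\tau|^{-1}=iC_0(a-\ba)+2C_1$ is non-constant (the paper phrases this last step as differentiating in $a$ to obtain $C_0=0$, but the content is the same). One small slip: your displayed formula for $|\tau|$ has a spurious factor of $2$; it should read $|\tau|=|iC_0(a-\ba)+2C_1|^{-1}$, though this does not affect the argument.
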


\begin{proof}
From lemma \ref{lem:Aclass} we calculate the equality $|\alpha| = \frac54 |\tau|$ using equation \eqref{AclassTnA}. We assume the equality holds and multiply both sides by $|\alpha|$, so that $|\alpha|^2 = \frac{25}{16} |\tau|^2$. Expanding this we have:
\beq \frac{25}{16} |\tau|^2 = \frac{1}{16} |\tau|^2 + \frac{1}{C_0^2}. \nonumber \eeq
\noindent Using the $a, \ba$ coordinates and simplifying we find the following
\beq \frac32 C_0^2 = (iC_0(a-\ba)+2C_1)^2. \nonumber \eeq
\noindent Differentiating with respect to $a$ or $\ba$ implies that $C_0 = 0$ which cannot happen as $\z_{,a}= C_0$ must be non-zero. This is a contradiction and so $|\alpha| \neq \frac54|\tau|$ .
\end{proof}

As this is the only permitted state in the Karlhede algorithm for the vacuum Kundt waves with $q=6$, and this case cannot occur, we conclude that the upper-bound for the vacuum Kundt waves may be lowered to less than or equal to five. 
\end{section}
\begin{section}{Reducing the Upper Bound to Less than Five} \label{Leq5}

The goal of this section is to provide the necessary lemmas to prove the following theorem: 

\begin{thm} \label{thm:4itis}
The vacuum Kundt wave spacetimes require, at most, four derivatives (i.e., $q=4$) to classify these spaces using the Karlhede algorithm.
\end{thm}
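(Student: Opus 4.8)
The plan is to combine the bound $q\le5$ of Theorem~\ref{thm:MRVickerBound} with an exhaustive inspection of the branches of the Karlhede algorithm, and to eliminate every branch in which the algorithm would first stabilise only at the fifth iteration. A branch is fixed by the invariant count $(t_0,t_1,\dots)$ together with the isotropy dimensions $\dim H_k$: the $t_k$ are nondecreasing and bounded above by $4$, the $\dim H_k$ are nonincreasing with $\dim H_0=2$, and for vacuum Kundt waves one has in addition $t_0=0$, $t_1\ge1$ (since $\tau$ cannot be constant, by \eqref{VacNPn}--\eqref{VacNPo}) and $\dim H_1\le1$, with $\dim H_1=1$ exactly when $|\alpha|=\tfrac54|\tau|$. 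Running these constraints through the branch diagrams of Appendix~D leaves only finitely many profiles for which $q=5$: the profile $(0,1,2,3,4,4)$, in which all isotropy is already fixed at first order, together with a short list of profiles beginning $(0,2,\dots)$ or $(0,3,\dots)$ in which a one-parameter isotropy persists past first order (or is broken only by second- or third-order invariants). It therefore suffices to show that no vacuum Kundt wave realises any of these profiles.

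For $(0,1,2,3,4,4)$ I would argue via Corollary~\ref{upper bound5} and Lemma~\ref{lem:Aclass}. A vacuum Kundt wave with $t_1=1$ has $\alpha,\bar\alpha,\tau,\bar\tau$ functionally dependent on a single invariant, hence $|\alpha|\ne\tfrac54|\tau|$; the residual null-rotation freedom can then be spent to set $\gamma=0$, and the metric function is the explicit form \eqref{AclassMetric} with $\alpha,\tau$ as in \eqref{AclassTnA}. One computes the second-order Cartan invariants — the components $(D^2\Psi)_{\cdots}$ above, evaluated in the $\gamma=0$ frame, after eliminating $D\alpha,D\tau,\delta\alpha,\delta\tau,\bar\delta\tau$ through the vacuum NP equations \eqref{VacNPbc},\eqref{VacNPj},\eqref{VacNPn},\eqref{VacNPo} — for the two-parameter family \eqref{AclassMetric}, and checks directly that $t_2\ne2$: either the new invariants are functionally dependent on $\tau$, so $t_2=t_1=1$ and the algorithm halts at $q=2$, or at least two functionally independent new invariants appear, so $t_2\ge3$; the borderline value $t_2=2$ never occurs, even on the degenerate subfamilies obtained for special $C_0$, $C_1$, $f_1(u)$, $f_2(u)$, $G(u)$. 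This excludes $(0,1,2,3,4,4)$, so in the $(0,1,\dots)$ branch the deepest profile is $(0,1,3,4,4)$, with $q\le4$.

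For the profiles beginning $(0,2,\dots)$ or $(0,3,\dots)$ with persistent (or late-breaking) isotropy I would follow the method of Section~\ref{Leq6}. Such a profile forces the identity $|\alpha|=\tfrac54|\tau|$, and the condition that exactly $t_1$ first-order invariants be functionally independent is expressed as the vanishing of the relevant wedge products of $d\alpha,d\tau,d\gamma$ and their conjugates. Feeding these, together with the vacuum NP equations and Proposition~\ref{prop:SpinCoef1}, into the constraint \eqref{aKundt} on $\tilde f$ produces differential equations for $\zeta(a,u)$ that integrate to an explicit metric function; one then verifies that on the resulting metric the residual one-parameter isotropy is in fact broken by the second- (at latest third-) order invariants, contradicting the assumed persistence. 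Hence every such branch collapses to one terminating by the fourth iteration.

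The main obstacle is the verification embedded in the last two paragraphs: one must carry out the explicit integrations and the order-$2$ (and, for the persistent-isotropy branches, order-$3$) Cartan-invariant computations and confirm, branch by branch, that the functional-dependence and isotropy data defining a $q=5$ profile are mutually inconsistent. Each individual calculation is routine in the coordinate $a=\tfrac14\ln f_{,\zeta\zeta}$, but the proliferation of sub-cases — in particular the special values of $C_0$, $C_1$, $f_1(u)$, $f_2(u)$, $G(u)$ at which the second-order invariants degenerate — is what makes the argument long. Once all $q=5$ profiles have been excluded, Theorem~\ref{thm:4itis} follows, sharpness being supplied by the $(0,1,3,4,4)$ class constructed in the sequel.
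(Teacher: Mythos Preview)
Your overall strategy---enumerate the $q=5$ invariant-count profiles and rule each out---matches the paper's. But your execution of the $(0,1,2,3,4,4)$ branch contains a concrete error, and your enumeration of the remaining branches is both vague and overbroad.

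\textbf{The $(0,1,2,3,4,4)$ branch.} You propose to exclude it by showing that, for the family \eqref{AclassMetric}, the value $t_2=2$ never occurs: either $t_2=1$ or $t_2\ge3$. This is false. The paper's Lemma~\ref{lem:AclassInvCount} establishes the opposite: metrics in \eqref{AclassMetric} with count beginning $(0,1,2,\dots)$ do exist, and for them the algorithm terminates at $(0,1,2,2)$. Concretely, the second-order invariant $\xi$ of \eqref{xi} depends on $v$ and is functionally independent of $\tau$, while the remaining second-order data $\mu',\lambda',\tilde\nu$ can be arranged (by the coordinate freedom \eqref{KWcoordtransf} and suitable $f_1,f_2$) to be functions of $\tau$ and $\xi$ alone; Proposition~\ref{prop:0122} exhibits an explicit $(0,1,2,2)$ metric. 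The correct mechanism is therefore not ``$t_2\ne2$'' but ``$t_2=2\Rightarrow t_3=2$'': one must push to the \emph{third}-order invariants and show that the frame derivatives of $\xi$ produce nothing new. That is precisely the content of Lemma~\ref{lem:AclassInvCount}, and it is a step your proposal omits.

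\textbf{The remaining branches.} By Lemma~\ref{lem:CaseCounting} the only other $q=5$ profile is $(0,2,2,3,4,4)$; there are no $(0,3,\dots)$ profiles with $q=5$, so that part of your list is superfluous. For $(0,2,2,3,4,4)$ one needs both $|\alpha|=\tfrac54|\tau|$ (so that one-parameter isotropy persists past first order) and $t_1=2$. The paper (Lemma~\ref{lem:BclassInvCount}, using Lemma~\ref{lem:Bclass}) does not argue, as you suggest, by showing the isotropy is eventually broken; rather it shows that the two first-order constraints are already mutually inconsistent: the wedge-product conditions for $t_1=2$ together with $\gamma'$ force $\zeta_{,aa}=\zeta_{,au}=0$, hence the metric function \eqref{BBmetric}, and then $|\alpha|^2=\tfrac{25}{16}|\tau|^2$ becomes a polynomial identity in $|\tau|^{-1}$ whose $a$-derivative gives $\zeta_{,a}=0$, a contradiction. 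Your proposed route (integrate, then check isotropy breaks at higher order) might in principle succeed, but it is not the obstruction you would actually encounter, and you have not carried it out.
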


\noindent To study the sharpness of the upper bound, we examine the possible iteration scheme for the Karlhede algorithm applied to the vacuum Kundt waves as tree diagrams. This may be done exhaustively for the cases where there are at least one invariant at the first iteration of the algorithm. 
\begin{lem} \label{lem:CaseCounting}
The vacuum Kundt wave spacetimes for which the Karlhede algorithm requires five iterations have invariant counts \beq & (0,1,2,3,4,4),~~~and~~~ (0,2,2,3,4,4) &. \nonumber \eeq
\end{lem}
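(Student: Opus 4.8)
The plan is to exploit the recursive structure of the Karlhede algorithm together with the termination criterion already recalled at the start of Section~\ref{Leq6}: the algorithm stops at order $q$ precisely when both the isotropy dimension and the functionally-independent-invariant count stabilize from order $q$ to order $q+1$. Since Corollary~\ref{upper bound5} has already eliminated the unique $q=6$ branch $(0,1,1,2,3,4,4)$, any surviving $q=5$ invariant count must be a length-six string $(t_0,t_1,t_2,t_3,t_4,t_5)$ with $t_4=t_5$, $t_0=0$ (by the observation that $(0,0,\dots)$ is forbidden since $\tau$ cannot be constant), and with the string strictly increasing up to its stabilization at $t_4$. So first I would enumerate the admissible monotone strings subject to the constraint that each $t_{i+1}-t_i$ is bounded by the number of new independent invariants that can possibly appear at that order.

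The key arithmetic input is the bound on how many functionally independent invariants can arise at each iteration. At first order the independent invariants come only from $\{\alpha,\tau,\bar\alpha,\bar\tau\}$ (the other first-order Weyl-derivative components $\gamma$ being removable or constrained by the isotropy analysis of Section~2), so $t_1\in\{1,2\}$; the case $t_1=1$ is exactly the ``A-class'' governed by Lemma~\ref{lem:Aclass}, and the case $t_1=2$ is the generic first-order branch. Because the transverse space is two-dimensional over $\mathbb{C}$ (coordinates $a,\bar a$) plus the retarded coordinate $u$, the total number of functionally independent invariants on the relevant slice is at most three, so in fact $t_i\le 3$ for all $i$ until one allows $v$-dependence, after which the ceiling rises to $4$; this is why the stabilization value can be $t_4=4$ but not higher, and it is the reason every surviving string must end $(\dots,4,4)$. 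Running the monotone-string enumeration with $t_0=0$, $t_i\le 3$ for $i\le 3$, $t_4=t_5=4$, strict increase before stabilization, and the already-excluded $(0,1,1,2,3,4,4)$ removed, leaves exactly the two strings $(0,1,2,3,4,4)$ and $(0,2,2,3,4,4)$, together with shorter strings (e.g. $(0,1,3,4,4)$, $(0,2,3,4,4)$, $(0,3,4,4)$) that terminate at $q\le 4$ and hence are not $q=5$ cases. This combinatorial step is essentially bookkeeping once the per-step increment bounds are justified.

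The substantive content — and the main obstacle — is ruling out the other numerically-consistent length-six strings, namely those with $t_1=1$ and $t_2\ge 2$ that are not already the A-class $q=6$ string, and those with an early jump that would give something like $(0,2,3,4,4,4)$ masquerading as length six. For the $t_1=1$ branch I would invoke Lemma~\ref{lem:Aclass}: once the metric function is pinned to the explicit exponential form \eqref{AclassMetric} with the invariants \eqref{AclassTnA}, one can compute $\alpha$ and $\tau$ and their frame derivatives directly and check, order by order, exactly when new independent invariants appear, showing the count can only follow $(0,1,\dots)$ with the A-class-specific jumps and in particular cannot realize $(0,1,2,3,4,4)$ with the required strict monotonicity unless the isotropy behaves as in the already-eliminated string. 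For the $t_1=2$ branch the analogous but coordinate-level computation using Proposition~\ref{prop:SpinCoef1} shows the only consistent continuation that does not terminate earlier is $(0,2,2,3,4,4)$. The delicate point throughout is coordinating the isotropy-dimension sequence with the invariant-count sequence: one must track whether the remaining one real parameter of null-rotation freedom is fixed at first or second order (the three subcases of \eqref{NullRot} listed in Section~2), since an isotropy that persists one step longer shifts the whole termination pattern. I expect that careful case division on those subcases, rather than any single hard computation, is where the real work lies; the tree diagrams referenced in the introduction (figures \eqref{0KarlAlg}, \eqref{2KarlAlg}, \eqref{1KarlAlg}) are the natural organizing device, and the lemma amounts to reading off their length-six leaves.
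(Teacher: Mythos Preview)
Your overall plan—enumerating the length-six leaves of the invariant-count trees—is exactly the paper's approach (its proof is one line referring to Appendix~D). But your execution has a genuine gap that produces an internal contradiction: the criterion ``strict increase before stabilization'' immediately \emph{excludes} $(0,2,2,3,4,4)$, since $t_1=t_2=2$, yet you then list it as one of the two survivors. The Karlhede stopping rule is that the algorithm continues at step $i$ if \emph{either} the isotropy dimension drops \emph{or} the invariant count rises; it halts only when both stabilize. The string $(0,2,2,3,4,4)$ is a $q=5$ candidate precisely because its isotropy sequence is $(2,1,0,0,0,0)$: between steps $1$ and $2$ the invariant count is flat but the residual null-rotation freedom is eliminated, so the algorithm does not terminate there. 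Conversely, isotropy tracking is also what kills the other length-six strings with a repeat: $(0,1,1,t_3,t_4,t_4)$ would be $q=5$ candidates under isotropy $(2,1,0,\dots)$, and they are excluded not by monotonicity but by the contrapositive of Corollary~\ref{upper bound5}, which forces $t_1\ge 2$ whenever $|\alpha|=\tfrac{5}{4}|\tau|$. You flag the isotropy issue as a ``delicate point'' at the end, but it is the organizing principle of the entire enumeration; once you split on the two isotropy sequences $(2,0,\dots)$ and $(2,1,0,\dots)$, each yields exactly one $q=5$ string by elementary counting.

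Two smaller corrections. The bound $t_1\in\{1,2\}$ is unjustified: when $|\alpha|=\tfrac{5}{4}|\tau|$ the residual $\gamma'$ is a genuine first-order invariant depending on $v$, and $\alpha,\tau,\bar\alpha,\bar\tau$ depend on the three real coordinates $a,\bar a,u$, so in fact $t_1$ can range from $1$ to $4$ (this happens not to affect the $q=5$ list, since $t_1\ge 3$ leaves no room for a length-six string, but the reasoning should be fixed). And your third paragraph misreads the lemma: it asserts only that these two are the \emph{potential} $q=5$ counts, not that they are realized; the order-by-order geometric computation you propose there is the content of the subsequent Lemmas~\ref{lem:AclassInvCount} and~\ref{lem:BclassInvCount}, which then show neither count actually occurs.
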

\begin{proof}
 The trees for the various possibilities are included in Appendix D.
\end{proof}
\noindent To prove theorem \ref{thm:4itis} we must examine the constraints on the vacuum Kundt waves to produce the invariant counts in lemma \ref{lem:CaseCounting}. To do so we break up the analysis into two subsections to examine the distinct subclasses of vacuum Kundt waves with either one or two functionally independent invariants appearing at first order. 
\begin{subsection}{Vacuum Kundt waves with $(0,1,2,..)$} \label{Aclassq4upb}
 Applying the results of lemma \ref{lem:Aclass} and corollary \ref{upper bound5}, we are able to say something about the upper bound in the first case, as the invariant coframe is produced from \eqref{Kacoframe} by making a null rotation \eqref{NullRot} to set $\gamma'=0$. We must determine the form of the parameter $B$ for the null rotation taking the coframe \eqref{Kacoframe} to the invariant coframe required for the Karlhede algorithm:
\beq & \ell' = \ell,~~n' = n+\bar{B} m+B \bar{m} + |B|^2 \ell,~~m'=m+B\ell. & \label{Icoframe} \eeq
\noindent To achieve this, we equate \eqref{NullRot} to zero and solve for $B$, 
\beq & B = - \sqrt{2} |\tau|^{\frac52} e^{-a-\ba} \sqrt{\frac{\tau}{\bar{\tau}}} \left( \frac{C_0^2 |\tau| - iC_0}{3C_0^2|\tau|^2 - 2} \right) \left(v+\frac{G_{,u}}{C_0|\tau|^2} \right). & \label{NRotB} \eeq

Using the dual of the invariant coframe, $\{ \delta', \bar{\delta}', \Delta', D'\}$,  we may compute the second order Cartan invariants as the frame derivatives of the first order Cartan invariants along with the following transformed spin-coefficients:
\beq \begin{aligned} \pi' &= \pi +D\bar{B},  \\ 
\lambda' &= \frac{\bar{B} \bar{\tau}}{2} + \sqrt{\frac{\bar{\tau}}{\tau}} \frac{2\bar{B}}{\bar{\z}_{,\ba}} + \bar{B} \pi+ \bar{B} D\bar{B} + \bar{\delta} \bar{B}, \\ 
\mu' &= \frac{\bar{B} \tau}{2} + B \pi +  B D\bar{B} + \delta \bar{B}, \\
\nu' &= \nu + 2\bar{B} \gamma + \frac32 \bar{B}^2 \tau + B\bar{B}(\pi + 2\alpha) + \Delta \bar{B} + \bar{B} \delta \bar{B} + B \bar{\delta} B + B \bar{B} D\bar{B}. \label{InvSpinC} \end{aligned} \eeq 
\noindent These remaining invariants are expressed in terms of the coframe \eqref{Kacoframe}, the original spin-coefficients \eqref{SpinCoef1}, and the frame derivatives of $B$ relative to the original coframe \eqref{Kacoframe} with $\Psi_4 =1$:
\beq D &=& \sqrt{\frac{2}{|\tau|}} e^{a+\ba} \frac{\partial}{\partial_v}, \nonumber \\
\Delta &=&  \sqrt{\frac{|\tau|}{2}}e^{-a-\ba}\left( \frac{\partial}{\partial_u} - \left( \frac{2(f+\bar{f})}{|\tau|}-v^2|\tau|^{2} \right) \frac{\partial}{\partial_v} - \frac{\z_{,u}}{\z_{,a}} \frac{\partial}{\partial_a} - \frac{\bz_{,u}}{\bz_{,\ba}} \frac{\partial}{\partial_{\ba}} \right), \label{Kaframe} \\
\delta &=& \frac{e^{a-\ba}}{\bz_{,\ba}} \frac{\partial}{\partial_{\ba}} - 2v \bar{\tau} \frac{\partial}{\partial_v}. \nonumber \eeq

Noting that $\bar{\pi} = -\tau$ in \eqref{SpinCoef1} and subtracting $-\tau$ from $\bar{\pi}'$, it is clear that $DB$ is an invariant; a quick calculation confirms that it is functionally dependent on $\tau$ and its conjugate.
\noindent We now examine the second order invariant arising from the frame derivative of $|\tau|^{-1}$;  
\noindent removing all terms that are functionally dependent on $\tau$ leaves the helpful invariant: 
\beq \xi = e^{-a-\ba}\left( v + \frac{G_{,u}}{C_0|\tau|^2} \right) \label{xi}. \eeq

As $|\alpha| \neq \frac54 |\tau|$, the remaining invariants at second order may be simplified to the spin-coefficients  $\mu', \lambda'$ and $\nu'$. These spin-coefficients involve $B$ and the remaining frame derivatives of this function:
\beq \begin{aligned} &  \bar{\delta} B =  \tau \sqrt{\frac{|\tau|}{2}} \left[\frac12 + B_0\right] \xi DB, ~~  \delta B =  \bar{\tau} \sqrt{\frac{|\tau|}{2}}  \left[\frac12 + B_0 - \frac{2}{iC_0|\tau|}   \right] \xi DB, &  \label{Case1Bderivatives} \\ 
& \Delta B =  e^{-2a-2\ba} \left[ \frac{|\tau| G_{,u}}{C_0}e^{a+\ba} \xi  - (f+\bar{f}) + \frac{v^2 |\tau|^3}{2}+\frac{G_{,uu}}{2C_0|\tau|}  \right]DB &, \end{aligned} \eeq
\noindent where $B_0$ is the following complex rational function of $\tau$, 
\beq &B_0 =  \frac{2}{iC_0 |\tau|} + \frac{C_0^2|\tau|}{C_0^2|\tau|-iC_0} - \frac{6C_0^2|\tau|^2}{3C_0^2|\tau|^2-2}   . & \nonumber \eeq 
\noindent Combining these functions, it is clear that both $\mu'$ and $\lambda'$ are expressed entirely in terms of $\tau$ and $\xi$. 

At this point we are able to prove that at second order, at least two functionally independent invariants are produced if we wish to produce a vacuum Kundt wave with $q \geq 4$ in the algorithm. 
\begin{lem} \label{lem:AclassInvCount}
All vacuum Kundt waves with the metric function $f(\z,u)$ of the form \eqref{AclassMetric} and an invariant count starting with $(0,1,2,...)$ in the Karlhede algorithm must end at third order; i.e., with an invariant count $(0,1,2,2)$.  
\end{lem}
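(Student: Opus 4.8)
The plan is to show that for a metric of the form \eqref{AclassMetric} whose Karlhede count begins $(0,1,2,\dots)$, every Cartan invariant of third order is a function of the two invariants $\tau$ and $\xi$ already present at second order, so the count cannot rise above $2$ and the algorithm halts at order three. By Corollary~\ref{upper bound5}, such a metric has $|\alpha|\ne\frac54|\tau|$, so the residual isotropy is fixed entirely at first order by the null rotation \eqref{Icoframe} with parameter $B$ as in \eqref{NRotB}; hence from second order onward each Cartan invariant is an iterated frame derivative, in the fixed invariant frame $\{D',\delta',\bar\delta',\Delta'\}$, of the lower-order ones, and (the isotropy being already trivial) the algorithm terminates once the number of functionally independent invariants stabilises. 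At first order this number is $1$: by Lemma~\ref{lem:Aclass}, $\tau$ in \eqref{AclassTnA} is a function of $a-\ba$ alone, and $\alpha$ is a function of $\tau,\bar\tau$.

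The structural input I would isolate is that the rotation parameter, and the first-order coefficient $\gamma$, are already built from $\tau$ and $\xi$: comparing \eqref{NRotB} with \eqref{xi} gives $B=g(\tau)\,\xi$ for the explicit function $g$ of $\tau,\bar\tau$ (and $C_0$) occurring in \eqref{NRotB}, while \eqref{SpinCoef1} becomes $\gamma=-\frac{1}{\sqrt2}|\tau|^{5/2}\xi$ for the $A$-class. Hence, by \eqref{Case1Bderivatives}, the frame derivatives $DB,\ \delta B,\ \bar\delta B$, and through \eqref{InvSpinC} the invariant spin-coefficients $\pi',\lambda',\mu'$ together with the invariant frame, are all functions of $\tau$ and $\xi$; this, with the remaining second-order data, is the count $2$ postulated in the hypothesis. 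The coefficient $\nu'$ of \eqref{InvSpinC} involves $\Delta\bar B$, hence $f+\bar f$ and $G_{,uu}$; demanding, as the hypothesis does, that $\nu'$ introduce no further independent invariant at second order is a constraint on the free functions $f_1(u),f_2(u),G(u)$ of \eqref{AclassMetric}. I would make this constraint explicit after using the remaining coordinate freedom \eqref{KWcoordtransf} (the arbitrary $h(u)$ and the real constant) to put $f_1,f_2,G$ into normal form; I expect it to force $G$ (essentially) constant and to leave $f_1,f_2$ contributing nothing functionally independent of $\tau,\xi$.

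For the third order, every invariant has the form $X'\!\big(F(\tau,\bar\tau,\xi)\big)$ with $X'\in\{D',\delta',\bar\delta',\Delta'\}$, so the count exceeds $2$ only if one of $X'\tau$, $X'\xi$ is functionally independent of $\{\tau,\xi\}$, and it suffices to examine these. Since \eqref{Icoframe} fixes $\ell$, $D'=D\propto\partial_v$, whence $D'\tau=0$ and $D'\xi=\sqrt{2/|\tau|}$, a function of $\tau$. As $\delta',\bar\delta',\Delta'$ differ from $\delta,\bar\delta,\Delta$ only by $B$-weighted multiples of $D,\delta,\bar\delta$ with $B=g(\tau)\xi$, and $D\tau=0$, we get $\delta'\tau=\delta\tau$ and $\bar\delta'\tau=\bar\delta\tau$, which the field equations \eqref{VacNPn}, \eqref{VacNPo} give directly as functions of $\tau,\bar\tau$ (using $\alpha$ from \eqref{AclassTnA}); moreover for the $A$-class the transverse part of $\Delta$ in \eqref{Kaframe} is a multiple of $\partial_a+\partial_{\ba}$ (the reality of $C_0,G$ entering through Lemma~\ref{lem:Aclass}), which annihilates $a-\ba$, so $\Delta\tau=0$ and $\Delta'\tau=\bar B\delta\tau+B\bar\delta\tau$ is again a function of $\tau,\xi$. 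Finally $\delta'\xi$ and $\bar\delta'\xi$ follow by direct substitution of \eqref{SpinCoef1}, \eqref{Kaframe} and $B=g(\tau)\xi$ into \eqref{xi}.

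The one genuinely delicate quantity, and the main obstacle, is $\Delta'\xi$. Because $\Delta$ in \eqref{Kaframe} contains $\partial_u$, the $v^2$-weighted part of $\partial_v$, and the $(f+\bar f)\partial_v$ term, acting on $\xi=e^{-a-\ba}\big(v+G_{,u}/(C_0|\tau|^2)\big)$ it produces, a priori, $G_{,uu}$, the metric function $f+\bar f$, a $v^2$-term, and the non-invariant factor $e^{-a-\ba}$. Here I would combine three facts: (i) the overall factor $e^{-a-\ba}$ of $\Delta$, together with the $\partial_v$ in it, forces the $(f+\bar f)$-contribution to appear as $e^{-2(a+\ba)}(f+\bar f)$, in which the exponential part of $f+\bar f$ (which carries $e^{+2(a+\ba)}$) is reduced to a function of $a-\ba$, hence of $\tau$; (ii) once $G_{,u}$ is normalised away the $v^2$-contribution reorganises into $\xi^2$ and the $G_{,uu}$-term vanishes; (iii) the residual $f_1,f_2$-contributions are exactly those annihilated by the second-order constraint recorded above — indeed the same combination that makes $\nu'$, through $\Delta B$ in \eqref{Case1Bderivatives}, a function of $\tau,\xi$. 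Together with the $B=g(\tau)\xi$ corrections in $\Delta'=\Delta+\bar B\delta+B\bar\delta+|B|^2D$, this renders $\Delta'\xi$ a function of $\tau$ and $\xi$. Hence every third-order invariant is a function of $\tau,\xi$, the count stays at $2$, and the Karlhede algorithm terminates with invariant count $(0,1,2,2)$. I expect the bookkeeping in (i)--(iii) — checking that the second-order constraint together with the coordinate normalisation is precisely what is needed to cancel the spurious $u$- and $(a+\ba)$-dependence in $\Delta'\xi$ — to be the crux; the rest is routine substitution.
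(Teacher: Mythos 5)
Your overall strategy is the same as the paper's: use the hypothesis that only two functionally independent invariants occur at second order to constrain the free functions $f_1,f_2,G$ in \eqref{AclassMetric}, and then argue that all third-order frame derivatives of $\tau$ and $\xi$ stay functionally dependent on $\{\tau,\xi\}$, so the count is $(0,1,2,2)$. The structural observations you make along the way ($B=g(\tau,\bar\tau)\,\xi$, $\gamma=-\tfrac{1}{\sqrt2}|\tau|^{5/2}\xi$, $D'\tau=0$, $\Delta\tau=0$ for this class) are correct and consistent with \eqref{SpinCoef1}, \eqref{NRotB}, \eqref{xi} and \eqref{Kaframe}.

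However, there is a genuine gap: the two places you flag with ``I expect'' are precisely the content of the lemma, and they are not routine. First, the second-order constraint coming from $\tilde\nu=\nu+\Delta\bar B$ must actually be solved, not just invoked: in the paper this is a chain of wedge-product computations on the stripped invariants $N$, $N'_2$, $N'$, $N''$ which successively force $Im(f_1)=0$, $Re(f_2)=0$, the ODE $2G_{,uu}G_{,u}+C_0G_{,uuu}-2C_0^2F_{x,u}=0$ (hence $F_x=\tfrac{G_{,uu}}{2C_0}+\tfrac{G_{,u}^2}{2C_0^2}+C_2$), then $G_{,uu}=0$, and finally removal of $G$ by the coordinate transformation \eqref{KWcoordtransf}, leaving $f'(\z,u)=\tfrac{C_0^2}{16}e^{-4i(\z+C_1)/C_0}$ with no $u$-dependence at all. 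Second, your step (iii) — that the second-order functional-dependence constraint is ``exactly'' what cancels the residual $u$- and $(a+\ba)$-dependence in $\Delta'\xi$ — is not automatic and is the point at risk: metrics with the same exponential leading term but nontrivial $f_1,f_2$ (the $(0,1,3,\dots)$ classes of Theorem \ref{Aclassq4}) show that new independent invariants can appear only one order later, so one cannot conclude termination at third order without knowing the explicit reduced form of the metric. Once the explicit reduction above is in hand, the third-order check is immediate (all invariants are built from $a-\ba$ and $e^{-a-\ba}v$); without it, your argument asserts rather than proves the lemma.
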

\begin{proof}
The last invariant given in $(D^2 \Psi)_{51';11'} $ gives one new candidate for a functionally independent invariant: $\frac54 \tau \nu' + \bar{\nu'}\alpha$. Applying the transformation law for $\nu'$ it is seen that we may remove the majority of the terms in $\nu'$ and instead study the new invariant:  $\frac54 \tau( \nu + \Delta \bar{B}) + \alpha  (\bar{\nu} + \Delta B).$ As $|\alpha| \neq \frac54 |\tau|$, we may always combine this and its conjugate to produce a simpler invariant  \beq \tilde{\nu}=  \nu + \Delta \bar{B}. \label{NewNu} \eeq

 Denoting $F_x = Re(f_1)$ and $F_y = Im(f_1)$ we remove those terms in $\tilde{\nu}$ that are functionally dependent on $\tau$ and its conjugate to produce a new invariant:
\beq &N = \sqrt{\frac{\bar{\tau}}{\tau}}(\bar{f}_1|\tau|^{-1}e^{-2a-2\ba} - N'_0)|\tau| + D\bar{B} (N'_1- N'_0)& \nonumber \\
&N'_0 = \left( F_x (\z+\bz) + i F_y(\z-\bz) + 2Re(f_2) \right) e^{-2a-2\ba}, & \label{AclassN} \\ 
&N'_1 = \left[ \frac{|\tau|G_{,u}}{C_0}\left( v+ \frac{G_{,u}}{C_0|\tau|^2} \right) + \frac{v^2 |\tau|^3}{2}+\frac{G_{,uu}}{2C_0|\tau|} \right]e^{-2a-2\ba}. & \nonumber \eeq
\noindent Multiplying $\sqrt{\frac{\bar{\tau}}{\tau}}=e^{a-\ba}$ to $N$ and taking the difference of this new quantity with its conjugate,
\beq  e^{a-\ba} N - e^{\ba-a}\bar{N} =  -\frac{4iC_0|\tau|^2(N_1'-N_0')}{3C_0^2|\tau|^2-2} - 2iF_ye^{-2a-2\ba} , \nonumber \eeq
\noindent then by removing this term from $N$ leaves 
\beq N'_2 = (F_x|\tau|^{-1}e^{-2a-2\ba} - N_0')|\tau| + C_0|\tau|F_y e^{-2a-2\ba}.  \label{OiveyInv} \eeq 
\noindent We calculate the triple wedge product of this invariant with the previous invariants. The coefficients of the triple wedge product relative to the coordinate 3-form basis are extensive. However, only one is necessary if we wish that the triple wedge product vanishes, the vanishing of the $da \wedge d\ba \wedge dv$ coefficient yields
\beq -e^{-3a-3\ba}[4(-C_0F_y + iF_y(\z-\bz) + 2Re(f_2)) + 2C_0F_y] = 0. \nonumber \eeq 
\noindent As $\z-\bz$ is a linear function in $a+\ba$, $F_y$ must vanish and hence $Re(f_2) = 0$ as well. 

These constraints cause  $(F_x|\tau|^{-1}e^{-2a-2\ba} -N'_0)$ to vanish and so we work with the remaining invariant $N'= N'_1-N'_0 = (N'_1 - F_x|\tau|^{-1}e^{-2a-2\ba})|\tau|^{-3}$,  
\beq N' = \left[ \frac{G_{,u}}{C_0|\tau|^{2}}\left( v+ \frac{G_{,u}}{C_0|\tau|^2} \right) + \frac{v^2}{2}+\frac{G_{,uu}-2C_0F_x}{2C_0|\tau|^4}  \right]e^{-2a-2\ba}. \nonumber \eeq
\noindent Using the same procedure of equating the triple wedge product of $\ba-a$, $\xi$ and $N'$, we examine the $da \wedge du \wedge dv$-component, equating this to zero we find a differential equation: \beq  2G_{,uu}G_{,u} + C_0G_{,uuu} - 2C_0^2 F_{x,u}=0. \nonumber \eeq

\noindent Integrating we find that $F_x = \frac{G_{,uu}}{2C_0} + \frac{G_{,u}^2}{2C_0^2}+C_2$ and hence $N' = \frac{\xi^2}{2} + \left[ \frac{C_2}{|\tau|^4} \right] e^{-2a-2\ba}$

To continue, we eliminate the parts of this invariant expressed in terms of previous invariants, by denoting $N'' = N' - \frac{\xi^2}{2}$. We take the triple wedge product of this invariant with $a-\ba$ and $\xi$ to produce the following equation in the $da \wedge d \ba \wedge du$ component which must vanish: $G_{,uu}= 0$.
\noindent Denoting $G_{,u} = C_2$, the remaining invariant becomes $N'' =  C_2 e^{-2a-2\ba}(C_0^2|\tau|^4)^{-1}$. If we wish to have only two functionally independent invariants at second order, $C_2 =0$. This is generically the case;  if $G_{,u} \neq 0$ 
\noindent we  may always set $G=C_2u+C_3$ to zero using the coordinate transformation, \eqref{KWcoordtransf} of the form:
$u' = h(u),~~v' = \frac{v}{h_{,u}} + \frac{h_{,uu}}{2h_{,u}^2\tau|^2},~~h_{,u} = e^{-\frac{2}{C_0} G}$.
\noindent Applying this transformation, the analytic  function $f(\z,u)$ becomes
\beq  f'(\z,u) &=& \frac{C_0^2}{16} e^{\frac{-4i(\z+C_1)}{C_0}}. \nonumber \eeq

At third order there are no candidates for a third functionally independent invariant as all frame derivatives of $\xi$ produce invariants expressed in terms of the previous invariants.
\noindent The Karlhede algorithm terminates with an invariant count $(0,1,2,2)$. 
\end{proof}

Thus we have shown that vacuum Kundt waves with an invariant count of $(0,1,2,3,4,4)$ in the Karlhede algorithm cannot occur as the metrics with invariant counts starting with $(0,1,2,...)$ must have $(0,1,2,2)$ at the next order. 

\end{subsection}
\begin{subsection}{Vacuum Kundt waves with $(0,2,2,...)$} 
To begin, we prove a more general result for the vacuum Kundt waves with invariant count $(0,n,...)$ $1\leq n \leq4$ and $|\alpha| = 
\frac54 |\tau|$. 
\begin{lem} \label{lem:Collins}
 For those vacuum Kundt waves with at least one functionally independent invariant appearing at first order and such that $|\alpha| = \frac54 |\tau|$ then $\bar{\alpha} \neq e^{i \theta} \frac54 \tau$, $\theta \in \mathbb{R}$.  
\end{lem}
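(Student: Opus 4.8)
The plan is to exploit the explicit Newman--Penrose relations already assembled in Section~2, together with the coordinate representation of $\alpha$ and $\tau$ from Proposition~\ref{prop:SpinCoef1}, and to run the argument by contradiction. Suppose that $|\alpha| = \tfrac54|\tau|$ \emph{and} that the phase is rigid, i.e. $\bar\alpha = e^{i\theta}\tfrac54\tau$ for some fixed real $\theta$. The point of Lemma~\ref{lem:Collins} is that this is exactly the degenerate locus where the null rotation \eqref{NullRot} fails to remove $\gamma$ and only one real component can be normalised; so if it could happen on an open set, one would be in the Collins $q=6$ branch. We already know from Corollary~\ref{upper bound5} that the very special sub-case $\bar\alpha = \pm\tfrac54\tau$ does not occur; Lemma~\ref{lem:Collins} upgrades this to the whole circle of phases $e^{i\theta}$.

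First I would write $\alpha = \tfrac{\bar\tau}{4} + \sqrt{\bar\tau/\tau}\,(\bz_{,\ba})^{-1}$ from \eqref{SpinCoef1} and compute $|\alpha|^2$. Since $\sqrt{\bar\tau/\tau}$ is a unit complex number, $|\alpha|^2 = \tfrac{1}{16}|\tau|^2 + \tfrac12\mathrm{Re}\!\big(\bar\tau\,\overline{\sqrt{\bar\tau/\tau}}\,(\bz_{,\ba})^{-1}\big) \cdot(\text{const}) + |\bz_{,\ba}|^{-2}$; imposing $|\alpha|^2 = \tfrac{25}{16}|\tau|^2$ gives one real relation tying $|\bz_{,\ba}|$ and the cross term to $|\tau|$. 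Separately, the hypothesis $\bar\alpha = e^{i\theta}\tfrac54\tau$ gives \emph{two} real relations (modulus and argument): the modulus one is the same $|\alpha|=\tfrac54|\tau|$, while the argument one forces $\arg\bar\alpha - \arg\tau$ to be the constant $\theta$. Using $\bar\alpha = \tfrac{\tau}{4} + \sqrt{\tau/\bar\tau}\,(\z_{,a})^{-1}$ I would subtract $\tfrac{\tau}{4}$, so that $\sqrt{\tau/\bar\tau}\,(\z_{,a})^{-1} = \big(e^{i\theta}\tfrac54 - \tfrac14\big)\tau$, i.e. $(\z_{,a})^{-1} = \big(\tfrac54 e^{i\theta} - \tfrac14\big)\,\tau\sqrt{\bar\tau/\tau} = \big(\tfrac54 e^{i\theta}-\tfrac14\big)\sqrt{\tau\bar\tau}= \big(\tfrac54 e^{i\theta}-\tfrac14\big)|\tau|$. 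Thus $\z_{,a}$ would be a \emph{real} positive multiple of $1/|\tau|$ times the fixed constant $(\tfrac54 e^{i\theta}-\tfrac14)^{-1}$; in particular $\z_{,a}$ depends only on $|\tau| = \tfrac{e^{\ba-a}}{\z+\bz}$.

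Next I would feed this back into the structure equations. From $\tau = -e^{\ba-a}/(\z+\bz)$ one has $\z+\bz = -e^{\ba-a}/\tau = e^{\ba-a}/|\tau|\cdot(\text{unit})$, and one can differentiate the relation $\z_{,a} = c_\theta/|\tau|$ (with $c_\theta$ the fixed constant above) with respect to $a$ and $\ba$ and compare with $\partial_a(\z+\bz)$, $\partial_{\ba}(\z+\bz)$ computed from $\z+\bz$. Equivalently, and more cleanly, I would use the already-established consequence of Lemma~\ref{lem:Aclass}: \emph{either} $\alpha,\bar\alpha,\tau,\bar\tau$ are functionally dependent — in which case \eqref{AclassTnA} applies and Corollary~\ref{upper bound5} has already shown $|\alpha|\neq\tfrac54|\tau|$, a contradiction — \emph{or} they are functionally independent, so $d\tau\wedge d\bar\tau\neq 0$. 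But the relation $\z_{,a} = c_\theta/|\tau|$ together with the analogous $\bz_{,\ba} = \bar c_\theta/|\tau|$ (conjugate) forces $\z_{,a} + \bz_{,\ba}$, and hence by the computation of $d\tau\wedge d\bar\tau$ in the proof of Lemma~\ref{lem:Aclass} the whole double wedge, to be expressible through $|\tau|$ alone, collapsing $d\tau\wedge d\bar\tau$ to zero — again contradicting functional independence. Either way we reach a contradiction, proving the lemma.

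The main obstacle I anticipate is bookkeeping with the square-root factors $\sqrt{\bar\tau/\tau}$ and making the phase argument airtight: one must be careful that $\theta$ is genuinely a \emph{constant} (the hypothesis) and that the constant $\tfrac54 e^{i\theta} - \tfrac14$ never vanishes (it does not, since $|\tfrac54 e^{i\theta}| = \tfrac54 > \tfrac14$), so that division by it is legitimate and $\z_{,a}$ is forced to be a non-vanishing function of $|\tau|$ only. Once that reduction is in place, the contradiction with either Corollary~\ref{upper bound5} or with the non-degeneracy $d\tau\wedge d\bar\tau\neq0$ is immediate, so I would keep the explicit differentiation as a fallback but lead with the functional-dependence dichotomy, which is shorter.
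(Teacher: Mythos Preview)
Your reduction to $\z_{,a}=c_\theta(\z+\bz)$ with $c_\theta^{-1}=\tfrac54 e^{i\theta}-\tfrac14$ (up to an overall sign, since in fact $\tau\sqrt{\bar\tau/\tau}=-|\tau|$) is correct and is already the whole content of the lemma. The gap is in how you finish. In the second branch of your dichotomy you assert that, because $\z_{,a}+\bz_{,\ba}=(c_\theta+\bar c_\theta)(\z+\bz)$ is ``expressible through $|\tau|$ alone'', the two-form $d\tau\wedge d\bar\tau$ must collapse to zero. It does not: the computation in the proof of Lemma~\ref{lem:Aclass} gives the $da\wedge d\ba$ coefficient of $d\tau\wedge d\bar\tau$ as $\tfrac{2}{(\z+\bz)^3}(\z_{,a}+\bz_{,\ba})$, and this is nonzero whenever $\mathrm{Re}(c_\theta)\neq 0$, i.e.\ for every $\theta$ with $\cos\theta\neq\tfrac15$. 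So for generic $\theta$ your second branch yields no contradiction, and the dichotomy does not close.

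The paper's argument is exactly the one-line differentiation you relegate to a fallback. In $\z_{,a}=c_\theta(\z+\bz)$ the left-hand side depends on $(a,u)$ only; applying $\partial_{\ba}$ gives $0=c_\theta\,\bz_{,\ba}$, and since $c_\theta\neq 0$ this forces $\bz_{,\ba}=0$, contradicting the nondegeneracy of the coordinate change $a=\tfrac14\ln f_{,\z\z}$. (Equivalently, the paper writes the relation with $\bz_{,\ba}$ on the left and differentiates in $a$ to obtain $\z_{,a}=0$.) No case split and no appeal to Lemma~\ref{lem:Aclass} or Corollary~\ref{upper bound5} is needed; lead with the differentiation and drop the dichotomy.
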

\begin{proof}
Expanding the conjugate of $\alpha$ using \eqref{SpinCoef1} we find 
\beq \frac{\tau}{4} + e^{\ba-a} \bz^{-1}_{,\ba} = - \frac54 e^{i \theta} \tau. \nonumber \eeq
\noindent Upon simplification this leads to the equation 
\beq \frac{1+5e^{i\theta}}{4} \bz_{,\ba} =  \z(a,u)+ \bz(\ba,u). \nonumber \eeq
\noindent A contraction arises here, as we may differentiate with respect to $a$ giving $\z_{,a} = 0$. 
\end{proof}

\noindent Recalling the comment after equation \eqref{NullRot} there are three cases to consider depending on the phase of the conjugate of $\alpha$. This lemma implies that the first two cases where $\bar{\alpha} = \pm \frac54 \tau$ cannot occur. 

To start narrowing the possibilities for $f(\z,u)$ we consider the wedge products of invariants built out of $\alpha$, $\tau$ and their conjugates. As $T=e^{\ba-a} = \sqrt{\tau/ \bar{\tau}}$, and $A = \z_{,a} = (e^{a-\ba}(\bar{\alpha} - \tau))^{-1}$ are both invariants it will be helpful to consider the triple wedge product: 
\beq & dA \wedge d\bar{A} \wedge dT = -T(\bz_{,\ba \ba} \z_{,a u} - \bz_{,\ba u} \z_{,a a}) da \wedge d\ba \wedge du.& \label{AbAT} \eeq
\noindent Alternatively, using the invariant $M = |\tau|^{-1} = \z(a,u)+\bz(\ba,u)$, we have another equation as the coefficient of the triple wedge product:
\beq &dT \wedge dM \wedge dA = -T(\bz_{,\ba \ba} \z_{,u} +  \bz_{, \ba \ba} \bz_{,u} - \bz_{,\ba u} \bz_{,\ba} - \bz_{,\ba u} \z_{,a}) da \wedge d\ba \wedge du.& \label{TNA} \eeq
\noindent Equating these two wedge products to zero, we have sufficient information to solve for $f(\z,u)$ in the vacuum Kundt wave metrics with invariant count $(0,2,...)$, and hence narrow down the possibilities for those spacetimes with invariant count $(0,2,2,...)$. 

\begin{lem}\label{lem:Bclass}
 The vacuum Kundt wave metrics for which the triple wedge product of $\alpha$, $\tau$ and their conjugates vanish have the following form:
\beq \tilde{f}(\z,u) &=& - \frac{F(u)^2}{16}e^{\frac{4(\z-f_0(u))}{iF(u)}} + g(u) \z + g_0(u) \label{BAmetric} \\
\tilde{f}(\z,u) &=& \frac{c^2}{16} e^{\frac{4(\z-f_1(u))}{c}} + g_1(u) \z + g_2(u),~~Re(C)\neq 0 \label{BBmetric} \\
\tilde{f}(\z,u) &=& f_2(\z-c_0 - i F_3(u))+g_3(u)\z + g_4(u) \label{BCmetric} \eeq
\end{lem}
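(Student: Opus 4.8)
The plan is to work entirely from the two vanishing conditions \eqref{AbAT} and \eqref{TNA}. Throughout, $\z(a,u)$ is holomorphic in $a$ for each fixed $u$, since $a=\tfrac14\ln f_{,\z\z}$ is a holomorphic function of $\z$; in particular $\z_{,aa}$ is analytic in $a$, so it is either identically zero or has isolated zeros only, and it suffices to treat the cases $\z_{,aa}\neq 0$ (on an open set) and $\z_{,aa}\equiv 0$ separately. The vacuum content enters only through $f_{,\z\z}=e^{4a}$: indeed \eqref{aKundt} is identically satisfied once this holds, because $\tilde f_{,a}/\z_{,a}=f_{,\z}$ and $(f_{,\z})_{,a}=f_{,\z\z}\z_{,a}$, so it imposes nothing further. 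In each case, having pinned down $\z(a,u)$, I would invert to obtain $a(\z,u)$, set $f_{,\z\z}=e^{4a(\z,u)}$, and integrate twice in $\z$, the two $u$-dependent integration constants producing the $g(u)\z+g_0(u)$-type tail common to \eqref{BAmetric}--\eqref{BCmetric}.

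For the non-degenerate case $\z_{,aa}\neq 0$, I would divide \eqref{AbAT} by $\z_{,aa}\,\bz_{,\ba\ba}$: the quotient $\z_{,au}/\z_{,aa}$ depends only on $(a,u)$ while $\bz_{,\ba u}/\bz_{,\ba\ba}$ depends only on $(\ba,u)$, so both equal some $\phi(u)$, and since $\bz(\ba,u)=\overline{\z(a,u)}$ this forces $\phi$ to be real-valued. Integrating $\z_{,au}=\phi(u)\z_{,aa}$ in $a$ gives $\z_{,u}-\phi\z_{,a}=\psi(u)$ for some function $\psi$. Substituting this and its conjugate, together with $\bz_{,\ba u}=\phi\,\bz_{,\ba\ba}$, into \eqref{TNA} collapses the identity (all $\z$-dependent terms cancel because $\phi$ is real) to $\bz_{,\ba\ba}(\psi+\bar\psi)=0$, hence $\psi=i\beta(u)$ with $\beta$ real. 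Solving the transport equation $\z_{,u}=\phi(u)\z_{,a}+i\beta(u)$ by characteristics yields $\z(a,u)=iB(u)+W(a+\Phi(u))$, where $\Phi'=\phi$ and $B'=\beta$ are real primitives and $W$ is an arbitrary holomorphic function, necessarily nonlinear since $W''=\z_{,aa}\neq 0$. Inverting, $a=W^{-1}(\z-iB(u))-\Phi(u)$, so $f_{,\z\z}=e^{-4\Phi(u)}e^{4W^{-1}(\z-iB(u))}$; integrating twice gives $f=e^{-4\Phi(u)}f_2(\z-iB(u))+(\text{linear in }\z)$, and a reparametrization $u'=h(u)$ with $h_{,u}=e^{-2\Phi}$, admissible by \eqref{KWcoordtransf} (under which $f$ is rescaled by $h_{,u}^{-2}$ modulo linear terms), removes the prefactor and rewrites $iB$ as $c_0+iF_3(u)$, producing exactly \eqref{BCmetric}.

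For the degenerate case $\z_{,aa}\equiv 0$ one has $\z(a,u)=c(u)a+d(u)$, and \eqref{AbAT} holds automatically; \eqref{TNA} then reduces to $\bar c'(u)\,(c+\bar c)=0$. Either $c$ is constant, and inverting $a=(\z-d(u))/c$ and integrating $e^{4a}$ twice gives $f=\tfrac{c^2}{16}e^{4(\z-d(u))/c}+(\text{linear})$, which is \eqref{BBmetric} when $\mathrm{Re}(c)\neq 0$ and \eqref{BAmetric} with constant $F$ when $\mathrm{Re}(c)=0$; or $\mathrm{Re}(c)=0$ with $c=iF(u)$ possibly $u$-dependent, and the same computation, using $(iF/4)^2=-F^2/16$, gives \eqref{BAmetric} with $f_0=d$. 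I would also verify that using only \eqref{AbAT} and \eqref{TNA} loses nothing: the remaining triple wedge products built from $\{\alpha,\bar\alpha,\tau,\bar\tau\}$, such as $dA\wedge d\bar A\wedge dM$, vanish identically once these two do.

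The step I expect to be the main obstacle is the non-degenerate reduction of \eqref{TNA} to the algebraic condition $\psi+\bar\psi=0$: it requires feeding the first-order relation $\z_{,u}=\phi\z_{,a}+\psi$ and its conjugate, together with the once-integrated form of \eqref{AbAT}, into \eqref{TNA} and checking that the reality of $\phi$ makes every term carrying $\z_{,a}$ or $\bz_{,\ba}$ cancel in pairs. Keeping that bookkeeping straight, and separately confirming that each of \eqref{BAmetric}--\eqref{BCmetric} emerges with the full advertised function/constant freedom — and that the three families are genuinely the only outcomes — is where the care is concentrated.
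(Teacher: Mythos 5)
Your proposal is correct and follows essentially the same route as the paper: you impose the vanishing of \eqref{AbAT} and \eqref{TNA}, split on whether $\z_{,aa}$ vanishes, solve the resulting equations for $\z(a,u)$, and then invert and integrate $f_{,\z\z}=e^{4a}$ twice, which is precisely the paper's argument. The only differences are cosmetic streamlining — you merge the paper's four subcases into two by solving the transport equation $\z_{,u}=\phi\z_{,a}+i\beta$ uniformly and removing the $u$-dependent prefactor by the same reparametrization the paper uses in its Case 4, and you add the (correct) check that the remaining triple wedge products, e.g.\ $dA\wedge d\bar A\wedge dM$, vanish automatically once \eqref{AbAT} and \eqref{TNA} do.
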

\begin{proof}
Equating equations \eqref{AbAT} and \eqref{TNA} to zero, we have two differential equations for $\z(a,u)$ and its conjugate. There will be four cases depending on whether $\z_{,aa}$ and $\z_{,ua}$ are zero or not. 

\noindent {\bf Case 1} - $ \z_{,aa} = 0,~\z_{,ua} \neq 0$:

 Equation \eqref{AbAT} vanishes entirely while  \eqref{TNA} implies $\z_{,a} = -\bz_{,\ba}$, so that \beq \z = iF(u) a + f_0(u) \label{BAzed} \eeq \noindent Solving for $a$ and integrating  $\tilde{f}_{,\z\z} = e^{4a}$ we find the form \eqref{BAmetric}.

\noindent {\bf Case 2} - $ \z_{,aa} = 0,~\z_{,ua} = 0$:

 Here the constraints immediately imply \beq \z = ca + f_1(u). \label{BBzed} \eeq \noindent Solving for $a$ and integrating $\tilde{f}_{,\z\z} = e^{4a}$ yields the analytic function \eqref{BBmetric}.

\noindent {\bf Case 3} - $ \z_{,aa} \neq 0,~\z_{,ua} = 0$:

  These assumptions cause \eqref{TNA} to become $\z_{,u} + \bz_{,u} = 0$, implying $\z$ takes the form: \beq \z = \dot{f}_2^{-1}(a) + i F_3(u) + C_0 \label{BCzed}. \eeq \noindent Solving for $a$ and assuming $\dot{f}_2 = \frac14 ln \ddot{f}_2$, the expression $\tilde{f}_{,\z\z} = e^{4a}$ becomes, \beq \tilde{f}_{,\z\z} = \ddot{f}_2(\z-C_0-iF_3). \nonumber \eeq \noindent As $\dot{f}_2$ and $\ddot{f}_2$ are arbitrary functions of $u$, we make one more assumption, $\ddot{f}_2 = f_{2,\z\z}$. Integrating twice yields the desired metric function \eqref{BCmetric}. 

\noindent {\bf Case 4} - $ \z_{,aa} \neq 0,~\z_{,ua} \neq 0$

Re-arranging the functions we find \beq \frac{\z_{,au}}{\z_{,aa}} = \frac{\bz_{,\ba u}}{\bz_{,\ba \ba}} \nonumber \eeq \noindent which is equivalent to $\z_{,au} - F_5(u) \z_{,aa} =0 $. Integrating with respect to $a$ yields $\z_{,u} - F_5(u) \z_{,a} = f_4(u)$. Substituting this into \eqref{TNA} we find that $f_4 = iF_4$, so that $\z$ takes the form \beq \z = \dot{f}_6^{-1}\left(a + \int F_5 du \right) + i \int F_4 du \label{BDzed}. \eeq \noindent Solving for $a$ and assuming $\dot{f}_6 = \frac14 ln \ddot{f}_6 $ and $\ddot{f}_6 = f_{6,\z\z}$, we integrate twice to find 
\beq \tilde{f}(\z,u) &=& e^{-\int F_5(u) du} f_6 \left(\z - i \int F_4(u) du \right) + g_5(u) \z + g_6(u). \nonumber \eeq 

Any metric with a function of this form may be transformed into one of the form \eqref{BCmetric} using the transformation 
$ u' = h(u),~~v' = \frac{v}{h_{,u}} - \frac{h_{,uu}}{2h_{,u}^2 |\tau|^2},~~ h_{,u} = e^{-\frac{\int F_5 du}{2}}.$
\noindent The division of the case with $\z_{,aa} \neq 0$ cannot be made by $\z_{,au}$ vanishing or not; it is a coordinate-dependent distinction. 
\end{proof}

These metrics do not yet belong to the $(0,2,...)$ class as we must determine whether $\gamma$ may be set to zero or not. If $\gamma$ is non-zero, the various triple wedge products involving $\gamma$ with $\alpha$, $\tau$ and their conjugates give further conditions on the metric function $f(\z,u)$. By lemma \ref{lem:Collins} we see that $\bar{\alpha} \neq \pm \tau$ and hence we may eliminate the real or imaginary part of $\gamma$ but not both as the ratio of the real part to the imaginary part of the quantity, $\alpha B + \frac{5}{4} \bar{B} \tau$, is $tan[\frac12(arg(\alpha)+arg(\tau))] = tan(arg(e^{iC})) = C \neq 0$ \cite{Collins91}. 

Opting to eliminate the real part of $\gamma$, we note that the purely imaginary invariant $\gamma'$ is invariant under any null rotation preserving $Re(\gamma)=0$. Thus, without fixing the frame any further, the transformed scalar $\gamma'$ is a Cartan invariant:
\beq \gamma' &=& i(Im(\gamma) - CRe(\gamma)) \nonumber \\
&=& i \frac{\sqrt{|\tau|}}{2\sqrt{2}} \left[ \frac{i\z_{,u}}{\z_{,a}} - \frac{i\bz_{,u}}{\bz_{,\ba}} + C \left( |\tau|^2v +  \frac{\z_{,u}}{\z_{,a}} + \frac{i\bz_{,u}}{\bz_{,\ba}}  \right)     \right]e^{-a-\ba}. \nonumber \eeq 
\noindent  We may consider the triple wedge product of the differentials of three invariants constructed from $\gamma'$, $\tau$, $\alpha$ and their complex conjugates to determine the feasibility of the vacuum Kundt waves with invariant count $(0,2,2,3,4,4)$.

\begin{lem} \label{lem:BclassInvCount} 
 The class of vacuum Kundt waves with an invariant count beginning with (0,2,2,...) and $|\alpha| = \frac54 |\tau|$ cannot occur. 
\end{lem}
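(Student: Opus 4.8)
The plan is to mirror the strategy of Lemma~\ref{lem:AclassInvCount} and Corollary~\ref{upper bound5}: take the three explicit normal forms \eqref{BAmetric}, \eqref{BBmetric}, \eqref{BCmetric} from Lemma~\ref{lem:Bclass}, impose the condition $|\alpha| = \frac54|\tau|$, and then chase the requirement that only two functionally independent invariants survive through second order. Since Lemma~\ref{lem:Collins} already rules out $\bar\alpha = e^{i\theta}\frac54\tau$, the operative subcase is $\bar\alpha \neq \pm\frac54\tau$, so exactly one real degree of frame freedom remains; I would fix it by eliminating $\mathrm{Re}(\gamma)$ as in the paragraph preceding the lemma, leaving the purely imaginary Cartan invariant $\gamma'$ together with $\tau$, $\alpha$, and their conjugates.

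First I would substitute each of the three forms for $\z(a,u)$ — namely $\z = iF(u)a + f_0(u)$, $\z = ca + f_1(u)$ with $\mathrm{Re}(c)\neq 0$, and $\z = \dot f_2^{-1}(a) + iF_3(u) + c_0$ — into the expressions \eqref{SpinCoef1} for $\tau$ and $\alpha$ and write out the equation $|\alpha|^2 = \frac{25}{16}|\tau|^2$ explicitly in the $a,\ba$ coordinates. For \eqref{BBmetric} (Case 2) this should reduce to an algebraic relation of the same shape as in Corollary~\ref{upper bound5}, namely $\frac32|c|^2 = (\text{real-linear in }a+\ba)^2$, whose $a$-derivative forces $c = 0$ and hence a contradiction with $\z_{,a}\neq 0$; I expect this subcase to close immediately. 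For \eqref{BAmetric} and \eqref{BCmetric} the condition $|\alpha|=\frac54|\tau|$ will instead impose a differential constraint linking $F(u)$ (resp.\ $\dot f_2$) to the transverse coordinate, which I would solve to pin down the residual functional freedom.

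Next, with $|\alpha|=\frac54|\tau|$ forced, I would compute the Cartan invariant $\gamma'$ from the displayed formula and form the triple wedge products $d\gamma' \wedge dT \wedge dM$, $d\gamma'\wedge dA\wedge dT$ (and, if needed, one involving $\mathrm{Im}(\alpha)$), where $T = \sqrt{\tau/\bar\tau}$, $M = |\tau|^{-1}$, $A = \z_{,a}$ are the first-order invariants already in play. The invariant count $(0,2,2,\dots)$ demands that \emph{no} third functionally independent invariant appears at second order, i.e.\ that each of these triple wedges vanishes. Reading off one convenient coordinate 3-form coefficient of each wedge (as was done for the $da\wedge d\ba\wedge dv$ and $da\wedge du\wedge dv$ coefficients in Lemma~\ref{lem:AclassInvCount}) should yield differential equations in $F_3(u)$, $f_0(u)$, $g(u)$, etc. I expect these to be incompatible with the constraint already extracted from $|\alpha|=\frac54|\tau|$ — either forcing $\z_{,a}=0$ again, or forcing $\tau$ itself to become constant (which \eqref{VacNPn}, \eqref{VacNPo} forbid), or collapsing $\gamma'$ to a function of $\tau$, which would make the count $(0,2,\dots)$ with isotropy already broken, contradicting the assumed count $(0,2,2,\dots)$ with $|\alpha|=\frac54|\tau|$.

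The main obstacle will be the bookkeeping in Case~3 (metric \eqref{BCmetric}): here $\z$ is only given implicitly through $\dot f_2^{-1}$ with $\ddot f_2 = f_{2,\z\z}$ an essentially arbitrary analytic function, so the wedge-product coefficients are not polynomial and one cannot simply differentiate away constants. I would handle this by re-expressing everything through $M = \z+\bz$ and treating $\z_{,a} = \tfrac14(\ln \ddot f_2)^{-1}\!{}'$ as the unknown, reducing the vanishing conditions to an ODE in the single real variable $a+\ba$ whose only solutions are the exponential ones already covered by Case~2 — thereby folding Case~3 back into the contradiction obtained there. If a genuinely new solution branch were to survive all wedge conditions, the fallback is to push to third order and show, as in Lemma~\ref{lem:AclassInvCount}, that frame derivatives of $\gamma'$ produce nothing functionally new, so the count terminates at $(0,2,2,2)$ and in particular never reaches $(0,2,2,3,4,4)$.
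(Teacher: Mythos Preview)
Your plan is workable but takes an unnecessary detour. The paper reverses the order of operations: it computes the triple wedge $d(e^{\ba-a})\wedge d\z_{,a}\wedge d\gamma'$ \emph{before} specialising to any of the three normal forms from Lemma~\ref{lem:Bclass} or imposing $|\alpha|=\tfrac54|\tau|$. The key point is that $\gamma'$ carries a term linear in $v$ whose coefficient is proportional to $C$, and $C\neq 0$ precisely because Lemma~\ref{lem:Collins} rules out $\bar\alpha=\pm\tfrac54\tau$; meanwhile $e^{\ba-a}$ and $\z_{,a}$ are $v$-independent. Reading off the $da\wedge d\ba\wedge dv$ and $da\wedge du\wedge dv$ components of this wedge therefore forces $\z_{,aa}=0$ and $\z_{,au}=0$ in one stroke, which eliminates Cases~1 and~3 and singles out \eqref{BBmetric} without ever writing down the magnitude condition for them. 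Only \emph{then} does the paper impose $|\alpha|=\tfrac54|\tau|$ on \eqref{BBmetric}, obtaining the contradiction exactly along the lines you anticipated for that case.

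Your route would also reach the conclusion --- the wedge $d\gamma'\wedge dA\wedge dT$ you propose is the very one the paper uses, and its $da\wedge d\ba\wedge dv$ coefficient already disposes of Case~3 via $\z_{,aa}=0$ --- but by imposing the magnitude condition first you obscure this clean $v$-mechanism and are led to treat Case~3 as an obstacle requiring implicit-function manipulations or a third-order fallback. Neither is needed: the single wedge with $\gamma'$, read on the $dv$-containing components, does all the case-reduction work, and the Case~3 bookkeeping you flag simply never arises.
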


\begin{proof}
Taking the triple wedge product of the invariants $e^{\ba-a}$, $\z_{,a}$ and $\gamma'$, we examine the coefficients of $da \wedge d\ba \wedge dv$,  $da \wedge du \wedge dv$ and  $d\ba \wedge du \wedge dv$, equating these coefficients to zero we find two constraints:
\beq & \frac{i C|\tau|^{\frac52} e^{-2a}}{2\sqrt{2}} \z_{,aa} = 0,~~ \frac{i C|\tau|^{\frac52} e^{-2a}}{2\sqrt{2}} \z_{,au} = 0. & \nonumber \eeq
\noindent Immediately we see that the metric function must be of the form \eqref{BBmetric} with the corresponding form of $\z(a,u)$ given in \eqref{BBzed}. Expressing $\alpha$ and $\tau$ in terms of this function the required equality $|\alpha| = \frac54 |\tau|$ implies
\beq \frac{24 |c|^2}{16} - \frac{C_0 |\tau|^{-1}}{2} - |\tau|^{-2} = 0, \nonumber \eeq
\noindent  where $|\tau|^{-1} = C_0 (a+\ba)+iC_1(a-\ba)+2Re(f_0)$ with either $C_0$ or $C_1$ non-zero. Expanding $|\tau|^{-1}$  and differentiating twice with respect to $a$ we find a constant that must vanish:
\beq C_0 + i C_1 = 0. \nonumber \eeq
\noindent This produces a contradiction as we have assumed $\z_{,a} \neq 0$, thus there are no vacuum Kundt wave spacetimes with an invariant count $(0,2,...)$ where the first order  Cartan invariants satisfy $|\alpha| = \frac54 |\tau|$. 
\end{proof}

We have shown that the collection of vacuum Kundt waves must have either an invariant count $(0,2,2)$ with all isotropy fixed at first order, or an invariant count of $(0,2,3...)$ with $|\alpha| = \frac54 |\tau|$ implying all isotropy is fixed at second order. Regardless of either case, none of the potential spacetimes arising from these subclasses  produce an invariant count with $q=5$.  This lemma completes the proof of theorem \ref{thm:4itis} as we have shown the two possibilities for the Karlhede algorithm  requiring $q=5$ cannot occur. 
\end{subsection}
\end{section}

\begin{section}{Sharpness of the $q\leq 4$ Upper Bound} \label{Sharpz}
In this section we will show that the new upper bound is indeed sharp by producing an explicit metric function $f(\z,u)$. 

\begin{thm} \label{Aclassq4} 
The vacuum Kundt waves with invariant count $(0,1,3,4,4)$ are of the form 
\beq f(\z,u) = \frac{C_0^2}{16} e^{\frac{-4i(\z+C_1)}{C_0}} + f_1(u) \z + f_2(u) \nonumber \eeq
\noindent where $f_1$ and $f_2$ are non-constant and satisfy either \beq &f_1 = (C_2+i)F_y,~~Re(f_2) = C_3 + \frac{C_0}{2} ln(F_y) F_y,~~F_y \neq Cu^{-2},  & \nonumber \\ &or & \label{AclassG1c} \\ &f_1 = F_x,~~Re(f_2) = C_3F_x,~~F_x \neq Cu^{-2}. \nonumber  & \eeq 
\end{thm}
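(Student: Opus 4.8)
The plan is to work within the subclass already isolated by Lemma~\ref{lem:Aclass}, so that the metric function has the form $f(\z,u) = \tfrac{C_0^2}{16} e^{-4i(\z - iG(u) + C_1)/C_0} + f_1(u)\z + f_2(u)$, and then to retrace the computation of Lemma~\ref{lem:AclassInvCount} but \emph{without} imposing that the invariant count terminates at $(0,1,2,2)$. Recall that in the proof of Lemma~\ref{lem:AclassInvCount} a sequence of candidate second- and third-order invariants $N$, $N_2'$, $N'$, $N''$ was extracted, and the demand that no new functionally independent invariant appear forced, successively, $F_y = 0$, $\mathrm{Re}(f_2) = 0$, a first-order ODE for $F_x$ integrating to $F_x = \tfrac{G_{,uu}}{2C_0} + \tfrac{G_{,u}^2}{2C_0^2} + C_2$, then $G_{,uu} = 0$, and finally $C_2 = 0$ (after a residual coordinate transformation killing $G$). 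The point now is to instead ask: at which of these steps does allowing the constraint to \emph{fail} produce exactly one new functionally independent invariant at the relevant order (so that the count becomes $(0,1,3,\dots)$ rather than $(0,1,2,2)$), and what are the weakest hypotheses on $f_1,f_2$ under which this happens?

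First I would normalise using the coordinate freedom \eqref{KWcoordtransf}: as in Lemma~\ref{lem:AclassInvCount} one uses $h(u)$ with $h_{,u} = e^{-2G/C_0}$ to set $G \equiv 0$, reducing $f$ to $\tfrac{C_0^2}{16} e^{-4i(\z+C_1)/C_0} + f_1(u)\z + f_2(u)$, which already matches the stated form; the residual freedom is then an affine reparametrisation of $u$ and the constant shifts. With $G = 0$ the invariant $\xi$ of \eqref{xi} collapses to $\xi = e^{-a-\ba} v$, and the chain of invariants $N$, $N_2'$, $N'$, $N''$ simplifies considerably. I would then recompute the triple wedge products of $\{e^{\ba - a},\, \xi,\, N_2'\}$ and of $\{e^{\ba-a},\,\xi,\,N'\}$ in these coordinates and read off which combinations of $f_1$ and $\mathrm{Re}(f_2)$ must be proportional to make exactly one extra invariant appear at third order while keeping the count from jumping to $4$. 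The two displayed alternatives in \eqref{AclassG1c} should emerge as the two branches of a factorisation: either $f_1$ has a fixed nonzero phase relative to $C_0$ — the ``$f_1 = (C_2+i)F_y$'' branch, where a genuine imaginary part survives — with $\mathrm{Re}(f_2)$ tied to $F_y$ by the logarithmic relation $\mathrm{Re}(f_2) = C_3 + \tfrac{C_0}{2}\ln(F_y)\,F_y$; or $f_1$ is real, $f_1 = F_x$, with $\mathrm{Re}(f_2) = C_3 F_x$. In each branch the further requirement that the algorithm genuinely runs to $q=4$ (i.e.\ the count is $(0,1,3,4,4)$ and not $(0,1,3,3)$) is what produces the inequality $F_y \neq Cu^{-2}$, respectively $F_x \neq Cu^{-2}$: the excluded profile $\propto u^{-2}$ is precisely the one that can be absorbed by the residual affine $u$-freedom, collapsing the spacetime back to an earlier-terminating (indeed homogeneous-in-$u$) case.

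Concretely, the key steps in order are: (i) substitute the Lemma~\ref{lem:Aclass} form into \eqref{SpinCoef1}, \eqref{InvSpinC}, \eqref{Case1Bderivatives}, normalise $G=0$, and record the simplified first- and second-order Cartan invariants; (ii) form the relevant triple wedge products and extract the polynomial/logarithmic constraints linking $f_1$ and $\mathrm{Re}(f_2)$ that are equivalent to ``at most one new invariant at third order''; (iii) solve those constraints, obtaining the two branches of \eqref{AclassG1c} up to the residual coordinate freedom, and verify that the two branches are genuinely inequivalent (no $h(u)$, no constant shift carries one into the other) except on the excluded $u^{-2}$ locus; (iv) check at fourth order that no fifth functionally independent invariant appears, so the count stabilises at $(0,1,3,4,4)$ and the algorithm terminates at $q=4$ — this is the step where one confirms sharpness, i.e.\ that a fourth derivative is genuinely needed and a fifth is not. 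The main obstacle I anticipate is step (ii)–(iii): disentangling which part of the third-order invariant is ``new'' versus ``functionally dependent on $\tau$ and $\xi$'' when $f_1$ carries both a real and an imaginary part is delicate, because the dependence enters through $\sqrt{\bar\tau/\tau}$-weighted combinations and through $D\bar B$, and isolating the clean combination that yields the logarithmic relation $\mathrm{Re}(f_2) = C_3 + \tfrac{C_0}{2}\ln(F_y)F_y$ requires carefully choosing which previous invariants to subtract before taking the wedge product. A secondary but real difficulty is making rigorous the claim that the $u^{-2}$ profiles are exactly the absorbable ones; this amounts to solving $h_{,u}$-type ODEs compatibly with the affine residual freedom and showing the dimension count of essential parameters drops precisely on that locus.
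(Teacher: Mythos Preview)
Your outline is close to the paper's, but step~(ii) has a concrete gap that would derail the computation. To pin down the $(0,1,3,\dots)$ branch you need \emph{exactly three} functionally independent invariants at second order, and the correct tool for that is the \emph{quadruple} wedge product $d(a-\ba)\wedge d\xi \wedge dN_0 \wedge dN_1$, not the triple wedges you propose. The triple wedges $d(a-\ba)\wedge d\xi\wedge dN_j$ must in fact be \emph{nonzero} here (otherwise you fall back into $(0,1,2,\dots)$); it is the vanishing of the $4$-form, expanded as a polynomial in the coordinates, that produces the three algebraic relations among $F_x$, $F_y$ and $\mathrm{Re}(f_2)$ whose solution splits into the two branches of \eqref{AclassG1c}. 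Retracing Lemma~\ref{lem:AclassInvCount} and asking ``which triple-wedge constraint to relax'' does not hand you these relations directly, because the constraints there are nested, and relaxing one of them does not by itself guarantee that only \emph{one} new invariant appears; you also mislabel this as a ``third order'' condition when it is a second-order one.

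Your explanation of the $u^{-2}$ exclusion is also off. After you have spent the full $h(u)$-freedom to set $G\equiv 0$ there is no residual affine reparametrisation of $u$ left to absorb anything. The actual mechanism is this: once the second-order count is $3$, the single surviving second-order invariant beyond $a-\ba$ and $\xi$ has the form $\tilde N = F_0(u)\,e^{-2a-2\ba}$, and the only candidate for a new third-order invariant is $\Delta\tilde N \propto F_{0,u}\,e^{-3a-3\ba}$. Requiring this to be functionally dependent on the previous invariants forces either $F_0$ constant or $F_{0,u}\propto F_0^{3/2}$, whose integral is $F_0 = C u^{-2}$. These exceptional profiles give the $(0,1,3,3)$ count and are $G_1$ spacetimes (they can be brought, via the \emph{non-affine} transformation $h_{,u}=u^{-1}$, to a form with a manifest Killing vector). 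So the exclusion $F_x,F_y\neq Cu^{-2}$ is an ODE condition detected by the invariants themselves, not a coordinate-gauge artefact.
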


\begin{proof}
To prove this fact, we calculate the quadruple wedge product of the differentials of $a-\ba$, $\xi$ and two new invariants arising in $N$ where the invariant $N_2'$ in \eqref{OiveyInv} is now denoted as $N_0$, and $N_1$ arises from the imaginary part of $N$,

\beq &N = \sqrt{\frac{\bar{\tau}}{\tau}} \left( N_0|\tau| + DB(DB+D\bar{B})^{-1}\left( N_0 + \frac{|\tau|^2 \xi^2}{2} + N_1 \right) \right) & \nonumber \\
&N_0 = \left( -C_0F_y + iF_y(\z-\bz) + 2Re(f_2)\right) e^{-2a-2\ba}, & \label{AclassNSharp} \\ 
&N_1 = \left[ \frac{(G_{,u}^2 + G_{,uu}C_0 - 2C_0^2F_x)}{2C_0^2|\tau|} + \frac{[C_0|\tau|^2-1]F_y}{C_0|\tau|^2} \right]e^{-2a-2\ba}. & \nonumber \eeq

\noindent In these coordinates, the invariants are a bit complicated; one may make a coordinate transformation to remove $G(u)$ in the function $f(\z,u)$ in \eqref{AclassMetric}. Applying the transformation \eqref{KWcoordtransf}: $u' = h(u),~~v' = \frac{v}{h_{,u}} + \frac{h_{,uu}}{2h_{,u}^2\tau|^2},~~h_{,u} = e^{-\frac{2}{C_0} G}$.
\noindent  Relabeling the arbitrary functions $f_1$ and $f_2$, the analytic  function $f(\z,u)$ becomes
\beq & f'(\z,u) = \frac{C_0^2}{16} e^{\frac{-4i(\z+C_1)}{C_0}} + f'_1(u') \z + f'_2(u'). & \label{AclassNoG} \eeq

Dropping the primes and repeating the calculations in lemma \ref{lem:Aclass} and subsection \eqref{Aclassq4upb} with this new function, one finds that $N_0$ and $N_1$ are now
\beq \begin{aligned} &N_0 = \left( -C_0F_y + iF_y(\z-\bz) + 2Re(f_2)\right) e^{-2a-2\ba}, & \label{AclassNSharpNoG} \\ 
&N_1 = \left[ \frac{- 2C_0^2F_x}{2C_0^2|\tau|} + \frac{[C_0^2|\tau|^2-1]F_y}{C_0|\tau|^2} \right]e^{-2a-2\ba}. &  \end{aligned} \eeq
\noindent From the quadruple wedge product $d(a-\ba) \wedge d \xi \wedge dN_0 \wedge dN_1$ we find the sole coefficient 
\noindent yields three essential equations whose vanishing is necessary and sufficient for the 4-form to vanish:
\beq & (Re(f_2) +\frac{C_0}{4} F_y) F_{x,u} - Re(f_2)_{,u} F_x = 0 \nonumber \\
& (Re(f_2) +\frac{C_0}{4} F_y) F_{y,u} - Re(f_2)_{,u} F_y = 0 & \nonumber \\
& F_y F_{x,u} - F_{y,u}F_x = 0. \nonumber \eeq

To solve these equations we must consider two cases depending on whether $F_y =0$ or not. In the case that $F_y$ does vanish, we find that $Re(f_2)$ may be expressed in terms of derivatives $F_x$, an arbitrary function: 
\beq &Re(f_2) = C_3F_x.& \label{G0SharpA} \eeq
\noindent While if $F_y \neq 0$ and arbitrary, we find that 
\beq F_x = C_2F_y,~Re(f_2) = [C_3 +\frac{C_0}{4} ln(F_y)]F_y. \label{G0SharpB} \eeq
\noindent The choice of these functions is reflected in the structure of the invariants. Supposing that $F_y = 0$, we may express $N_1$ in terms of $N_0 = Re(f_2)e^{-2a-2\ba}$, 
\beq &N_1 = \left[ \frac{C_3}{|\tau|} \right] N_0.& \nonumber \eeq
\noindent While if $F_y \neq 0$ we find that $N_0$ and $N_1$ may be expressed in terms of $N_2 = F_y e^{-2a-2\ba}$,
\beq \begin{aligned} & N_0 = N_2(C_0|\tau|^{-1}+2C_1 + ln(N_2/2)),~~ N_1 = \left[ \frac{C_2}{|\tau|} + \frac{C_0^2|\tau|^2-2}{C_0|\tau|^2} \right] N_2.& \label{q4A} \end{aligned} \eeq

Regardless of whether $F_y \neq 0$ or not, the third second order invariant arising here is of the form
\beq \tilde{N} = F_0(u)e^{-2a-2\ba}. \nonumber \eeq
\noindent The frame derivatives of this invariant produce only one new functionally independent invariant,
\beq \sqrt{\frac{2}{|\tau|}} \Delta \tilde{N} = F_{0,u} e^{-3a-3\ba} \nonumber \eeq
\noindent To determine the full class of $(0,1,3,4,4)$ vacuum Kundt waves, we must avoid those functions $F_0$ which give the invariant count $(0,1,3,3)$, this can only happen if $F_0$ is constant or when it satisfies the following differential equation,
\beq F_{0,u} = -2 \sqrt{C^{-1}_4} F_0^{\frac32} \nonumber \eeq
\noindent by integrating one finds that $F_0 = C_4 u^{-2}$. 

In the case that $F_0$ is constant, all of the metric functions in \eqref{AclassNoG} are independent of $u$ and hence this is a $G_1$ metric with no $u$-dependence. In the other case, we may make a coordinate transformation: $u' = h(u),~~v' = \frac{v}{h_{,u}} + \frac{h_{,uu}}{2h_{,u}^2\tau|^2},~~h_{,u} = u^{-1}.$
\noindent Dropping the primes, in these new coordinates the $(0,1,3,3)$ metrics with $F_y =0$ are now of the form 
\beq & f(\z,u) = \frac{C_0^2}{16} e^{\frac{-4i(\z+\frac{iC_0u}{2}+C_1)}{C_0}} + C_2 \z + C_3& \label{Aclass:G1a} \eeq
\noindent while those metrics with $F_y \neq 0$ are
\beq & f(\z,u) = \frac{C_0^2}{16} e^{\frac{-4i(\z+\frac{iC_0u}{2}+C_1)}{C_0}} + C_2\z + iC_3\left(\z+\frac{iC_0u}{2}\right) + C_4.& \label{Aclass:G1b} \eeq
\noindent From \cite{SGP} we conclude these are all  $G_1$ spacetimes.
\end{proof}

\noindent We conclude this section with the result that the sharpness of the upper bound has been confirmed.

\end{section}

\begin{section}{Uniqueness of the Vacuum Kundt Waves with $q=4$ in the Karlhede Algorithm}

From the invariant count trees in Appendix D, the remaining possibilities for vacuum Kundt waves to attain $q=4$ outside of the class with $(0,1,3,4,4)$ are: $(0,1,2,3,3), (0,1,2,4,4), (0,2,2,3,3), (0,2,2,4,4),(0,2,3,4,4),$ and $(0,3,3,4,4)$. It was proven in section \ref{Leq5}  that the first four cases cannot occur due to lemma \ref{lem:AclassInvCount} and lemma \ref{lem:Bclass}, respectively. The last case may be ignored by applying lemma \ref{lem:Noq4Cclass} to show that the vacuum Kundt waves with invariant count $(0,3,3,4,4)$ cannot occur.  
Thus we need only investigate the existence of the $(0,2,3,4,4)$ vacuum Kundt waves to determine the uniqueness of the vacuum Kundt waves with $q=4$.  In this case  $\gamma$ may be set to zero and the invariant coframe is entirely fixed. To continue, we examine the second order invariants arising from the frame derivatives \eqref{Icoframe} applied to the simpler set of invariants: \beq &a-\ba = \frac12 ln \left(\frac{\bar{\tau}}{\tau}\right), \z_{,a} = \sqrt{\frac{\bar{\tau}}{\tau}} (\bar{\alpha}-4\tau)^{-1},~~ \z+\bz=|\tau|^{-1}.& \nonumber \eeq \noindent By direct calculation we may prove the following proposition. 
\begin{prop} \label{prop:KW2ndOInv}
 For all vacuum Kundt waves with $|\alpha| \neq \frac54 |\tau|$, the second order Cartan invariants with no functional dependence on the previous invariants consist of the spin-coefficients $\mu'$, $\lambda'$, $\nu'$:  and the frame derivatives: \beq &\sqrt{\frac{|\tau|}{2}}M_0 = |\z_{,a}|^2 \Delta (a-\ba),~~\sqrt{\frac{|\tau|}{2}}M_1 =\z_{,a} \Delta \z_{,a},& \nonumber \\ &\sqrt{\frac{|\tau|}{2}}M_2 = \Delta (\z+\bz),  \sqrt{\frac{\tau}{\bar{\tau}}} M_3 = \z_{,a} \bar{\delta} \z_{,a}. & \nonumber \eeq 
\noindent In the coordinate system with $a=\frac14 ln(f_{,\z\z})$, these invariants take the form:
\beq   M_0 &=&  - e^{-a-\ba} ( \z_{,u}\bz_{,\ba} - \bz_{,u}\z_{,a} ) + \frac{\tau}{\bar{\tau}} B' \bz_{,\ba} - \frac{\bar{\tau}}{\tau} \bar{B}'\z_{,a},   \nonumber \\
 M_1 &=&  e^{-a-\ba} (\z_{,au}\z_{,a}-\z_{,u} \z_{,aa} ) + \frac{\tau}{\bar{\tau}} B' \z_{,aa},   \nonumber \\
 M_2 &=& \frac{\tau}{\bar{\tau}} B'+ \frac{\bar{\tau}}{\tau} \bar{B}', \nonumber \\ 
 M_3 &=& \z_{,aa},  \nonumber \\
 \lambda' &=& \frac{\bar{B} \bar{\tau}}{2} + \sqrt{\frac{\bar{\tau}}{\tau}} \frac{2\bar{B}}{\bar{\z}_{,\ba}} + \bar{B} \pi+ \bar{B} D\bar{B} + \bar{\delta} \bar{B},  \nonumber \\ 
 \mu' &=& \frac{\bar{B} \tau}{2} + B \pi +  B D\bar{B} + \delta \bar{B},  \nonumber \\
 \nu' &=& \nu + 2\bar{B} \gamma + \frac32 \bar{B}^2 \tau + B\bar{B}(\pi + 2\alpha) + \Delta \bar{B} + \bar{B} \delta \bar{B} + B \bar{\delta} B + B \bar{B} D\bar{B}  \nonumber \eeq 

\noindent where the unprimed spin-coefficients are defined in \eqref{SpinCoef1} and $B'=\sqrt{\frac{\bar{\tau}}{\tau}}\sqrt{\frac{2}{|\tau|}}B$ where $B$ is the null rotation parameter setting $\gamma' = 0$ in \eqref{NullRot}
\beq &B' = e^{-a-\ba} \left[ DB' v - \frac54 \left( \frac{\bz_{,u}}{\bz_{,\ba}} - \frac{\z_{,u}}{\z_{,a}} \right) + \frac{\bz_{,u}}{\bz_{,\ba} |\tau|^2} DB' \right], & \nonumber \\ 
& DB' = \frac{16|\tau|^2}{25|\tau|^2-16|\alpha|^2} \left( |\tau| + \frac{1}{\bz_{,\ba}} \right). & \nonumber \eeq   
\end{prop}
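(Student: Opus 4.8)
The plan is to prove Proposition \ref{prop:KW2ndOInv} by a direct but organized computation of the second-order Cartan invariants, exploiting the structure already set up in the preceding sections. Since the proposition concerns all vacuum Kundt waves with $|\alpha| \neq \frac54|\tau|$, the invariant coframe is completely determined: it is obtained from the $\Psi_4 = 1$ coframe \eqref{Kacoframe} by the unique null rotation \eqref{NullRot} with parameter $B$ fixed by $\gamma' = 0$. So the first step is simply to write out the second-order invariants as the frame derivatives $\{D', \Delta', \delta', \bar\delta'\}$ (dual to \eqref{Icoframe}) acting on the first-order invariants, together with the spin-coefficients $\pi', \mu', \lambda', \nu'$ that appear as components of $D^2\Psi$ in the list given just before Proposition \ref{prop:SpinCoef1}. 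The transformation laws for $\mu', \lambda', \nu'$ are already recorded in \eqref{InvSpinC}, so those three require no further work beyond transcription.

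The substantive part is to isolate which combinations of second-order quantities are \emph{functionally independent of the first-order invariants} $\{\tau, \bar\tau, \alpha, \bar\alpha\}$ — equivalently, of $\{|\tau|^{-1} = \z+\bz,\ \sqrt{\tau/\bar\tau} = e^{\ba-a},\ \z_{,a}\}$. The natural first-order invariants to differentiate are $a - \ba$, $\z_{,a}$, and $\z + \bz$, since these generate the first-order algebra; applying $\Delta'$ and $\bar\delta'$ (the $D'$ and $\delta'$ derivatives produce nothing new, as $D$ is $\partial_v$ up to factors and $\delta$ only introduces $\partial_{\ba}$ and $\partial_v$ pieces already controlled) and stripping off all terms that are functions of $\tau, \bar\tau, \z_{,a}$ yields the four scalars $M_0, M_1, M_2, M_3$. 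Here I would use the explicit frame \eqref{Kaframe}: $\Delta$ brings in $\partial_u$ together with $\partial_v$ and $\partial_a, \partial_{\ba}$ correction terms, so $\Delta(a-\ba)$, $\Delta \z_{,a}$, $\Delta(\z+\bz)$ generate the $\z_{,u}\bz_{,\ba} - \bz_{,u}\z_{,a}$, $\z_{,au}\z_{,a} - \z_{,u}\z_{,aa}$, and pure-$B'$ pieces respectively, and $\bar\delta \z_{,a}$ generates $\z_{,aa}$. The normalization factors $\sqrt{|\tau|/2}$, $\sqrt{\tau/\bar\tau}$ in the statement are exactly the boost/spin weights picked up in passing between the coordinate frame and the invariant frame, and multiplying by $|\z_{,a}|^2$ or $\z_{,a}$ removes the residual conformal factors so that the $M_i$ are honest scalars. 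Writing $B' = \sqrt{\bar\tau/\tau}\,\sqrt{2/|\tau|}\,B$ packages the null-rotation parameter into a form adapted to these frame weights, and the displayed formula for $B'$ (and $DB'$) follows by solving $\gamma' = 0$ in \eqref{NullRot} and using \eqref{SpinCoef1} for $\gamma$, $\alpha$, $\tau$.

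The remaining check is completeness: one must verify that every second-order invariant in the full list of $(D^2\Psi)$ components, after the transformation laws are applied and first-order-dependent terms are discarded, reduces to a function of $\{M_0, M_1, M_2, M_3, \mu', \lambda', \nu'\}$ and the first-order invariants — i.e., that nothing was missed. This is a matter of going through the thirteen nonzero components listed before Proposition \ref{prop:SpinCoef1}, substituting $\beta = \tau/4$, $\mu = \lambda = 0$, $\bar\pi = -\tau$ from \eqref{SpinCoef1}, and observing that the surviving content is carried by exactly the derivatives of $a-\ba$, $\z_{,a}$, $\z+\bz$ and by $\pi', \mu', \lambda', \nu'$ (with $D\bar B = \pi' - \pi$ already shown functionally dependent on $\tau$ in the $(0,1,2,\ldots)$ discussion, and the analogous fact holding in general since $DB'$ above is manifestly a rational function of $\tau, \bar\tau, \bz_{,\ba}$ and hence of the first-order invariants).

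The main obstacle I anticipate is the bookkeeping in the completeness step rather than any single hard identity: keeping track of which terms are "functionally dependent on the previous invariants" requires care, because combinations like $\z_{,aa}$ and $\z_{,au}$ are new while $\z_{,a}$ and $\z+\bz$ are not, and the $\Delta$-derivative mixes these with the metric function $f + \bar f$ (through the $\partial_v$ coefficient in \eqref{Kaframe}) in a way that only simplifies after one recognizes that the $f+\bar f$ and $v^2$ contributions are absorbed into $\nu'$ via $\Delta\bar B$, exactly as in \eqref{NewNu}. Once that absorption is made explicit, the four $M_i$ emerge cleanly and the proposition follows by direct calculation.
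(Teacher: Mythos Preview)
Your proposal is correct and follows essentially the same approach as the paper, which simply states ``By direct calculation we may prove the following proposition'' after identifying $a-\ba$, $\z_{,a}$, and $\z+\bz$ as the simpler first-order invariants to differentiate. Your write-up is in fact a considerably more detailed and better-organized account of that direct calculation than the paper itself provides.
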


To inquire into the uniqueness of the $q=4$ vacuum Kundt waves, we classify the vacuum Kundt waves with invariant count beginning with $(0,2,3,...)$. In order to identify this subclass we consider the quadruple wedge products of $ d(a-\ba) \wedge d\z_{,a} \wedge d M_{i} \wedge d M_{j}$ and $d(a-\ba) \wedge d |\tau|^{-1} \wedge d M_{i} \wedge d M_{j}$. If there are only three functionally independent invariants at second order, all twelve quadruple wedge products must vanish.
\begin{lem} \label{lem:BAmetric} 
 The vacuum Kundt wave metrics with an analytic function of the form \eqref{BAmetric}: \beq &\tilde{f}(\z,u) = - \frac{F(u)^2}{16}e^{\frac{4(\z-F_0(u))}{iF(u)}} + g(u) \z + g_0(u)& \nonumber \eeq
\noindent have the invariant count $(0,2,4,4)$.  
\end{lem}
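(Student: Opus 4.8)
The plan is to work in the $a = \tfrac14\ln(f_{,\z\z})$ coordinate system and explicitly compute the relevant spin-coefficients and frame derivatives for the specific metric function \eqref{BAmetric}, then track the growth of the functionally independent invariant count through orders $0$, $1$, $2$, $3$. Since the metric function \eqref{BAmetric} corresponds to Case 1 of Lemma \ref{lem:Bclass}, with $\z = iF(u)a + f_0(u)$ as in \eqref{BAzed}, I would first substitute this form of $\z(a,u)$ into the first-order Cartan invariants $\tau$ and $\alpha$ from \eqref{SpinCoef1}. The condition $\z_{,a} = iF(u)$ (so $\z_{,a} = -\bz_{,\ba}$) gives $\alpha = \tfrac{\bar\tau}{4} + \sqrt{\bar\tau/\tau}\,(\bz_{,\ba})^{-1}$ with $\bz_{,\ba} = -iF$, and one checks directly whether $|\alpha| = \tfrac54|\tau|$; by Lemma \ref{lem:Collins} this equality is forbidden for the $(0,2,\ldots)$ subclass, so in fact $|\alpha|\neq\tfrac54|\tau|$ here, meaning $\gamma$ can be set to zero and the invariant coframe is completely fixed at first order. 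Then I would confirm that exactly two of $\{\tau,\bar\tau,\alpha,\bar\alpha\}$ are functionally independent (consistent with the count starting $(0,2,\ldots)$), which follows because $\z(a,u)$ depends on both $a$ and $u$ nontrivially through $F(u)$ and $f_0(u)$.

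The heart of the argument is the second-order step: I would invoke Proposition \ref{prop:KW2ndOInv}, which lists the candidate second-order invariants as the spin-coefficients $\mu',\lambda',\nu'$ together with the frame derivatives $M_0, M_1, M_2, M_3$. Plugging $\z_{,aa} = 0$ (since $\z$ is affine in $a$) immediately kills $M_3 = \z_{,aa} = 0$ and simplifies $M_1$. I would then evaluate $M_0, M_2, \mu', \lambda', \nu'$ explicitly using $\z_{,a} = iF$, $\z_{,u} = iF_{,u}a + f_{0,u}$, and the expressions for $B'$ and $DB'$ in Proposition \ref{prop:KW2ndOInv}. The claim $(0,2,4,4)$ means that \emph{two} genuinely new functionally independent invariants appear at second order (bringing the total to $4$), and then nothing new appears at third order. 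So the key computation is to show that among $\{M_0, M_1, M_2, \mu', \lambda', \nu'\}$, modulo functional dependence on the two first-order invariants, precisely two are independent — presumably one "$v$-direction" invariant (coming from the $v$-linear part of $\nu'$ or $B'$) and one "$u$-direction" invariant (built from $F(u)$, $F_{,u}$, $f_0(u)$, $g(u)$, $g_0(u)$). I would verify this by computing the relevant triple and quadruple wedge products of $d(a-\ba)$, $d\z_{,a}$, $d|\tau|^{-1}$ and two of the $M_i$'s, showing that all but a rank-$4$ subcollection vanish identically, while at least one quadruple wedge product (with the right pair) is nonzero.

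Finally, I would handle the third-order termination: one must show that every frame derivative (via $D', \Delta', \delta', \bar\delta'$ from \eqref{Icoframe}) of the two new second-order invariants is functionally dependent on the four invariants already in hand. Here the structure $e^{-2a-2\ba}\times(\text{function of }u)$ that recurs for such invariants (as seen in the proof of Lemma \ref{lem:AclassInvCount} and Theorem \ref{Aclassq4}) should make the $D'$ and $\delta'$ derivatives collapse onto previous invariants, while the $\Delta'$ derivative produces at worst something of the form $(\text{new }u\text{-function})e^{-3a-3\ba}$, which is functionally dependent on the $u$-invariant already present unless that $u$-function is forced into a special form. The main obstacle I anticipate is the bookkeeping in the second-order step: disentangling which combinations of $M_0, M_1, M_2, \mu', \lambda', \nu'$ are functionally independent requires carefully separating the $v$-dependence from the $a,\ba,u$-dependence and then showing the resulting $4\times(\text{many})$ Jacobian has rank exactly $4$ — this is where the wedge-product computations get heavy, and where I would need to be careful that the arbitrary functions $F, f_0, g, g_0$ in \eqref{BAmetric} do not secretly reduce the rank for special choices (if they do, those special choices would need to be split off as separate subcases, but the statement asserts the generic count $(0,2,4,4)$ holds, so I expect the rank to be stable).
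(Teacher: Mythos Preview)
Your plan is essentially correct and follows the same architecture as the paper's proof, but you are making it harder than necessary in two places.

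First, the paper's proof exploits the defining feature of the \eqref{BAmetric} subclass, namely $\z_{,au}\neq 0$, i.e.\ $F_{,u}\neq 0$ (this is precisely what distinguishes \eqref{BAmetric} from \eqref{BBmetric}). Once you know $F_{,u}\neq 0$, the invariant $\z_{,a}=iF(u)$ can be locally inverted to recover $u$, so every function of $u$ alone is functionally dependent on the two first-order invariants $a-\ba$ and $\z_{,a}$. The paper then reads off the two new second-order invariants directly: from $M_1$ (after subtracting $M_2$) one obtains $e^{-a-\ba}FF_{,u}$, and since $FF_{,u}$ is a function of $F$, this isolates $a+\ba$; and $M_2$ is visibly linear in $v$ with a nonvanishing coefficient, which isolates $v$. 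No wedge-product bookkeeping is needed, and the worry about ``special choices of $F,f_0,g,g_0$ reducing the rank'' evaporates: the only degeneration is $F_{,u}=0$, which belongs to the \eqref{BBmetric} case by construction.

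Second, your third-order step is unnecessary. Once the count reaches $4$ functionally independent invariants in a four-dimensional spacetime, it cannot increase further; and since you have already argued that the isotropy is fully fixed at first order ($|\alpha|\neq\tfrac54|\tau|$), the algorithm terminates automatically at the next iteration with count $(0,2,4,4)$. There is nothing to check about frame derivatives of the second-order invariants.
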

\begin{proof}
To start, we make a coordinate transformation to remove the imaginary part of $f_0$ in \eqref{BAmetric} via the transformation $ u' = h(u),~~v' = \frac{v}{h_{,u}} - \frac{h_{,uu}}{2h_{,u}^2 |\tau|^2},~~ h_{,u} = e^{-\frac{2Im(f_0)}{F(u)}}.$
\noindent Writing $\z(a,u) = iF(u)a+F_0(u)$, we find that the first order invariants arising from $\tau$ and $\alpha$ are 
\beq \frac12 ln(\bar{\tau}/\tau) = a-\ba,~~\z_{,a} =i F(u),~~|\tau|^{-1} = iF(a-\ba)+F_0.  \label{BAmetricInv} \eeq
\noindent As $F' \neq 0$ in order to avoid metrics of the form \eqref{BBmetric}, we take its inverse locally and express all other functions of $u$ in terms of it. \beq F_0 = \mathfrak{F}_0(F). \nonumber \eeq
\noindent Thus we are left with $a-\ba$ and $\z_{,a}$ as invariants. Noting that $M_1 = M_1' + M_2$, where $M_1$ is  \beq M_1' = e^{-a-\ba} F F_{,u} = e^{-a-\ba} \mathfrak{F}(F). \nonumber \eeq
\noindent Removing the $u$-dependent piece, we may solve for $a-\ba$ as a third functionally independent invariant. Taking $M_2$ in proposition \ref{prop:KW2ndOInv} we eliminate all terms dependent on $a,\ba$ and $u$ leaving $v$ as the last invariant at second order to complete the set $\{ a-\ba, a+\ba, F(u),v\}$ with the spin-coefficients at first and second order acting as the classifying manifold along with the frame derivatives of $v$ and $a+\ba$.
\end{proof}

With this case eliminated, we may consider those vacuum Kundt waves with $\z_{,au} = 0$. The vanishing of the quadruple wedge products produce six equations \beq &M_{i,u} M_{j,v} - M_{i,v} M_{j,u} = 0.& \label{4wwij} \eeq

\begin{lem} \label{lem:BBmetric} 
 The vacuum Kundt wave metrics with analytic function of the form \eqref{BBmetric}: \beq &\tilde{f}(\z,u) = \frac{c^2}{16} e^{4\left( \frac{\z}{c}-\frac{iF_1(u)}{|c|^2}\right)} + g_1(u) \z + g_2(u),~~Re(c)\neq 0,& \nonumber \eeq
\noindent have the invariant count $(0,2,4,4)$ except in the subclass of these metrics with  
\beq &\tilde{f}(\z,u) = \frac{c^2}{16} e^{\frac{4(\z-C_0-iC_1u)}{c}} + c_2\z + Im(c_2)C_1u+C_3& \nonumber \eeq   
\noindent which have the invariant count $(0,2,3,3)$
\end{lem}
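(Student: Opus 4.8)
The plan is to follow the same strategy used in the proof of Lemma~\ref{lem:BAmetric}: substitute the explicit form of the metric function \eqref{BBmetric} into the invariants of Proposition~\ref{prop:KW2ndOInv}, and count functional dependencies via the vanishing of the quadruple wedge products. First I would record the first-order invariants for this branch. Since $\z(a,u) = ca + f_1(u)$ from \eqref{BBzed} with $\mathrm{Re}(c)\neq 0$, we get $\z_{,a} = c$ (a constant, hence no longer an invariant), $\frac12\ln(\bar\tau/\tau) = a - \ba$, and $|\tau|^{-1} = \z + \bz = c(a+\ba) + \bar c(a+\ba)$ shifted by the $u$-dependent term coming from $f_1$; more precisely $|\tau|^{-1} = 2\mathrm{Re}(c)(a+\ba)/\,\ldots$ together with the real combination of $f_1(u)$. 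Because $\z_{,aa} = 0$, the invariant $M_3 = \z_{,aa}$ vanishes, and several of the $M_i$ collapse. The key point separating the generic case from the exceptional one is whether the remaining $u$-dependence in $|\tau|^{-1}$ (through $f_1$) is ``live'': if $F_1(u)$ can be inverted locally to express the other arbitrary functions $g_1, g_2$ in terms of it — exactly as $F_0 = \mathfrak{F}_0(F)$ was done in Lemma~\ref{lem:BAmetric} — then $u$ is recovered as an invariant and we pick up a fourth functionally independent invariant, giving $(0,2,4,4)$.

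Next I would examine the subclass where this inversion fails, i.e.\ where $F_1(u)$ is constrained so that fewer independent invariants survive. Here I expect the analysis to parallel the $F_0 = Cu^{-2}$ degeneracy in Theorem~\ref{Aclassq4}: one writes down the single surviving candidate second-order invariant beyond $\{a-\ba,\, |\tau|^{-1}\}$ — this will be the combination of $M_1, M_2$ and the transformed spin-coefficients $\mu',\lambda',\nu'$ that is not already a function of $a-\ba$ and $|\tau|^{-1}$ — and demands that its frame derivatives produce no new independent invariant. The vanishing of the relevant $M_{i,u}M_{j,v} - M_{i,v}M_{j,u}$ expressions in \eqref{4wwij} then yields an ODE for $F_1(u)$ (and companion constraints linking $g_1$, $g_2$, $f_1$). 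Solving that ODE, after possibly absorbing an overall reparametrization via a coordinate transformation \eqref{KWcoordtransf}, I expect to find $F_1$ linear in $u$ — i.e.\ $f_1 \propto C_0 + iC_1 u$ — together with $g_1 = c_2$ constant and $\mathrm{Re}(g_2)$ (equivalently the real part of $g_2$) forced to the stated form $\mathrm{Im}(c_2)C_1 u + C_3$. This is precisely the metric displayed in the statement, and for it the candidate invariant becomes a function of the previous ones at third order, so the count terminates as $(0,2,3,3)$.

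The main obstacle will be the bookkeeping in the wedge-product computation: after the null rotation fixing $\gamma' = 0$ (legitimate here because $|\alpha|\neq\tfrac54|\tau|$, which holds by Corollary~\ref{upper bound5}-type reasoning once we are outside the $(0,2,3,\dots)$-with-equality branch killed in Lemma~\ref{lem:BclassInvCount}), the parameter $B$ and its frame derivatives $DB, \delta B, \bar\delta B, \Delta B$ must be evaluated for this specific $\z(a,u)$, fed into $\mu', \lambda', \nu'$ via \eqref{InvSpinC}, and then the four-forms $d(a-\ba)\wedge d|\tau|^{-1}\wedge dM_i\wedge dM_j$ expanded in the coordinate $4$-form basis $\{da, d\ba, du, dv\}$. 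Isolating which single coefficient (analogous to the $da\wedge du\wedge dv$ coefficient used repeatedly in Section~\ref{Leq5}) gives the governing ODE, and checking that the derived constraints are simultaneously consistent, is where care is needed; but structurally it is the same machinery already deployed, so no genuinely new idea is required beyond identifying the correct invariant combination to differentiate.
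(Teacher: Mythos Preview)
Your overall strategy---compute the $M_i$ from Proposition~\ref{prop:KW2ndOInv} for $\z(a,u)=ca+f_1(u)$, feed them into the quadruple wedge products \eqref{4wwij}, and extract ODEs---matches the paper's. The exceptional-case outcome you anticipate (linear $F_1$, constant $g_1$, affine $\mathrm{Re}(g_2)$) is also correct.

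There is, however, a genuine gap in how you handle the generic $(0,2,4,4)$ case. Your analogy with Lemma~\ref{lem:BAmetric} breaks down: there $\z_{,a}=iF(u)$ was itself a non-constant first-order invariant, so one could invert $F$ and recover $u$. Here $\z_{,a}=c$ is a constant, so no first-order invariant carries $u$ directly; $|\tau|^{-1}$ mixes $\mathrm{Re}(f_1(u))$ with $a+\ba$ and cannot be ``inverted'' for $u$. The paper instead runs the argument the other way: it \emph{assumes} the count is $(0,2,3,\dots)$, then shows the wedge-product constraints force, in sequence, (i) $|\tau|_{,u}=0$, hence $\mathrm{Re}(f_1)'=0$, so that $a-\ba$ and $a+\ba$ become the two first-order invariants; (ii) $\mathrm{Im}(f_1)_{,uu}=0$; and (iii) via the invariant $V$ extracted from $\nu'$ (the only remaining $u$-dependent spin-coefficient), $g_1$ constant and $\mathrm{Re}(g_2)=\mathrm{Im}(c_2)C_1u+C_3$. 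The $(0,2,4,4)$ case is then simply the complement.

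You should therefore reorganize: do not argue the generic case first by an inversion that is unavailable; instead impose the $(0,2,3,\dots)$ hypothesis and derive the two-stage reduction $\mathrm{Re}(f_1')=0$ then $\mathrm{Im}(f_1)_{,uu}=0$, followed by the $\nu'$ analysis. Also note that $M_1=M_3=0$ here (since $\z_{,aa}=0$), so only the single wedge product built from $M_0$ and $M_2$ is in play---which is why the constraints emerge one at a time rather than as a system.
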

\begin{proof}
 We first examine the possibility of invariant counts of the form $(0,2,3,...)$ using the metric function \eqref{BBmetric}. In this case the function is $\z(a,u) = ca+f_1(u)$, we find that the first order invariants arising from $\tau$ and $\alpha$ are 
\beq a-\ba,~~\z_{,a} =c,~~|\tau|^{-1} = Re(c)(a+\ba)+iIm(c)(a-\ba)+Re(f_1).  \label{BBmetricInv} \eeq
\noindent At second order, $M_3 = M_1 = 0$, thus there is only one quadruple wedge product giving constraints on the metric functions. Mutiplying $cM_2$ and adding it to $M_0$ gives a useful invariant 
\beq \frac{M_0'}{c+\bar{c}} = -\frac{e^{-a-\ba}}{c+\bar{c}} Im(\z_{,u}\bar{c})+  \frac{\tau}{\bar{\tau}}B'. \nonumber \eeq
\noindent To calculate the wedge product we scale $M_0'$ and $M_2$ and use the following quantities:  \beq &M_0'' = M_0' \frac{25|\tau|^2-16|\alpha|^2}{16|\tau|^2 (c+\bar{c})}~~M_2' = \frac{M_2}{\frac{\tau}{\bar{\tau}}DB' + \frac{\bar{\tau}}{\tau}D\bar{B}'}.& \nonumber \eeq 
\noindent Substituting into equation \eqref{4wwij} and differentiating the whole expression by $v$ we find  the simpler constraint, $|\tau|_{,u} = 0$, 
\noindent from which we find that $Re(f_1')=0$ implying that $a-\ba$ and $a+\ba$ are the only first order invariants. 

Returning to the original invariants $M_0'$ and $M_2$, substituting into equation \eqref{4wwij} and denoting $Im(f_1) = F_y$ we find that this becomes, 
\beq iF_{y,uu}\left[ \frac{\tau}{\bar{\tau}} DB' + \frac{\bar{\tau}}{\tau} D\bar{B}'+\frac{5(c+\bar{c})}{4|c|^2}(\frac{\tau}{\bar{\tau}} DB' + \frac{\bar{\tau}}{\tau} D\bar{B}') + \frac{c+\bar{c}|DB'|^2}{|c|^2|\tau|^2}\right]. \nonumber \eeq
\noindent As before, setting this equation to zero we find that $F_y = Im(f_1) = C_1u$. Substituting the form of $f_1= C_0 +iC_1u$ into $B'$ in proposition \ref{prop:KW2ndOInv}, one may solve for $v$ in $M_2$ as the last functionally independent invariant. 

The only new functionally independent invariant arises in $\nu$ in \eqref{InvSpinC}, as this is the only function retaining $u$-dependence. Due to the formula for $\nu$ in \eqref{SpinCoef1} we may work with the simpler invariant, 
\beq  V &=& |\tau|^{-1}g_1 + g_1\z + \bar{g}_1 \bz + 2Re(g_2) \label{Vnu} \eeq
\noindent Denoting $g_1 = G_x+iG_y$. Taking the wedge product $da \wedge d\ba \wedge dv \wedge dV$ and and equating this to zero we find that $g_1 = G_x + iIm(c_2)$, $Re(g_2) = Im(c_2)C_1u$ and $G_{x,u} = 0$ and so $g_1 = c_2 \in \mathbb{C}$. As all $u$-dependence has been removed from the invariants, it is clear this is a $G_1$ space; the classifying manifold consists of the first order and second order invariants in terms of $a,\ba$ and $v$ along with the frame derivatives of $v$.

In the $(0,2,4,4)$ case, we may replace the complex-valued $f_1$ in \eqref{BBmetric} with a real-valued function of $u$. To do so, we apply the following coordinate transformation
\beq & u' = h(u),~~v' = \frac{v}{h_{,u}} - \frac{h_{,uu}}{2h_{,u}^2 |\tau|^2},~~ h_{,u} = e^{-\frac{2}{|c|^2}(Re(f_1)Re(c)+Im(f_1)Im(c))}.& \nonumber \eeq 
\noindent Then by making the gauge transformation, $F_1 = -Re(f_1)Im(c)+Im(f_1)Im(c)$, we recover the desired form.
\end{proof}

\begin{lem} \label{lem:BCmetric} 
 The vacuum Kundt wave metrics with analytic function of the form \eqref{BCmetric}: \beq &\tilde{f}(\z,u) = f_2(\z-C - i F_3(u))+g_3(u)\z + g_4(u) & \nonumber \eeq
\noindent have the invariant count $(0,2,4,4)$ except in the subclass of these metrics with 
\beq \tilde{f}(\z,u) = f_2(\z-C - i C_0u)+c_1\z + Im(c_1)C_0u)+C_2  \nonumber \eeq   
\noindent which have the invariant count $(0,2,3,3)$.
\end{lem}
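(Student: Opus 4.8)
The plan is to mirror the treatment of Lemmas~\ref{lem:BAmetric} and~\ref{lem:BBmetric}, adapting it to Case 3 of Lemma~\ref{lem:Bclass}, i.e.\ to $\z_{,aa}\neq 0$ together with $\z_{,au}=0$. First I would take $\z(a,u)$ in the form \eqref{BCzed}, $\z=\dot f_2^{-1}(a)+iF_3(u)+C$ with $F_3$ real, apply the normalizations of the type \eqref{KWcoordtransf} used in the previous lemmas, and record the first-order invariants. From $\tau$ and $\alpha$ in \eqref{SpinCoef1} one reads off $a-\ba=\tfrac12 ln(\bar\tau/\tau)$, together with $|\tau|^{-1}=\z+\bz$ and $\z_{,a}$, which are functions of $a$ alone through $\dot f_2^{-1}$; since $\z_{,aa}\neq 0$ these determine $a$ locally (away from degenerate $f_2$). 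Hence the invariant count starts $(0,2,\dots)$. I will work in the regime $|\alpha|\neq\tfrac54|\tau|$, the equality case being excluded by a computation parallel to that of Lemma~\ref{lem:BclassInvCount}, so that the null rotation \eqref{NullRot} setting $\gamma'=0$ fixes the frame completely and Proposition~\ref{prop:KW2ndOInv} applies.

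Next, at second order I would specialize Proposition~\ref{prop:KW2ndOInv} to $\z_{,au}=0$. Then $M_3=\z_{,aa}$ is a function of $a$ alone, hence functionally dependent on the first-order data, while $M_0,M_1,M_2$ survive; because $\z+\bz$ and $DB'$ depend only on the first-order invariants, each of $M_0,M_1,M_2$ is a known function of $\{a,\ba\}$ plus terms linear in $v$ and linear in $F_{3,u}$. Since $\z_{,au}=0$, the only quadruple wedge products that can fail to vanish are the six of the form \eqref{4wwij}, $M_{i,u}M_{j,v}-M_{i,v}M_{j,u}=0$. Solving these — differentiating through by $v$ first, as in Lemma~\ref{lem:BBmetric}, to peel off the simplest consequences — I expect to reach the dichotomy stated in the lemma: generically the three $M_i$ let one solve both for $v$ and for $F_{3,u}$, so if $F_{3,u}$ is non-constant it furnishes a fourth functionally independent invariant already at second order, and the algorithm (isotropy trivial, spacetime four-dimensional) terminates with count $(0,2,4,4)$, the surviving arbitrary function being normalized by one last transformation \eqref{KWcoordtransf} as at the close of Lemma~\ref{lem:BBmetric}.

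In the exceptional case the wedge equations force $F_3$ to be affine in $u$, i.e.\ $F_3=C_0u$, together with $Re(g_3')=0$ and then (after the same transformations) $g_3=c_1$ constant and $Re(g_4)=Im(c_1)C_0u+C_2$, which is exactly the distinguished metric in the statement. Here the second-order invariants recover only $v$, so the count reads $(0,2,3,\dots)$ and I must pass to third order. As in Lemma~\ref{lem:BBmetric}, the only place where $u$-dependence can re-enter is through $\nu$ in \eqref{InvSpinC}--\eqref{SpinCoef1}, for which the simpler invariant $V=|\tau|^{-1}g_3+g_3\z+\bar g_3\bz+2Re(g_4)$ is available; substituting $g_3=c_1$ and $Re(g_4)=Im(c_1)C_0u+C_2$, one checks that $V$ and the frame derivatives of $v$ and of the first-order invariants produce nothing new, so the count stabilizes at $(0,2,3,3)$ and the distinguished metrics are plainly $G_1$ spacetimes.

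The main obstacle I anticipate is the bookkeeping of the six equations \eqref{4wwij}: unlike Case 2, here $M_1$ is nontrivial (because $\z_{,aa}\neq 0$), so there are genuinely more relations to reconcile, and one must isolate precisely which combination of vanishing $u$-derivatives of $F_3$, $g_3$ and $g_4$ cuts out the subclass. Crucially, in that subclass one must verify at third order that no fourth functionally independent invariant sneaks in — otherwise the count would be $(0,2,3,4,4)$, which would obstruct the uniqueness argument of this section; so the $\nu$/$V$ computation is the delicate step, since it is exactly what separates termination at $(0,2,3,3)$ from an unwanted extra iteration.
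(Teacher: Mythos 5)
Your overall route is the paper's: read off the first-order invariants $a$, $\ba$ from $\tau$ and $\alpha$ (using $\z_{,aa}\neq 0$ to invert $\z_{,a}$), pass to the second-order invariants of Proposition \ref{prop:KW2ndOInv}, impose the quadruple wedge conditions \eqref{4wwij} to force $F_3=C_0u$, and then use the $\nu$-derived invariant $V$ of \eqref{Vnu} to separate $(0,2,4,4)$ from $(0,2,3,3)$. (The bookkeeping of the six equations that you anticipate is streamlined in the paper by replacing $M_0,M_1,M_2,M_3$ with $M_1'=M_1M_3^{-1}$ and its conjugate, after which only one quadruple wedge product carries information.)

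There are, however, two genuine slips in how you apportion the constraints. First, you assert that the wedge equations \eqref{4wwij} also force the real part of $g_3$ to be constant, $g_3=c_1$ and $Re(g_4)=Im(c_1)C_0u+C_2$. They cannot: the invariants $M_0,\dots,M_3$ and the rotation parameter $B'$ are built from $\z(a,u)$, i.e. from $f$ only through $f_{,\z\z}=e^{4a}$, so the linear terms $g_3\z+g_4$ never enter them. In the paper those conditions are obtained as \emph{necessary} conditions by demanding that $da\wedge d\ba\wedge dv\wedge dV$ vanish, with $V$ as in \eqref{Vnu}; your plan only verifies sufficiency (substituting the special $g_3,g_4$ and checking that $V$ yields nothing new), which does not establish the ``except'' clause of the lemma. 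Second, $\nu'$ is a second-order Cartan invariant (it appears among the components of the second derivative of the Weyl spinor, cf. \eqref{InvSpinC}), so if $V$ were functionally independent the count would already be $(0,2,4,4)$ at second order, not $(0,2,3,4,4)$ as you state; deferring the $V$ analysis to third order both mis-assigns the order at which the fourth invariant appears and would misclassify the intermediate family with $F_3=C_0u$ but non-constant $g_3$, which the lemma places in the $(0,2,4,4)$ class. Both slips are repairable, but as written the necessity direction of the classification is missing.
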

\begin{proof}
To start we determine the conditions for an invariant count of $(0,2,3,...)$. Here, the metric function is $\z(a,u) = Z(a)+C+iF_3(u)$ and we find that the first order invariants arising from $\tau$ and $\alpha$ are 
\beq \z_{,a},~~|\tau|^{-1} = \z(a)+\bz(\ba).  \label{BCmetricInv} \eeq
\noindent Locally we may take the inverse of $\z_{,a}$ to solve for $a$ and use it as an invariant. Similarly we may do this for the conjugate, and hence at first order $a$ and $\ba$ may be treated as invariants. At second order, $M_3 = \z_{,aa}$ gives no new information. If we define a new invariant $M_1'\ = M_1 M_3^{-1}$, $M_2 = M_1 + \bar{M}_1$ and $M_0 = \bz_{,\ba}M_1'-\z_{,a}\bar{M}_1'$, there is only one quadruple wedge product giving constraints on the metric functions. 

Taking the quadruple wedge product with $M_1'$ and its conjugate and substituting into equation \eqref{4wwij}, we obtain
\beq -iF_{3,uu}\left[ \frac{\tau}{\bar{\tau}}DB' + \frac{\bar{\tau}}{\tau}D\bar{B}'+\frac{5(\z_{,a}+\bz_{,\ba})}{4|\z_{,a}|^2}\frac{\tau}{\bar{\tau}}DB' + \frac{\bar{\tau}}{\tau}D\bar{B}' + \frac{\z_{,a}+\bz_{,\ba}|DB'|^2}{|\z_{,a}|^2|\tau|^2} \right]. \nonumber \eeq
\noindent As before, setting this equation to zero we find that $F_3 = C_0u$. Substituting $F_3(u)$ into $B'$ in proposition \ref{prop:KW2ndOInv} it is clear that one may peel away the terms and coefficients of the $v$-linear term in $M_2$ to produce $v$ as the last invariant. 

From  the remaining second order invariants, \eqref{InvSpinC} it is clear that the only new functionally independent invariant arises in $\nu$ as this is the only function retaining $u$-dependence. Denoting $g_3 = G_x+iG_y$ we may work with the simpler invariant $V$ in \eqref{Vnu}:
\beq  V &=& |\tau|^{-1}g_3 + g_3\z + \bar{g}_3 \bz + 2Re(g_4) \nonumber \eeq
\noindent Repeating the calculation of the wedge product  $da \wedge d\ba \wedge dv \wedge dV$ and equating this to zero we find that $g_3 = G_x+iIm(c_1)$, $Re(g_4) = Im(c_1)C_0u$ and$g_3 = c_1\in \mathbb{C}$. As all $u$-dependence has been removed from the invariants, it is clear this is a $G_1$ space, the classifying manifold consists of the first order and second order invariants in terms of $a,\ba$ and $v$ along with the frame derivatives of $v$. 
\end{proof}
\end{section}

\begin{section}{An Invariant Classification of Vacuum Kundt Waves}

In proving the sharpness of the lowered upper bound we exhausted all of the branches of the invariant-count tree starting with $(0,1,...)$. Employing the first order Cartan invariants, $\alpha$, $\tau$ and $\gamma$, we may eliminate several branches from the remaining invariant-count trees in figure \eqref{1KarlAlg}. In this section we examine the remaining possibilities to produce  figure  \eqref{KAOmega} which summarizes all possible invariant counts for the vacuum Kundt waves in the Karlhede algorithm.
 \begin{figure}[h] 
  \centering 
  \includegraphics[scale=0.35]{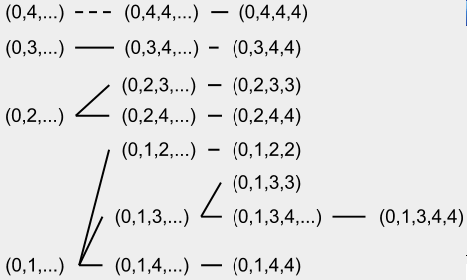} 
 \caption{All permissible invariant-count trees for the vacuum Kundt waves } \label{KAOmega}
\end{figure}

\begin{subsection}{Vacuum Kundt waves with $|\alpha| = \frac54 |\tau|$}

In the most general case, where a vacuum Kundt wave admits the following invariant counts, $(0,4,...)$, we may eliminate the scenario where $q=2$ by counting coordinates involved in the first order invariants.

\begin{lem} \label{lem:Noq2G0s}
All vacuum Kundt waves with invariant count $(0,4,...)$ must have $|\alpha| = \frac54 |\tau|$.  
\end{lem}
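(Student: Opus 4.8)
The plan is to show that a vacuum Kundt wave with invariant count $(0,4,\dots)$ — meaning four functionally independent invariants appear already at first order — is forced into the degenerate branch $|\alpha| = \frac54|\tau|$, by a dimension-counting argument at the level of the first-order Cartan invariants. Recall from the discussion after \eqref{NullRot} that when $|\alpha| \neq \frac54|\tau|$ the entire remaining isotropy (the null rotations about $\ell$) can be fixed at first order by setting $\gamma' = 0$, so in that case the only first-order Cartan invariants are $\alpha$, $\tau$ and their conjugates. Thus the first step is: assuming $|\alpha| \neq \frac54|\tau|$, the count of functionally independent first-order invariants is at most the number of functionally independent functions among $\{\alpha,\bar\alpha,\tau,\bar\tau\}$.

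The second step is to bound that number. Using the explicit spin-coefficients \eqref{SpinCoef1} in the coordinate system $a = \tfrac14\ln(f_{,\z\z})$, we have $\tau = -e^{\ba-a}/(\z+\bz)$ and $\alpha = \tfrac{\bar\tau}{4} + \sqrt{\bar\tau/\tau}\,(\bz_{,\ba})^{-1}$, where $\z = \z(a,u)$. Hence $\tau$, $\bar\tau$, $\alpha$, $\bar\alpha$ are all functions of the three real variables underlying $(a,\ba,u)$ — equivalently of $\mathrm{Re}(a)$, $\mathrm{Im}(a)$ and $u$ — with \emph{no} dependence whatsoever on $v$. Therefore at most three of them can be functionally independent. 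So if $|\alpha| \neq \frac54|\tau|$, the first-order invariant count is at most $3$, contradicting the hypothesis that it equals $4$. This forces $|\alpha| = \frac54|\tau|$, which is the assertion. (One should also note the degenerate branch does genuinely give a fourth invariant: there $\gamma$ — or its surviving real or imaginary part — cannot be removed, and by \eqref{SpinCoef1} $\gamma$ depends on $v$, so it can be functionally independent of $\alpha,\bar\alpha,\tau$, making the count $4$ achievable only in this branch.)

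Carrying this out, I would: (i) invoke the trichotomy after \eqref{NullRot} to reduce, in the non-degenerate case, to counting independence among $\{\alpha,\bar\alpha,\tau,\bar\tau\}$; (ii) read off from \eqref{SpinCoef1} that these four scalars are functions of $(a,\ba,u)$ only, hence at most three are functionally independent (the Jacobian of any four functions of three variables has rank $\le 3$); (iii) conclude that four functionally independent first-order invariants is impossible unless the remaining isotropy survives, i.e.\ unless $|\alpha| = \frac54|\tau|$.

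The main obstacle is step (ii): one must be careful that "functionally independent" is being measured on the five-dimensional spacetime and that the apparent three-variable dependence of $\{\alpha,\bar\alpha,\tau,\bar\tau\}$ really is genuine and coordinate-independent — i.e.\ that passing to the $a = \tfrac14\ln(f_{,\z\z})$ chart has not hidden a $v$-dependence or collapsed the count. Since $\z(a,u)$ and $f$ are independent of $v$, and the spin-boost normalization fixing $\Psi_4 = 1$ introduced only $a$- and $\z$-dependence (see $p = a + \tfrac14\ln(2(\z+\bz))$), the scalars $\alpha,\tau$ are manifestly $v$-independent, so this obstacle is resolved by tracking that no step reintroduces $v$; the rest is the elementary rank bound.
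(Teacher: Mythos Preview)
Your proposal is correct and is essentially the same argument as the paper's: both observe that if $|\alpha|\neq\tfrac54|\tau|$ one may set $\gamma'=0$, leaving only $\alpha,\bar\alpha,\tau,\bar\tau$ as first-order Cartan invariants, and both then note these depend only on $(a,\ba,u)$ so at most three can be functionally independent. The paper packages the last step as the vanishing of the quadruple wedge product $d\alpha\wedge d\bar\alpha\wedge d\tau\wedge d\bar\tau$, while you phrase it as a Jacobian-rank bound; these are equivalent, and your extra remarks on $v$-independence and on why the degenerate branch can reach four invariants are helpful but not structurally different.
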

\begin{proof}
 Choosing Kundt coordinates, we calculate the quadruple wedge product of the differentials of the first order Cartan invariants $\alpha$, $\tau$ and their conjugates. As they are all functions of $a,\ba$ and $u$ relative to the special coordinate system, it is clear that 
\beq d \alpha \wedge d \bar{\alpha} \wedge d \tau \wedge d \bar{\tau} = 0. \nonumber \eeq
\noindent If the magnitudes of $\alpha$ and $\tau$ were not proportional we would always be able to set $\gamma = 0$, contradicting our assumption that four invariants appear at first order.   
\end{proof}

In the process of lowering the upper bound to $q\leq 4$  for the vacuum Kundt waves in the Karlhede algorithm and its sharpness, we have shown that those metrics with invariant count $(0,1,...)$ and $(0,2,...)$  must have $|\alpha| \neq \frac54 |\tau|$. Using the same approach we can demonstrate that the class of metrics with three functionally independent invariants must have $|\alpha| \neq \frac54 |\tau|$ also

\begin{lem} \label{lem:Noq4Cclass}
 For all vacuum Kundt wave spacetimes with invariant count $(0,3,...)$, the magnitude of $\alpha$ is never proportional to that of $\tau$; i.e., $|\alpha| \neq \frac54 |\tau|$. All remaining frame freedom is exhausted at first order by setting $\gamma =0$. 
\end{lem}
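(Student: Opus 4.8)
\textbf{Proof proposal for Lemma \ref{lem:Noq4Cclass}.}

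The plan is to mirror the strategy already used for Lemma \ref{lem:BclassInvCount} and Lemma \ref{lem:Noq2G0s}: suppose, for contradiction, that a vacuum Kundt wave has invariant count beginning with $(0,3,\dots)$ and satisfies $|\alpha| = \frac54|\tau|$. By Lemma \ref{lem:Collins} we already know $\bar\alpha \neq e^{i\theta}\frac54\tau$ for any real $\theta$, so in particular $\bar\alpha \neq \pm\frac54\tau$; hence only the third bullet after \eqref{NullRot} applies, and exactly one of $Re(\gamma')$, $Im(\gamma')$ can be set to zero by the remaining null rotation. The surviving component of $\gamma'$ is a genuine Cartan invariant (as noted in the discussion preceding Lemma \ref{lem:BclassInvCount}), so we have at our disposal the first-order invariants $\tau$, $\alpha$ (equivalently $T = e^{\bar a - a}$, $A = \z_{,a}$, $M = |\tau|^{-1}$), together with $\gamma'$.

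First I would note that Lemma \ref{lem:Bclass} already classifies the vacuum Kundt wave metrics for which the triple wedge product of the differentials of $\alpha$, $\tau$ and their conjugates vanishes — these are the metrics \eqref{BAmetric}, \eqref{BBmetric}, \eqref{BCmetric}. Since a $(0,3,\dots)$ spacetime has only three functionally independent invariants at first order, the full set $\{A,\bar A,T\}$ (three invariants, or fewer) need not force vanishing of all such wedge products; instead I would use the two triple wedge products \eqref{AbAT} and \eqref{TNA}, which involve precisely these first-order invariants, together with the triple wedge products formed from $\gamma'$ with two of $\{T, A, M\}$, exactly as in the proof of Lemma \ref{lem:BclassInvCount}. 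The key computation is to extract the coefficients of $da\wedge d\bar a\wedge dv$, $da\wedge du\wedge dv$ and $d\bar a\wedge du\wedge dv$ in the $\gamma'$-wedge products. As in Lemma \ref{lem:BclassInvCount}, the $dv$-carrying coefficients are proportional to $\z_{,aa}$ and $\z_{,au}$ (times a nonzero factor involving $C$, $|\tau|$ and $e^{-2a}$), so their vanishing forces $\z_{,aa} = \z_{,au} = 0$, i.e. $\z = c\,a + f_0(u)$ — the metric must be of the form \eqref{BBmetric} with $\z(a,u)$ as in \eqref{BBzed}.

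Having reduced to \eqref{BBmetric}, I would then impose $|\alpha| = \frac54|\tau|$ using the explicit forms of $\alpha$ and $\tau$ in these coordinates, exactly reproducing the final paragraph of the proof of Lemma \ref{lem:BclassInvCount}: the equality becomes $\frac{24|c|^2}{16} - \frac{C_0|\tau|^{-1}}{2} - |\tau|^{-2} = 0$ with $|\tau|^{-1} = C_0(a+\bar a) + iC_1(a-\bar a) + 2Re(f_0)$, and differentiating twice in $a$ yields the constant constraint $C_0 + iC_1 = 0$, contradicting $\z_{,a} = c \neq 0$. This establishes $|\alpha| \neq \frac54|\tau|$ for the $(0,3,\dots)$ class, and since the inequality is exactly the condition under which \eqref{NullRot} lets one set $\gamma' = 0$, all remaining frame freedom is exhausted at first order. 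The main obstacle I anticipate is purely bookkeeping: confirming that, for the $(0,3,\dots)$ case, the relevant $dv$-coefficients in the $\gamma'$-triple wedge products genuinely coincide (up to nonvanishing factors) with those appearing in Lemma \ref{lem:BclassInvCount} — i.e. that having three rather than two first-order invariants does not introduce an extra degree of freedom that spoils the argument. One should check that the functional relation among $\{A,\bar A, T, M\}$ forced by the $(0,3)$ count does not, by itself, already kill the $\z_{,aa}$ or $\z_{,au}$ coefficient in a way that leaves a surviving branch; if it does, that branch must be treated separately, but it will then fall under the metric forms of Lemma \ref{lem:Bclass} and can be disposed of case by case as in Lemmas \ref{lem:BAmetric}--\ref{lem:BCmetric}.
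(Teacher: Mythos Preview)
There is a genuine gap. In Lemma~\ref{lem:BclassInvCount} the triple wedge $d(e^{\ba - a})\wedge d\z_{,a}\wedge d\gamma'$ was forced to vanish because the $(0,2,2,\dots)$ hypothesis allows only \emph{two} functionally independent first-order invariants: any three of them must then be dependent. In the $(0,3,\dots)$ case you are now treating, three independent first-order invariants are permitted, so there is no reason whatsoever for a triple wedge of three first-order Cartan invariants to vanish. You therefore cannot set the $dv$-carrying coefficients of $d(e^{\ba - a})\wedge d\z_{,a}\wedge d\gamma'$ to zero, and the constraints $\z_{,aa}=\z_{,au}=0$ that drive the rest of your argument simply do not follow. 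The obstacle you flag in your final paragraph is not bookkeeping --- it is precisely the point at which the argument breaks.

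The paper's proof repairs this by passing to a \emph{quadruple} wedge. In the $(0,3,\dots)$ case any four first-order invariants must be functionally dependent, so the $4$-form $d|\tau|^{-1}\wedge d(e^{\ba-a})\wedge d\z_{,a}\wedge d\gamma'$ is required to vanish. Because $|\tau|^{-1}$, $e^{\ba-a}$, $\z_{,a}$ are independent of $v$ while $\gamma'$ depends on $v$ linearly with nonzero coefficient (since $C\neq0$), the sole component of this $4$-form is a nonvanishing prefactor times the expression appearing in \eqref{TNA}. Its vanishing is exactly the condition isolated in the $(0,2,\dots)$ analysis, and the contradiction is drawn from there. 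The structural fix to your proposal is therefore to replace each $\gamma'$-triple wedge by the quadruple wedge of $\gamma'$ with three of $T,M,A,\bar A$; only then does the $(0,3,\dots)$ count impose a constraint.
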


 \begin{proof}
Let us assume that the two magnitudes are equal, then by lemma \ref{lem:Collins}, $\alpha \neq \pm \tau$ and we may set either the real or imaginary part of $\gamma$ to zero. As before, we eliminate the real part of $\gamma$. The purely imaginary invariant $\gamma'$ is invariant under any null rotation preserving $Re(\gamma)=0$ due to the proportionality of the real and imaginary part of $\alpha B + \frac{5}{4} \bar{B} \tau$. Thus, without fixing the frame any further, the transformed scalar $\gamma'$ is a Cartan invariant:
\beq \gamma' &=& i(Im(\gamma) - C(a,\ba,u) Re(\gamma)) \nonumber \\
&=& i \frac{\sqrt{|\tau|}}{2\sqrt{2}} \left[ \frac{i\z_{,u}}{\z_{,a}} - \frac{i\bz_{,u}}{\bz_{,\ba}} + C \left( |\tau|v +  \frac{\z_{,u}}{\z_{,a}} + \frac{i\bz_{,u}}{\bz_{,\ba}}  \right)     \right]e^{-a-\ba}, \nonumber \eeq 
\noindent  and so we may consider the quadruple wedge product of the differentials of three invariants constructed from $\gamma'$, $\tau$,$\alpha$ and their complex conjugates: $|\tau|^{-1}$, $e^{\ba-a}$, $\z_{,a}$ and $\gamma'$. Doing so we find the sole coefficient of $da \wedge d\ba \wedge du \wedge dv$ is:
\beq \frac{ie^{2a}|\tau|^{\frac32} C}{2\sqrt{2}} \left(\bz_{,\ba \ba} \z_{,u} +  \bz_{, \ba \ba} \bz_{,u} - \bz_{,\ba u} \bz_{,\ba} - \bz_{,\ba u} \z_{,a}\right). \nonumber \eeq
\noindent If we wish to have three functionally independent invariants at first order, this equation must vanish.  However, this is exactly equation \eqref{TNA} used to determine the class of vacuum Kundt wave metrics with invariant count $(0,2,...)$. This contradicts our assumption and so $|\alpha| \neq \frac54 |\tau|$.   
\end{proof}
\noindent With this result we see that for all metrics with an invariant count $(0, n,...),~n<4$, we may always set $\gamma =0$ as $\frac54 |\tau| \neq |\alpha|$.
\end{subsection}
\begin{subsection}{Vacuum Kundt waves with $|\alpha| \neq \frac54 |\tau|$}

To complete the classification of the vacuum Kundt waves using the invariant counts arising from the Karlhede algorithm, we prove that the class of vacuum Kundt waves with invariant count $(0,3,3)$ and $(0,2,2)$ cannot occur in the next two lemmas. 

\begin{lem} \label{lem:No033} 
 If a vacuum Kundt wave spacetime admits three functionally independent invariants at first order of the Karlhede algorithm, it must belong to the $(0,3,4,4)$ class.
\end{lem}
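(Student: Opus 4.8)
The plan is to reduce the statement to a dimension count plus a single nonvanishing check. By Lemma~\ref{lem:Noq4Cclass}, a vacuum Kundt wave with invariant count $(0,3,\ldots)$ has $|\alpha|\neq\tfrac54|\tau|$, so the residual null rotation \eqref{NullRot} can (and must) be used at first order to set $\gamma=0$; hence the isotropy group is already trivial after the first iteration. Consequently the Karlhede algorithm terminates at the first order at which the number of functionally independent invariants fails to increase, and since the manifold is four-dimensional this number is at most four. Thus the only possible continuations of $(0,3,\ldots)$ are $(0,3,3)$ and $(0,3,4,4)$, and it suffices to rule out the former.

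To exclude $(0,3,3)$, I would observe that the first-order invariants $\alpha,\tau$ and their conjugates — equivalently $a-\ba=\tfrac12 ln(\bar\tau/\tau)$, $\z_{,a}$ and $|\tau|^{-1}=\z+\bz$ — are functions of $(a,\ba,u)$ alone, because $\gamma$ has been set to zero. So having three of them functionally independent means $(a,\ba,u)$ can locally be recovered from the first-order invariants, while the fourth coordinate $v$ is not yet captured; hence $(0,3,3)$ fails precisely when some second-order invariant depends genuinely on $v$. Among the second-order invariants listed in Proposition~\ref{prop:KW2ndOInv}, the candidates $M_0,M_1,M_2$ are all affine in $v$, with $v$-coefficients built from $DB'=\tfrac{16|\tau|^2}{25|\tau|^2-16|\alpha|^2}\bigl(|\tau|+\bz_{,\ba}^{-1}\bigr)$ — for instance that of $M_1$ is $\tfrac{\tau}{\bar\tau}\z_{,aa}\,e^{-a-\ba}DB'$ and that of $M_2$ is $2Re\bigl(\tfrac{\tau}{\bar\tau}e^{-a-\ba}DB'\bigr)$. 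Since the coefficient and the $v$-free part of any such invariant are functions of $(a,\ba,u)$, hence of the first-order invariants, such an invariant together with the three first-order ones gives four functionally independent scalars as soon as its $v$-coefficient survives.

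The crux, and the main obstacle, is therefore to show that these $v$-coefficients do not all vanish identically. The prefactor $\tfrac{16|\tau|^2}{25|\tau|^2-16|\alpha|^2}$ is finite and nonzero because $\tau\neq0$ and $|\alpha|\neq\tfrac54|\tau|$, so the issue is whether $|\tau|+\bz_{,\ba}^{-1}$ can vanish. Using $\alpha=\tfrac{\bar\tau}{4}+\sqrt{\bar\tau/\tau}\,\bz_{,\ba}^{-1}$ from \eqref{SpinCoef1} together with the identity $\sqrt{\bar\tau/\tau}\,|\tau|=-\bar\tau$, which follows from $\tau=-e^{\ba-a}/(\z+\bz)$ with $\z+\bz$ real, the equation $|\tau|+\bz_{,\ba}^{-1}=0$ collapses to $\alpha=\tfrac54\bar\tau$, hence $|\alpha|=\tfrac54|\tau|$ — excluded by hypothesis. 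So $DB'\neq0$, and a short check (using $\z_{,a}=\overline{\bz_{,\ba}}$, so that the $v$-coefficients of $M_0$, $M_1$ and $M_2$ carry complementary phases and cannot cancel simultaneously) shows that a fourth, $v$-dependent, functionally independent invariant appears at second order. Hence the count is $(0,3,4,\ldots)$, and by maximality it stabilizes as $(0,3,4,4)$ at the next iteration, which is what we wanted. The only delicate point is the simultaneous-cancellation check for the $M_i$; everything else is bookkeeping about the isotropy and the dimension of the manifold, and ultimately rests on the input $|\alpha|\neq\tfrac54|\tau|$ supplied by Lemma~\ref{lem:Noq4Cclass}.
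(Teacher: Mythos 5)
Your setup coincides with the paper's: by Lemma \ref{lem:Noq4Cclass} the isotropy is exhausted at first order, so it suffices to exclude $(0,3,3)$, and since the three first-order invariants are functions of $(a,\ba,u)$ alone this amounts to showing that not every second-order invariant can be $v$-independent; the relevant invariants are the $M_I$ of Proposition \ref{prop:KW2ndOInv}, whose $v$-coefficients are proportional to $DB'$. Your argument that $DB'=0$ would force $\alpha=\tfrac54\bar{\tau}$, hence $|\alpha|=\tfrac54|\tau|$, is correct and is in fact a cleaner justification of $DB'\neq 0$ than the one given in the paper.

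The gap is at the step you flag as ``a short check.'' The claim that the $v$-coefficients of $M_0,M_1,M_2$ ``carry complementary phases and cannot cancel simultaneously'' is false as stated. Writing $w=\tfrac{\tau}{\bar{\tau}}e^{-a-\ba}DB'\neq 0$, those coefficients are $\z_{,aa}\,w$, $w+\bar{w}$ and $\bz_{,\ba}w-\z_{,a}\bar{w}$, and they all vanish precisely when $\z_{,aa}=0$, $\mathrm{Re}(w)=0$ and $\z_{,a}=-\bz_{,\ba}$; phase complementarity alone does not rule this system out (indeed $\z_{,aa}=0$ and $\z_{,a}=-\bz_{,\ba}$ hold throughout the $(0,1,\ldots)$ class and part of the $(0,2,\ldots)$ class, so excluding it outright would require a genuine analysis of $\mathrm{Re}(w)$, which you do not supply). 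What actually closes the argument — and this is the missing idea — is to feed the cancellation conditions back into the first-order count: $\z_{,aa}=0$ together with $\z_{,a}+\bz_{,\ba}=0$ makes the triple wedge product of the first-order invariants (equation \eqref{TNA}, the 3-form $\Omega_3$ in the paper's proof) vanish identically, so such a spacetime has at most two functionally independent invariants at first order, contradicting the hypothesis of three. This is exactly the paper's case split on $\z_{,aa}$ ($\z_{,aa}\neq0$ handled by $DB'\neq0$ via $M_1$; $\z_{,aa}=0$ handled by $M_0$ and $M_2$ forcing $\z_{,a}=-\bz_{,\ba}$ and hence a drop to the $(0,2,\ldots)$ class); with that substitution for your phase claim, your proof goes through.
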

\begin{proof} 
Supposing that we do have the invariant count $(0,3,3)$ we will show there is a contradiction. Denoting the triple wedge product $\Omega_3 = d(a-\ba) \wedge d \z_{,a} \wedge d(\z+\bz)$, we note \beq \Omega_3 = -(\bz_{,\ba \ba} (\z_{,u} +  \bz_{,u}) - \bz_{,\ba u} (\bz_{,\ba} + \z_{,a})) da \wedge d\ba \wedge du. \nonumber \eeq
\noindent We recall from equation \eqref{TNA} that if this equation vanishes only two functionally independent invariants appear at first order of the algorithm; thus this must be non-zero if we wish to have three invariants at first order. To impose the condition that no new functionally independent invariants appear at second order, we require the vanishing of all quadruple wedge products with $M_I$, $I=0,1,2,3,$
\beq \Omega_3 \wedge d M_I = -M_{I,v}(\bz_{,\ba \ba} (\z_{,u} +  \bz_{,u}) - \bz_{,\ba u} (\bz_{,\ba} + \z_{,a})) da \wedge d\ba \wedge du \wedge dv. \nonumber \eeq  
This can only occur if and only if $M_{i,v} = 0$, $i=0,1,2$. The $v$-coefficient of the first three $M_i$ yields two cases, depending on whether $\z_{,aa} = 0$ or not. 
\begin{itemize}
 \item If $\z_{,aa} \neq 0$, the vanishing of $\Omega_3 \wedge M_1$ implies $M_{1,v} = 0$; this can only occur if $DB = 0$ which is not possible, otherwise one would have $|\tau| = - \z_{,a}^{-1}$. If one were to impose this constraint, it immediately implies, $\z_{,a} = 0$ which cannot be true. 
\item If $\z_{,aa} = 0$, the vanishing wedge products $\Omega_3 \wedge M_0$ and $\Omega_3 \wedge M_2$ give the following equations \beq & \frac{\bar{\tau}}{\tau} D \bar{B}' \z_{,a} - \frac{\tau}{\bar{\tau}} DB'\bz_{,\ba}= 0, & \nonumber \\
& \frac{\tau}{\bar{\tau}}DB' + \frac{\bar{\tau}}{\tau} D\bar{B}' = 0.& \nonumber \eeq
\noindent  As $DB \neq 0$, we may solve one equation and substitute into the other, 
\beq \left[ \frac{\bz_{,\ba}}{\z_{,a}} +1  \right] D B \frac{\tau}{\bar{\tau}} = 0. \nonumber \eeq
\noindent This will only vanish if $\z_{,a} = - \bz_{,\ba}$; however, if this is the case, then \eqref{TNA} is satisfied and this spacetime belongs to the $(0,2,...)$ class, contradicting our assumption, and so it cannot occur.	
\end{itemize}
\end{proof}
\noindent Effectively we may differentiate those vacuum Kundt waves  with invariant count $(0,3,4,4)$ and $(0,4,4,4)$ by the non-vanishing of the first order invariant $|\alpha|-\frac54 |\tau|$. The Newman-Penrose field equations provide further classifying functions. 

We now consider the remaining branches of the vacuum Kundt waves with invariant count $(0,2,...)$, and show the subclass with invariant count $(0,2,2)$ cannot occur. 

\begin{lem} \label{lem:No022} 
 If a vacuum Kundt wave spacetime admits a two-dimensional isometry group it must belong to the $(0,1,2,2)$ class.
\end{lem}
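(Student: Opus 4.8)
The statement to prove is Lemma~\ref{lem:No022}: a vacuum Kundt wave admitting a two-dimensional isometry group must have invariant count $(0,1,2,2)$. The strategy mirrors Lemma~\ref{lem:No033}. Since we are now in the branch with exactly two functionally independent invariants at first order, the relevant first order Cartan invariants are $a-\ba$, $\z_{,a}$ and $|\tau|^{-1}=\z+\bz$, but only two of these are functionally independent; by Lemma~\ref{lem:Collins} and Lemma~\ref{lem:No033}, having only two first order invariants means precisely that the triple wedge product of all three vanishes, which by \eqref{TNA} forces $\z_{,a}=-\bz_{,\ba}$ (so the metric is of the form \eqref{AclassMetric}, the ``A-class''), \emph{or} one reduces to a degenerate coordinate situation. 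The first step is therefore to invoke Lemma~\ref{lem:Aclass} to pin down that any such spacetime has metric function of the form \eqref{AclassMetric}, with first order invariants reduced to $\tau$ (equivalently $a-\ba$, since $\z=i(C_0a+G(u))+C_1$) and nothing else independent.

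\textbf{Key steps.} First, I would argue that an invariant count $(0,2,\dots)$ with the magnitude condition $|\alpha|\neq\frac54|\tau|$ (guaranteed here by Corollary~\ref{upper bound5} and Lemma~\ref{lem:BclassInvCount}, which rule out $|\alpha|=\frac54|\tau|$ for the $(0,2,\dots)$ branch) means all remaining frame freedom is fixed at first order by setting $\gamma'=0$, producing the invariant coframe \eqref{Icoframe} with $B$ given by \eqref{NRotB}. Second, I would examine the second order invariants of Proposition~\ref{prop:KW2ndOInv} (or the specialized calculation in Section~\ref{Aclassq4upb}): the candidates are $\mu'$, $\lambda'$, $\nu'$ and the frame derivatives $M_0,\dots,M_3$. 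For the A-class metric \eqref{AclassMetric}, the analysis in Lemma~\ref{lem:AclassInvCount} already shows that $\mu'$ and $\lambda'$ are functionally dependent on $\tau$ and $\xi$ (equation \eqref{xi}), so a genuinely new second order invariant beyond $\xi$ can only come from $\tilde\nu=\nu+\Delta\bar B$ in \eqref{NewNu}. Third, I would compute the relevant wedge products of $\tilde\nu$ (reduced to the invariants $N$, $N_0$, $N_1$ of \eqref{AclassNSharp}) against $a-\ba$ and $\xi$: if $(0,2,2)$ were to hold, the invariant $\xi$ itself would have to be functionally dependent on $\tau$, i.e.\ $d\tau\wedge d\xi=0$. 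I would show this forces $G_{,u}=0$ and then show that $v$ drops out of all invariants, which is impossible because $v$ genuinely appears in $\gamma$ in \eqref{SpinCoef1} and in $\xi$; indeed the triple wedge $d(a-\ba)\wedge d\xi\wedge d\tilde\nu$ is generically nonzero. The cleanest route is: the invariant $\xi$ depends on $v$ and is \emph{not} a function of $\tau$ alone (since $\xi=e^{-a-\ba}(v+G_{,u}/(C_0|\tau|^2))$ involves $v$ independently of $a+\ba$), hence at least two functionally independent invariants already appear at second order, giving $(0,2,\ge2)$ with a genuine increase, so the count cannot stabilize at $(0,2,\dots)$ before $(0,2,2)$ — wait, that is the opposite; rather, having two \emph{new} invariants at second order would give $(0,2,4)$. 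So the correct claim: for the A-class the count at second order is at least $(0,2,3)$, and combined with Lemma~\ref{lem:AclassInvCount}'s conclusion that A-class metrics starting $(0,1,2,\dots)$ end at $(0,1,2,2)$, any $(0,2,\dots)$ vacuum Kundt wave must actually have had invariant count $(0,1,\dots)$ at first order — a contradiction with admitting a two-dimensional isometry group realized as $(0,2,\dots)$. Therefore the only way a two-dimensional isometry group arises is through the count $(0,1,2,2)$, where the isometry group dimension equals $5-2=3$...

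\textbf{Main obstacle.} Let me restate the logic carefully, since the bookkeeping of ``isometry dimension $=$ $\dim M$ minus number of functionally independent invariants at termination'' is the subtle point. A two-dimensional isometry group on a four-dimensional spacetime means the final invariant count terminates with exactly $4-2=2$ functionally independent invariants; the count is $(0,r_1,\dots,2)$ and stabilizes at $2$. If $r_1=2$, then by the first order analysis the magnitude condition $|\alpha|\neq\frac54|\tau|$ holds (Lemma~\ref{lem:BclassInvCount}), the frame is fully fixed, and $\z_{,a}$, $|\tau|^{-1}$, $a-\ba$ give two functionally independent invariants at first order; but Lemma~\ref{lem:No033}'s argument shows that if only two first order invariants appear the metric must be A-class (\eqref{TNA} vanishes $\Rightarrow$ $\z_{,a}=-\bz_{,\ba}$), and then Lemma~\ref{lem:Aclass} plus Lemma~\ref{lem:AclassInvCount} give that the invariant count is really $(0,1,2,2)$ — i.e.\ only \emph{one} invariant appears at first order, not two, contradicting $r_1=2$. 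Hence a two-dimensional isometry group can only be realized with $r_1=1$, i.e.\ invariant count $(0,1,2,2)$. The hard part will be verifying that for the A-class metric the first order invariant count is genuinely $1$ and not $2$ — that is, checking that $\z_{,a}=C_0$ (a constant), $a-\ba$, and $|\tau|^{-1}=i C_0(a-\ba)+2C_1$ reduce to a single functionally independent invariant, namely $a-\ba$ (equivalently $\tau$); this is immediate from \eqref{AclassZ} and \eqref{AclassTnA}, since $\z_{,a}$ is constant and $|\tau|^{-1}$ is an affine function of $a-\ba$. With that in hand, the lemma follows: the only invariant-count branch terminating at $2$ with a two-dimensional isometry group is $(0,1,2,2)$.
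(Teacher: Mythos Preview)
Your argument has a genuine gap at its central step. You repeatedly assert that a $(0,2,\dots)$ vacuum Kundt wave must be of the A-class form \eqref{AclassMetric}, writing that ``\eqref{TNA} vanishes $\Rightarrow \z_{,a}=-\bz_{,\ba}$'' and that ``if only two first order invariants appear the metric must be A-class.'' This is not correct. The vanishing of the triple wedge products \eqref{AbAT} and \eqref{TNA} is analyzed in Lemma~\ref{lem:Bclass}, and it yields \emph{three} distinct families \eqref{BAmetric}, \eqref{BBmetric}, \eqref{BCmetric} (the B-class), not the single A-class form. The A-class is characterized by the stronger condition $d\tau\wedge d\bar\tau=0$ of Lemma~\ref{lem:Aclass}, which additionally forces $\z_{,u}+\bz_{,u}=0$ and gives exactly \emph{one} first order invariant, not two. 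So your chain ``$(0,2,\dots)\Rightarrow$ A-class $\Rightarrow$ actually $(0,1,\dots)$, contradiction'' never gets started: the first implication is false. You also invoke Lemma~\ref{lem:No033} for this step, but that lemma concerns the $(0,3,\dots)$ branch and does not establish what you claim.

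The paper's proof takes a different and more direct route. It starts in the genuine $(0,2,\dots)$ branch (hence in one of the B-class forms), assumes the count stabilizes at $(0,2,2)$, and then imposes the vanishing of the triple wedge products $d(a-\ba)\wedge d\z_{,a}\wedge dM_i$ and $d(a-\ba)\wedge d|\tau|^{-1}\wedge dM_i$ for the second order invariants $M_i$ of Proposition~\ref{prop:KW2ndOInv}. The key computation is that these wedge products vanish only if either $M_{i,v}=0$ or the additional constraints $\z_{,a}+\bz_{,\ba}=\z_{,u}+\bz_{,u}=\z_{,au}=\z_{,aa}=0$ hold. One then checks, case by case over \eqref{BAmetric}, \eqref{BBmetric}, \eqref{BCmetric} and reusing the $DB'\neq 0$ argument from Lemma~\ref{lem:No033}, that $M_{i,v}\neq 0$; hence the extra constraints are forced, and these are precisely the A-class conditions, contradicting $r_1=2$. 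Your proposal would need to be rewritten to follow this B-class case analysis (or some equivalent) rather than collapsing prematurely to the A-class.
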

\begin{proof}
 Supposing that only two functionally independent invariants appear at first order, we require that the wedge products of $d(a-\ba) \wedge d\z_{,a} \wedge dM_i$ and $d(a-\ba) \wedge d|\tau|^{-1} \wedge dM_i$ all vanish. 
\noindent If these wedge products are to vanish then either $\z_{,a}+\bz_{,\ba} = \z_{,u}+\bz_{,u} = \z_{,au} = \z_{,aa} = 0$ or $M_{i,v} = 0$. 

As in the proof of Lemma \ref{lem:No033} we may use the same argument for metrics \eqref{BAmetric}, and \eqref{BCmetric}  to show $M_{i,v} \neq 0$, $i=0,1,2$. In the case of the metric function \eqref{BBmetric} where $\z_{,aa} = 0$ and $\z_{,a} = -\bz_{,\ba} $, $M_{2,v} = 0$ occurs if and only if $\z_{,a} = 0$ which is not possible. If these wedge products do vanish, we must have $\z_{,a}+\bz_{,\ba} = \z_{,u}+\bz_{,u} = \z_{,au} = \z_{,aa} = 0$, implying that this metric belongs to the $(0,1,...)$ class.
\end{proof}

\end{subsection}

\end{section}

\begin{section}{Conclusions} \label{Treez}

In this paper we have invariantly classified all of the vacuum Kundt waves by exhaustively listing all invariant counts that appear as states in the Karlhede algorithm. Using the invariants produced by this method, we examine each invariant count to determine if the spacetime is integrable. In many cases whole branches do not occur or are significantly simplified; the results of this analysis are summarized in table form in the following two tables \eqref{table:KundtWavesG0} and \eqref{table:KundtWavesG1}.


This analysis was motivated by previous work on the upper bound of the Karlhede algorithm applied to type N spacetimes; it was conjectured that $q\leq 5$ \cite{Collins91, MRVickers}  for the vacuum Kundt waves; however, this upper bound was not shown to be sharp. We have lowered the upper bound to $q \leq 4$ and produced an example by integrating the class of vacuum Kundt waves with $(0,1,3,4,4)$ proving the sharpness of the bound. Furthermore, we have proved that this class is unique as it is the only class requiring the fourth derivative of the curvature to invariantly classify its members. 

\begin{table}[t] 
\begin{center} 
\begin{tabular}{cc}
Invariant Count & $f(\z,u)$    \\ [0.5ex]
\hline\\
$(0,4,4,4)$ & $f(\z,u),~~ |\alpha| - \frac54 |\tau| = 0$  \\ [1ex] 
  $(0,3,4,4)$ & $f(\z,u),~~ |\alpha| - \frac54 |\tau| \neq 0$  \\ [1ex] 
  $(0,2,4,4)-0$ & $- \frac{F(u)^2}{16}e^{\frac{4(\z-F_0(u))}{iF(u)}} + g(u) \z + g_0(u)$  \\ [1ex] 
  $(0,2,4,4)-1$ & $\frac{c^2}{16} e^{4\left(\frac{\z}{c}-\frac{iF_1(u)}{|c|^2}\right)} + g_1(u) \z + g_2(u) $  \\ [1ex] 
  $(0,2,4,4)-2$ & $f_2(\z-C_0 - i F_3(u))+g_3(u)z + g_4(u) $  \\ [1ex] 
  $(0,1,4,4)$ & $\frac{C_0^2}{16} e^{\frac{-4i(\z+C_1)}{C_0}} + f_1(u) \z + f_2(u) $  \\ [1ex] 
  $(0,1,3,4,4)-1.0$ & $\frac{C_0^2}{16} e^{\frac{-4i(\z+C_1)}{C_0}} +F_y[(C_2 + i) \z + 2C_3 + ln(F_y^{ \frac{C_0}{2}})] $,  \\ [1ex]
  & $F_y(u) \neq Cu^{-2}$  \\ [1ex]
  $(0,1,3,4,4)-1.1$ & $\frac{C_0^2}{16} e^{\frac{-4i(\z+C_1)}{C_0}} + F_x( \z+ C_2)$,  \\ [1ex]
  & $F_x(u) \neq Cu^{-2}$  \\ [1ex]
\hline \\
\end{tabular}
\caption{ All Vacuum Kundt waves admitting no symmetries}
\label{table:KundtWavesG0}
\end{center}
\begin{center} 
\begin{tabular}{ccc }
Invariant Count & $f(\z,u)$ &  Killing vector   \\ [0.5ex]
\hline\\
  $(0,2,3,3)-1$ & $\frac{c^2}{16} e^{\frac{4(\z-C_0-iC_1u)}{c}} + c_2\z + Im(c_2)C_1u+C_3$ & $U-C_1T$ \\ [1ex]
  $(0,2,3,3)-2$ & $f(\z-C - i C_0u)+c_1\z + Im(c_1)C_0u+C_2$ & $U-C_0T$  \\ [1ex]
  $(0,1,3,3)$ & $\frac{C_0^2}{16} e^{\frac{-4i(\z-iC_0u+C_1)}{C_0}} +c_3\z + Im(c_3)C_0u+C_2$ & $U-C_0T$  \\ [1ex]
\hline \\
 $(0,1,2,2)$ & $\frac{C_0^2}{16} e^{\frac{-4i(\z+C_1)}{C_0}}$ & $U$ and  \\ [1ex] 
  & & $T+ C_0^{-1}R$ \\ [1ex] 
\hline \\
\end{tabular}
\caption{ All Vacuum Kundt waves admitting symmetries; the Killing vectors are: $ U = \frac{\partial}{\partial u}$,$R = i \left(\frac{\partial}{\partial \z} - \frac{\partial}{\partial \bz}\right)$ and $T=\frac{u}{2}\frac{\partial}{\partial u}.$}   
\label{table:KundtWavesG1}
\end{center}
\end{table}
It has been shown that any spacetime that is not (locally) homogeneous requires at most $q \leq 7$. In fact, in the cases of Petrov types I, II and III it is known that {\bf at most} $q\leq 5$. The remaining Petrov types D, N and O provide instances where the upper bound may be higher. The type D vacuum spaces have been studied exhaustively \cite{Aman,CIV}  and shown to have an upper bound $q\leq 3$; type D non-vacuum have been shown to have $q\leq 6$ \cite{CI}. Similarly the type O spaces been analyzed extensively and in these spaces $q\leq 5$ \cite{Bradly,Siexas, Koutras, Invskea}

We are left with the Petrov type N spaces. As mentioned previously, the upper bound for type N vacuum spaces was $q\leq 5$ \cite{CIV, MRVickers}. Following from this work on vacuum type N spacetimes the only candidates for a vacuum type N space with $4\leq q \leq 5$ would be the Kundt vacuum waves; we have shown that the vacuum Kundt waves have $q\leq 4$. The addition of matter complicates the analysis; it has been shown that there is a non-vacuum Kundt wave with $q\leq 5$ \cite{skea} while the addition of $\Lambda$ can raise the upper bound up to seven \cite{KarlSharp}.

In \cite{Phd} a partial invariant classification was made of the type N plane-fronted waves (that is, all type N spacetimes admitting a non-twisting, shear-free null geodesic vector $\ell$ with cosmological constant and admitting  pure radiation, null Maxwell-Einstein or vacuum as sources). These spaces are interesting as they belong to the $CSI_\Lambda$ class and cannot be classified using polynomial scalar curvature invariants. Furthermore, these spaces are easily interpreted physically using the equations of geodesic deviation  due to the simple form that the curvature tensor takes in these spaces. The vacuum Kundt waves have been studied using representative timelike geodesics to study the structure of these spaces and the singularities in them \cite{PodolskyBelan}. The relationship between the geodesic deviation equations (i.e., the physical interpretation) and the Cartan invariants of a space is not known; however, for the type N $CSI_\Lambda$ spaces this is achievable, as illustrated by the analysis of the geodesic deviation equations in the vacuum plane wave spacetimes \cite{CMcMi}.  

In the future, we will examine the invariant classification of the $CSI_{\Lambda}$ spacetimes using the Karlhede algorithm. Alternatively, we could extend this approach to classify the pp-waves and Kundt waves in higher dimensions \cite{CCNV, HDpp}.
\end{section}
\begin{section}*{Acknowledgments} 
The authors would like to thank Jiri Podolsky for helpful comments, and Brendan Rutherford for his effort and keen eye in the editing process. This work was supported by NSERC of Canada. 
\end{section}

\begin{section}*{Appendix A: Vacuum Kundt Waves Admitting No Symmetry} 

In this appendix we collect all of the necessary invariants required to sub-classify the vacuum Kundt waves admitting no Killing vectors, by identifying the functionally independent invariants and those functionally dependent invariants that are not generic to all vacuum Kundt waves in this subclass, denoted using the arbitrary real-valued and complex-valued functions $\tilde{Z}$ and $\tilde{z}$ respectively. These functions constitute the essential classifying manifold, as all other curvature components to any order may be expressed in terms of these functions and their derivatives. In each list the use of a semi-colon indicates the separation between the set of invariants arising at each iteration of the Karlhede algorithm. 

\begin{prop} \label{prop:0444} 
The metrics belonging to the $(0,4,4,4)$ class may contain any analytic function, $f(z,u)$, not listed in the class of vacuum Kundt waves with invariant-counts beginning with $(0,n,...)$, $n<3$. 

Using the special coordinates $a=\frac14 ln( f_{,\z \z})$, the four functionally independent invariants may be constructed from the spin-coefficients in \eqref{SpinCoef1}: 
\beq & a-\ba,~~\z_{,a},~~|\tau|^{-1},~~v. & \nonumber \eeq
\noindent The classifying functions at first and second order are:
\beq & a+\ba= \tilde{Z}_0(i(a-\ba), \z_{,a},|\tau|^{-1}),~~\z_{,u}= \tilde{z}_1(i(a-\ba), \z_{,a},|\tau|^{-1}), &\nonumber \\ 
& \frac{24|\z_{,a}|^2}{16} + |\z_{,a}||\tau|^{-1}(\z_{,a}+\bz_{,\ba})+ |\tau|^{-2} ;&  \nonumber \\
&\z_{,aa} = \tilde{z}_2(i(a-\ba), \z_{,a},|\tau|^{-1}),~~\tilde{f}(a,u) = \tilde{z}_3(i(a-\ba), \z_{,a},|\tau|^{-1}).& \nonumber \eeq

\noindent The invariant coframe arises from the coframe \eqref{Kacoframe} by using the null rotation with parameters $B'$ and $B''$ to satisfy the conditions at first and second order respectively:  \beq Im\left(\gamma+B'\alpha + \frac54 \bar{B}'\tau\right) = 0;~~\Delta''(a-\ba) = 0. \nonumber \eeq
\end{prop}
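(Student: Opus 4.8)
\textbf{Proof proposal for Proposition \ref{prop:0444}.}

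The plan is to establish three things in sequence: first, that the indicated coordinate expressions are genuinely functionally independent Cartan invariants; second, that the listed second- and higher-order quantities are forced to be functions of these four; and third, that the stated frame-fixing conditions actually pin down the invariant coframe. The starting point is that by Lemma \ref{lem:Noq2G0s} and Lemma \ref{lem:Noq4Cclass}, any vacuum Kundt wave outside the families with invariant count beginning $(0,n,\dots)$ for $n<3$ falls into either the $(0,3,\dots)$ or $(0,4,\dots)$ branches; since Lemma \ref{lem:No033} pushes the $(0,3,\dots)$ case into $(0,3,4,4)$, the residual, maximally generic family is exactly the $(0,4,4,4)$ class, and it is this complement-characterization of the admissible $f(\z,u)$ that the first sentence of the proposition asserts. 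I would open the proof by invoking these lemmas to justify that sentence, and then note that by Lemma \ref{lem:Noq2G0s} such a metric has $|\alpha|=\tfrac54|\tau|$, so the null-rotation freedom \eqref{NullRot} can at best fix one real component of $\gamma$ at first order.

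Next I would exhibit the four first-order invariants. Relative to the $\Psi_4=1$ coframe \eqref{Kacoframe}, the quantities $\alpha$ and $\tau$ in \eqref{SpinCoef1} are already Cartan invariants (unaffected by the residual null rotations about $\ell$), so $\tfrac12\ln(\bar\tau/\tau)=a-\ba$, $|\tau|^{-1}=\z+\bz$ and $\z_{,a}=(e^{a-\ba}(\bar\alpha-4\tau))^{-1}$ are invariants built from them; combined with the imaginary-$\gamma'$ invariant (which survives after fixing $Re(\gamma')=0$, exactly as in the proof of Lemma \ref{lem:Noq4Cclass}) one recovers $v$ as the fourth. Because the coordinate map $(a,\ba,u,v)\mapsto(a-\ba,\z_{,a},|\tau|^{-1},v)$ is generically a local diffeomorphism for a generic $f$, these four are functionally independent, giving the count $0$ at zeroth order and $4$ at first. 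Then I would read off the remaining first-order invariant $a+\ba$ (equivalently the real part, which together with $a-\ba$ would recover $a$, but since only four can be independent $a+\ba$ must be a function of the four listed — this is the first classifying function $\tilde Z_0$), and likewise $\z_{,u}$ as the function $\tilde z_1$; the listed combination $\tfrac{24}{16}|\z_{,a}|^2 + \dots$ is the expansion of the equality $|\alpha|^2=\tfrac{25}{16}|\tau|^2$ and is automatically a relation among the invariants, so it records no new information. At second order, Proposition \ref{prop:KW2ndOInv} lists the candidate new invariants $M_0,M_1,M_2,M_3$ together with $\mu',\lambda',\nu'$; one checks via the wedge-product computations already set up in Section \ref{Leq5} that in the $(0,4,4,4)$ regime $M_3=\z_{,aa}$ and $\tilde f(a,u)$ are the only quantities not already functions of the first-order invariants, and these become $\tilde z_2,\tilde z_3$; all of $M_0,M_1,M_2,\mu',\lambda',\nu'$ are then functionally dependent, so the count stabilizes at $(0,4,4,4)$ and the algorithm terminates at third order.

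Finally I would verify the frame-fixing statement: at first order we impose $Im(\gamma+B'\alpha+\tfrac54\bar B'\tau)=0$, which by the discussion after \eqref{NullRot} (and Lemma \ref{lem:Collins}, which excludes $\bar\alpha=\pm\tfrac54\tau$) determines $B'$ uniquely up to the one-real-parameter family preserving $Re(\gamma')=0$; that residual parameter $B''$ is then fixed at second order by demanding $\Delta''(a-\ba)=0$, which is a genuine condition because $\Delta(a-\ba)$ transforms non-trivially under the surviving null rotation (its $\bar B''$-coefficient is a nonzero multiple of $\z_{,a}$), so the invariant coframe is completely determined. The main obstacle I anticipate is the second-order step: showing \emph{exhaustively} that none of $M_0,M_1,M_2,\mu',\lambda',\nu'$ contributes a fifth independent invariant requires organizing the wedge-product identities of Proposition \ref{prop:KW2ndOInv} so that each is visibly a function of $(a-\ba,\z_{,a},|\tau|^{-1},v)$ once $a+\ba,\z_{,u},\z_{,aa}$ are substituted in — a bookkeeping-heavy but conceptually routine computation that parallels, in the generic direction, the special-case arguments of Lemmas \ref{lem:BAmetric}–\ref{lem:BCmetric}.
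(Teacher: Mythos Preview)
The paper states this proposition without a separate proof; it is a summary entry in Appendix~A, and its content is meant to be read off directly from the analysis in Sections~\ref{Leq5}--\ref{Sharpz} together with Lemmas~\ref{lem:Noq2G0s}, \ref{lem:Noq4Cclass} and \ref{lem:Collins}. Your overall structure---invoke the lemmas to isolate the $(0,4,4,4)$ class, build the four invariants out of $\alpha,\tau,\gamma'$, and then fix the coframe in two stages---matches what the paper implicitly relies on.

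There is, however, one genuine conceptual slip in your final paragraph. You identify the ``main obstacle'' as showing that none of $M_0,M_1,M_2,\mu',\lambda',\nu'$ contributes a \emph{fifth} functionally independent invariant, and you anticipate a bookkeeping-heavy wedge-product argument. No such argument is needed: the spacetime is four-dimensional, so there can never be more than four functionally independent scalar functions on it. Once you have exhibited four independent invariants at first order, \emph{every} higher-order invariant is automatically a function of them. The content at second order is therefore not ruling out a fifth invariant, but rather (i) recording which new functional combinations appear as classifying data (this is what $\tilde z_2,\tilde z_3$ encode), and (ii) reducing the isotropy dimension from one to zero via the condition $\Delta''(a-\ba)=0$. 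The invariant count reads $(0,4,4,4)$ rather than $(0,4,4)$ precisely because the isotropy dimension drops between orders one and two, forcing one further iteration before both the isotropy and the invariant count stabilize.

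Two minor points: your formula $\z_{,a}=(e^{a-\ba}(\bar\alpha-4\tau))^{-1}$ should have $\bar\alpha-\tfrac{\tau}{4}$ (compare $\beta=\tau/4$ in \eqref{SpinCoef1}); and you write ``fixing $Re(\gamma')=0$'' where the proposition sets $Im(\gamma+B'\alpha+\tfrac54\bar B'\tau)=0$---either choice is permissible in the $\bar\alpha\neq\pm\tfrac54\tau$ case, but you should match the convention of the statement you are proving.
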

\begin{prop} \label{prop:0344} 
The metrics belonging to the $(0,3,4,4)$ class may contain any analytic function, $f(z,u)$, not listed in the class of vacuum Kundt waves with invariant-counts beginning with $(0,n,...)$, $n<3$. 

Using the special coordinates $a=\frac14 ln( f_{,\z \z})$, the four functionally independent invariants may be constructed from the spin-coefficients in \eqref{SpinCoef1} even though the last invariant appears at second order: 
\beq & \z_{,a},~~\bz_{,\ba},~~|\tau|^{-1};~~v. & \nonumber \eeq
\noindent The classifying functions at first and second order are:
\beq & a= \tilde{z}_0(\z_{,a}, \bz_{,\ba},|\tau|^{-1}); &\nonumber \\ 
&\z_{,u} = \tilde{z}_2(\z_{,a}, \bz_{,\ba},|\tau|^{-1}),~~\z_{,aa}= \tilde{z}_3(\z_{,a}, \bz_{,\ba},|\tau|^{-1}),~~\tilde{f}(a,u) = \tilde{z}_4(\z_{,a}, \bz_{,\ba},|\tau|^{-1}).& \nonumber \eeq

\noindent The invariant coframe is found at first order by applying a null rotation to the coframe \eqref{Kaframe} with parameter $B'$ satisfying the conditions: $ \gamma+B'\alpha + \frac54 \bar{B}'\tau = 0$ which is explicitly given in proposition \ref{prop:KW2ndOInv}.
\end{prop}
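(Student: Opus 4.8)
The plan is to prove Proposition \ref{prop:0344} by repeating the computational machinery already developed for the $(0,1,\ldots)$ and $(0,2,\ldots)$ branches, but now specialized to the generic stratum where three functionally independent first--order invariants appear and where, by Lemma \ref{lem:Noq4Cclass}, we automatically have $|\alpha|\neq \tfrac54|\tau|$. First I would invoke Lemma \ref{lem:Noq4Cclass} to conclude that all residual frame freedom can be used at first order to set $\gamma=0$ via the null rotation \eqref{NullRot} with parameter $B'$ determined by $\gamma+B'\alpha+\tfrac54\bar B'\tau=0$; this is exactly the parameter made explicit in Proposition \ref{prop:KW2ndOInv}, so the invariant coframe \eqref{Icoframe} is completely fixed at first order. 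Then, working in the coordinates $a=\tfrac14\ln(f_{,\zeta\zeta})$, I would identify the three functionally independent first--order invariants. Since $\tau=-e^{\bar a-a}/(\zeta+\bar\zeta)$ and $\alpha=\tfrac{\bar\tau}{4}+\sqrt{\bar\tau/\tau}\,(\bar\zeta_{,\bar a})^{-1}$ by \eqref{SpinCoef1}, the combinations $e^{\bar a-a}=\sqrt{\tau/\bar\tau}$, $|\tau|^{-1}=\zeta+\bar\zeta$ and $\zeta_{,a}=(e^{a-\bar a}(\bar\alpha-4\tau))^{-1}$ are all invariants, and in the $(0,3,\ldots)$ stratum these three (equivalently $\zeta_{,a},\bar\zeta_{,\bar a},|\tau|^{-1}$) are functionally independent — the failure of this independence is precisely the vanishing of the triple wedge product \eqref{TNA} that characterizes the $(0,2,\ldots)$ class, so genericity rules it out.

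The heart of the proposition is the second--order count. Here I would apply the frame derivatives $\{\delta',\bar\delta',\Delta',D'\}$ dual to the invariant coframe to the three first--order invariants, together with the transformed spin-coefficients $\pi',\lambda',\mu',\nu'$ from \eqref{InvSpinC}, exactly as catalogued in Proposition \ref{prop:KW2ndOInv}: the new candidate invariants are $M_0,M_1,M_2,M_3$ and the spin-coefficients $\lambda',\mu',\nu'$. The key observation is that $M_3=\zeta_{,aa}$ and $\lambda',\mu'$ turn out to be functionally dependent on the first--order invariants (this is a routine check using the explicit $B'$ and $DB'$ from Proposition \ref{prop:KW2ndOInv}), so effectively only one genuinely new second--order invariant survives, which one can take to be $v$ extracted from the $v$--linear term in $M_2$ (or equivalently from $\Delta'(\zeta+\bar\zeta)$, since $DB'\neq 0$ in this stratum, the coefficient of $v$ being non-vanishing). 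That $v$ is genuinely new — i.e. that no fourth functionally independent invariant appears at second order — is exactly the statement that the twelve quadruple wedge products $d(a-\bar a)\wedge d\zeta_{,a}\wedge dM_i\wedge dM_j$ and $d(a-\bar a)\wedge d|\tau|^{-1}\wedge dM_i\wedge dM_j$ all vanish; for the generic $(0,3,4,4)$ member this can be arranged because the $f$ is unconstrained beyond avoiding the degenerate families \eqref{BAmetric}--\eqref{BCmetric} of Lemma \ref{lem:Bclass}, and precisely four invariants $\{\zeta_{,a},\bar\zeta_{,\bar a},|\tau|^{-1},v\}$ now suffice to coordinatize the classifying manifold.

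Finally I would verify termination: at third order one computes $D',\delta',\bar\delta',\Delta'$ of $v$ and checks, using the commutator relations and the field equations \eqref{VacNPd}--\eqref{VacNPp}, that every third--order invariant is a function of $\{\zeta_{,a},\bar\zeta_{,\bar a},|\tau|^{-1},v\}$ — no new functionally independent invariant arises — so the count stabilizes at $(0,3,4,4)$, and the classifying functions are then read off as the expressions of the remaining dependent curvature components: $a=\tilde z_0(\zeta_{,a},\bar\zeta_{,\bar a},|\tau|^{-1})$ at first order (this is the one invariant among $\{a-\bar a,a+\bar a,\ldots\}$ whose relation to the independent set encodes the spacetime), and $\zeta_{,u},\zeta_{,aa},\tilde f(a,u)$ at second order. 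The main obstacle I anticipate is the bookkeeping in the second--order step: showing cleanly that $\lambda',\mu'$ and $M_0,M_1,M_3$ collapse onto functions of the first--order invariants (so the jump is $3\to 4$ and not $3\to 5$) requires carefully tracking the explicit $B'$, $DB'$, and $\Delta B'$, and the cleanest route is probably to reuse the reductions already performed in Lemmas \ref{lem:BAmetric}, \ref{lem:BBmetric}, \ref{lem:BCmetric} and to emphasize that the present $f$ lies in the complement of those degenerate families, where the surviving invariant is forced to be the $v$--carrying $M_2$.
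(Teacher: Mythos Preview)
Your overall plan is sound and tracks the paper's strategy, but there is a genuine logical inversion in your second--order step that would derail the argument as written.

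You write: ``That $v$ is genuinely new --- i.e.\ that no fourth functionally independent invariant appears at second order --- is exactly the statement that the twelve quadruple wedge products \ldots\ all vanish.'' This is backwards. First, the clause ``no fourth functionally independent invariant appears'' contradicts ``$v$ is genuinely new''; if $v$ is new you \emph{do} have a fourth invariant. Second, and more seriously, the twelve wedge products you quote are the ones from the $(0,2,\ldots)$ analysis (two first--order invariants wedged with two $M_i$'s); they are not the correct diagnostic in the $(0,3,\ldots)$ stratum. The relevant object here is $\Omega_3\wedge dM_I$ with $\Omega_3=d(a-\bar a)\wedge d\zeta_{,a}\wedge d(\zeta+\bar\zeta)$, exactly as in Lemma~\ref{lem:No033}, and its \emph{vanishing} would mean the count stays at $3$, not that it is $4$. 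The paper's route is the opposite of what you describe: Lemma~\ref{lem:No033} shows that $\Omega_3\wedge dM_I=0$ forces a contradiction (either $DB'=0$ or membership in the $(0,2,\ldots)$ class), hence the count \emph{cannot} stay at $3$ and must jump to $4$. You should invoke that lemma directly rather than re-derive it with the wrong wedge products.

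Two smaller points. Your claim that $\lambda',\mu'$ are functionally dependent on the first--order invariants alone is not correct: from \eqref{InvSpinC} they involve $B$ and its frame derivatives, hence $v$, just as in the $(0,1,\ldots)$ analysis where $\mu',\lambda'$ depended on $\tau$ \emph{and} $\xi$. This does not hurt the count (they are still functions of the four invariants $\{\zeta_{,a},\bar\zeta_{,\bar a},|\tau|^{-1},v\}$), but your stated reason for the jump $3\to4$ is wrong. Finally, the third--order termination requires no computation at all: once four functionally independent invariants are present on a four--dimensional manifold no further independent invariant can appear, and since isotropy was already exhausted at first order the algorithm halts automatically with count $(0,3,4,4)$.
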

\begin{prop} \label{prop:0244A} 
The metric belonging to the $(0,2,4,4)-0$ class has the canonical form for $f(\z,u)$
\beq & - \frac{F(u)^2}{16}e^{\frac{4(\z-F_0(u))}{iF(u)}} + g(u) \z + g_0(u), & \nonumber \eeq
\noindent where $F$, $F_0$, $g$ and $g_0$ are arbitrary functions of $u$. 

Using the special coordinates $a=\frac14 ln( f_{,\z \z})$, the four functionally independent invariants may be constructed from the spin-coefficients in \eqref{SpinCoef1} even though the last two invariants arise at second order:
\beq & a-\ba,~~\z_{,a} = iF(u)~with~F_{,u} \neq 0;~~ a+\ba,~~v. & \nonumber \eeq
\noindent The classifying functions at first and second order are:
\beq & |\tau|^{-1} = i\z_{,a}(a-\ba)+F_0(u); &\nonumber \\ 
&\z_{,aa} = 0,~~F_{,u}(u),~~g(u),~~\bar{g}(u),~~Re(g_0)(u).& \nonumber \eeq

\noindent The invariant coframe is found at first order by applying a null rotation to the coframe \eqref{Kaframe} with parameter $B'$ satisfying the conditions: $ \gamma+B'\alpha + \frac54 \bar{B}'\tau = 0$ which is explicitly given in proposition \ref{prop:KW2ndOInv}.
\end{prop}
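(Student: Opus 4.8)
The plan is to verify Proposition~\ref{prop:0244A} by carrying through, for the specific metric function $f(\z,u) = -\frac{F(u)^2}{16}e^{4(\z-F_0(u))/(iF(u))} + g(u)\z + g_0(u)$, exactly the same chain of computations that produced the general formulas in Proposition~\ref{prop:SpinCoef1}, Lemma~\ref{lem:Bclass} (Case~1), and Proposition~\ref{prop:KW2ndOInv}. First I would confirm that this $f$ is indeed the Case~1 representative: computing $f_{,\z\z} = e^{4a}$ directly gives $e^{4a} = \tfrac14 e^{4(\z-F_0)/(iF)}$, so $4a = \ln\tfrac14 + \tfrac{4(\z-F_0)}{iF}$, hence $\z = \z(a,u) = iF(u)a + F_0(u)$ up to an absorbable constant, which is precisely \eqref{BAzed}. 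This already yields $\z_{,a} = iF(u)$ and $\z_{,aa}=0$, matching the claimed second-order classifying data, and the hypothesis $F_{,u}\neq 0$ is exactly the condition (noted in Lemma~\ref{lem:BAmetric}) distinguishing this class from \eqref{BBmetric}.

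Next I would substitute $\z = iFa + F_0$ into the spin-coefficient formulas \eqref{SpinCoef1} to read off the first-order Cartan invariants. From $\tau = -\dfrac{e^{\ba-a}}{\z+\bz}$ and $\z+\bz = iF(a-\ba) + (F_0+\bar F_0)$ we get $\sqrt{\bar\tau/\tau} = e^{a-\ba}$, so $a-\ba = \tfrac12\ln(\bar\tau/\tau)$ is the first genuinely invariant combination; and $|\tau|^{-1} = \z+\bz = i\z_{,a}(a-\ba) + F_0$, which is the stated relation expressing $|\tau|^{-1}$ in terms of the invariants $a-\ba$, $\z_{,a}$ and the $u$-dependent function $F_0$. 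Since $F_{,u}\neq 0$ we may invert $\z_{,a}=iF(u)$ locally to recover $u$, hence $F_0(u) = \mathfrak F_0(\z_{,a})$ becomes a classifying function rather than an independent invariant — this is the argument already used verbatim in the proof of Lemma~\ref{lem:BAmetric}. So at first order only $a-\ba$ and $\z_{,a}$ are functionally independent, giving the count $(0,2,\dots)$.

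For the second-order data I would invoke Proposition~\ref{prop:KW2ndOInv}: since one checks $|\alpha|\neq\tfrac54|\tau|$ here (this class was shown in Lemma~\ref{lem:Bclass} and the surrounding discussion to have all isotropy fixable at first order via $\gamma'=0$), the new second-order invariants are exhausted by $M_0,M_1,M_2,M_3,\mu',\lambda',\nu'$. With $\z_{,aa}=0$ one has $M_3=0$ and $M_1 = M_1' + M_2$ with $M_1' = e^{-a-\ba}FF_{,u} = e^{-a-\ba}\mathfrak F(\z_{,a})$, so stripping the $u$-dependence recovers $a+\ba$ as the third functionally independent invariant; then $M_2$, after removing all terms in $a,\ba,u$, isolates $v$ as the fourth. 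The remaining spin-coefficients and the explicit forms of $M_i$ then get re-expressed in terms of $\{a-\ba, a+\ba, \z_{,a}, v\}$ and the classifying functions $F_{,u}(u)$, $g(u)$, $\bar g(u)$, $\mathrm{Re}(g_0)(u)$ appearing in $\nu'$ (via \eqref{SpinCoef1}, where $\nu$ carries the $f+\bar f = g\z+\bar g\bz + 2\mathrm{Re}(g_0)$ term). Finally one checks at third order that all frame derivatives of $v$ and $a+\ba$ close on the existing invariants, terminating the algorithm at $(0,2,4,4)$; the invariant coframe is the one from \eqref{Kaframe} after the single null rotation $B'$ with $\gamma + B'\alpha + \tfrac54\bar B'\tau = 0$ from Proposition~\ref{prop:KW2ndOInv}.

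The main obstacle I anticipate is purely computational bookkeeping rather than conceptual: one must carefully track which pieces of $M_0,M_2,\mu',\lambda',\nu'$ are genuinely new functions of $v$ and $a+\ba$ versus which collapse onto functions of $a-\ba$ and $\z_{,a}$ alone, and in particular verify that the $g$, $\bar g$, $\mathrm{Re}(g_0)$ dependence entering through $\nu'$ does not secretly produce a fifth functionally independent invariant at second order (it does not, because these are functions of $u$ and $u$ is already recovered from $\z_{,a}$). A secondary subtlety is confirming that the coordinate normalization absorbing $\mathrm{Im}(F_0)$ — the transformation $h_{,u} = e^{-2\mathrm{Im}(f_0)/F(u)}$ used in Lemma~\ref{lem:BAmetric} — is compatible with the stated canonical form, so that no further reduction of the arbitrary functions is possible and the $(0,2,4,4)$-$0$ class is genuinely as large as claimed.
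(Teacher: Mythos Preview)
Your proposal is correct and follows essentially the same route as the paper: the content of this proposition is established in Lemma~\ref{lem:BAmetric}, and your outline reproduces that argument step for step (Case~1 identification $\z=iF(u)a+F_0(u)$, inversion of $F$ to express $F_0$ as a classifying function, extraction of $a+\ba$ from $M_1'=e^{-a-\ba}FF_{,u}$ and of $v$ from $M_2$, with $g,\bar g,\mathrm{Re}(g_0)$ entering through $\nu'$). The only slip is the stray factor of $\tfrac14$ in your computation of $f_{,\z\z}$, but as you note this is an absorbable constant and does not affect anything downstream.
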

\begin{prop} \label{prop:0244B} 
The metric belonging to the $(0,2,4,4)-1$ class has the canonical form for $f(\z,u)$
\beq & f(\z,u) = \frac{c^2}{16} e^{\frac{4(\z}{c}-\frac{iF_1(u)}{|c|^2}} + g_1(u) \z + g_2(u),~~Re(c) \neq 0 & \nonumber \eeq
\noindent where $F_1$, $g_1$, and $g_2$ are arbitrary functions of $u$. 

Using the special coordinates $a=\frac14 ln( f_{,\z \z})$, the four functionally independent invariants may be constructed from the spin-coefficients in \eqref{SpinCoef1} even though the last two invariants arise at second order:
\beq & a-\ba,~~|\tau|^{-1};~~ M_0,~~M_2 & \nonumber \eeq
\noindent where $M_0$ and $M_2$ are defined in proposition \ref{prop:KW2ndOInv}. The classifying functions at first,  second and third order are:
\beq & \z_{,a} = c; &\nonumber \\ 
&\z_{,aa} = 0,~~a+\ba=\tilde{Z}_0(a-\ba,|\tau|^{-1}, M_0,~~M_2),~~v=\tilde{Z}_1(a-\ba,|\tau|^{-1}, M_0, M_2);& \nonumber \\
& \Delta M_0 = i \tilde{Z}_2(a-\ba,|\tau|^{-1}, M_0, M_2),~~\Delta M_2 = \tilde{Z}_4(a-\ba,|\tau|^{-1}, M_0, M_2).& \nonumber \eeq

\noindent The invariant coframe is found at first order by applying a null rotation to the coframe \eqref{Kaframe} with parameter $B'$ satisfying the conditions: $ \gamma+B'\alpha + \frac54 \bar{B}'\tau = 0$ which is explicitly given in proposition \ref{prop:KW2ndOInv}.
\end{prop}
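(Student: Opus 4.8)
The statement is a bookkeeping consequence of Lemma \ref{lem:BBmetric} together with the catalogue of second-order invariants in Proposition \ref{prop:KW2ndOInv}; the plan is to assemble those ingredients order by order and then check that the third iteration produces nothing functionally new. First I would recall from Lemma \ref{lem:BBmetric} that, after the coordinate and gauge transformations used there, a metric of the form \eqref{BBmetric} lying outside the exceptional $(0,2,3,3)$ subfamily has invariant count $(0,2,4,4)$; this fixes the canonical form $f(\z,u)=\tfrac{c^2}{16}e^{4(\z/c-iF_1(u)/|c|^2)}+g_1(u)\z+g_2(u)$ with $Re(c)\neq 0$ and $F_1$ generic. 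Computing $f_{,\z\z}=e^{4(\z/c-iF_1/|c|^2)}$ for this $f$, the change of variable $a=\tfrac14\ln f_{,\z\z}$ gives $\z(a,u)=ca+f_1(u)$ with the key degeneracies $\z_{,a}=c$ constant and $\z_{,aa}=\z_{,au}=0$. Substituting into \eqref{SpinCoef1} I would read off the first-order Cartan invariants: $\tau$ yields $\tfrac12\ln(\bar\tau/\tau)=a-\ba$ and $|\tau|^{-1}=\z+\bz=Re(c)(a+\ba)+iIm(c)(a-\ba)+(f_1+\bar f_1)$, and $\alpha$ is then a function of these two, so exactly two functionally independent invariants appear at first order. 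By the argument in the proof of Lemma \ref{lem:BclassInvCount} every metric of the form \eqref{BBmetric} has $|\alpha|\neq\tfrac54|\tau|$, so the remaining isotropy is completely fixed at first order by the null rotation \eqref{NullRot} setting $\gamma'=0$, which produces the invariant coframe \eqref{Icoframe}.

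At second order I would apply Proposition \ref{prop:KW2ndOInv}: the vanishing of $\z_{,aa}$ and $\z_{,au}$ forces $M_1=M_3=0$, and a short computation (parallel to the one in the proof of Lemma \ref{lem:BBmetric}) shows that the spin-coefficients $\mu'$, $\lambda'$ and the $\nu'$-invariant $V$ of \eqref{Vnu} are all functionally dependent on $a-\ba$, $|\tau|^{-1}$, $M_0$ and $M_2$ --- this is exactly the step that separates off the exceptional $(0,2,3,3)$ metrics, for which $V$ would instead be independent. For generic $F_1$ the quadruple wedge products of $d(a-\ba)$, $d|\tau|^{-1}$, $dM_0$ and $dM_2$ do not all vanish, so $M_0,M_2$ are functionally independent of the first-order pair; since $M_2$ is affine in $v$ through $B'$ and $M_0$ supplies the remaining coordinate combination, the four invariants $(a-\ba,|\tau|^{-1},M_0,M_2)$ coordinatize the spacetime locally and $a+\ba$, $v$ are recovered as functions $\tilde Z_0,\tilde Z_1$ of them. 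It then remains to iterate once more: the third-order invariants are the frame derivatives of $M_0,M_2$ along \eqref{Icoframe}, and I would verify that $\Delta M_0$ and $\Delta M_2$ are functions $\tilde Z_2,\tilde Z_4$ of the four invariants while every other third-order derivative is manifestly dependent. Since neither the number of functionally independent invariants nor the (now trivial) isotropy dimension changes between orders two and three, the Karlhede algorithm terminates with count $(0,2,4,4)$, which also reads off the second- and third-order classifying functions quoted in the statement.

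The main obstacle is the second-order dependence analysis: showing cleanly that $\mu'$, $\lambda'$ and $V$ add nothing beyond $M_0,M_2$, and thereby pinning down which choices of $F_1$ (and of $g_1,g_2$) collapse the count to $(0,2,3,3)$, requires tracking the null-rotation corrections \eqref{InvSpinC} and the frame \eqref{Kaframe} through to explicit rational expressions in $|\tau|$ and $v$. The third-order closure is comparatively mechanical once the four coordinatizing invariants are in hand.
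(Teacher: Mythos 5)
Your proposal is correct and follows essentially the same route as the paper: the paper gives no separate proof of this appendix proposition, obtaining it exactly as you do by combining Lemma \ref{lem:BBmetric} (canonical form with real $F_1$, the $(0,2,3,3)$ exception, and isotropy fixed since $|\alpha|\neq\frac54|\tau|$) with Proposition \ref{prop:KW2ndOInv} (the invariants $M_0$, $M_2$ and the null-rotation parameter $B'$), the degeneracies $\z_{,a}=c$, $\z_{,aa}=\z_{,au}=0$ giving $M_1=M_3=0$ and the first-order pair $a-\ba$, $|\tau|^{-1}$. The only remark worth adding is that your ``main obstacle'' --- verifying that $\mu'$, $\lambda'$ and $V$ contribute nothing new --- is automatic once $(a-\ba,\,|\tau|^{-1},\,M_0,\,M_2)$ are functionally independent on a four-dimensional manifold; that computation is needed only to exhibit the classifying functions $\tilde{Z}_0,\dots,\tilde{Z}_4$ explicitly.
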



\begin{prop} \label{prop:0244C} 
The metric belonging to the $(0,2,4,4)-2$ class has the canonical form for $f(\z,u)$
\beq & f(\z,u) = f_2(\z-c_0 - i F_3(u))+g_3(u)\z + g_4(u) & \nonumber \eeq
\noindent where $F_3$, $g_3$, and $g_4$ are arbitrary functions of $u$. 

Using the special coordinates $a=\frac14 ln( f_{,\z \z})$, the four functionally independent invariants may be constructed from the spin-coefficients in \eqref{SpinCoef1} even though the last two invariants arise at second order:
\beq & a-\ba,~~|\tau|^{-1};~~ M'_1,~~\bar{M}'_1 & \nonumber \eeq
\noindent where $Z'_1 = M_1M_3^{-1}$ as defined in proposition \ref{prop:KW2ndOInv}. The classifying functions at first, second and third order are:
\beq & \z_{,a}=i\tilde{z}_0(a-\ba, |\tau|^{-1},M_1', \bar{M}_1'); &\nonumber \\ 
&a+\ba = \tilde{Z}_1(a-\ba, |\tau|^{-1}),~~v=\tilde{Z}_2(a-\ba, |\tau|^{-1}, Z'_1, \bar{Z}'_1);& \nonumber \\
& \Delta M'_1 = i \tilde{z}_3(a-\ba, |\tau|^{-1}, Z'_1, \bar{Z}'_1).& \nonumber \eeq

\noindent The invariant coframe is found at first order by applying a null rotation to the coframe \eqref{Kaframe} with parameter $B'$ satisfying the conditions: $ \gamma+B'\alpha + \frac54 \bar{B}'\tau = 0$ which is explicitly given in proposition \ref{prop:KW2ndOInv}.
\end{prop}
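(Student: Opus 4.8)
The plan is to recompute, with explicit bookkeeping, the Cartan invariants appearing in the proof of Lemma~\ref{lem:BCmetric} on the $(0,2,4,4)-2$ branch, iteration by iteration, and to identify at each order which are functionally independent and which are classifying. First I would fix the starting data: by Case~3 of Lemma~\ref{lem:Bclass} the metric function is \eqref{BCmetric} with $\z(a,u)=Z(a)+c_0+iF_3(u)$, where $Z=\dot f_2^{-1}$ and $\z_{,aa}=Z''(a)\neq 0$, and by Lemma~\ref{lem:BCmetric} this lands in the $(0,2,4,4)-2$ class precisely on the generic branch $F_{3,uu}\neq 0$ (the complementary branch being the exceptional $(0,2,3,3)-2$ subclass). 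I would also record that $|\alpha|\neq\tfrac54|\tau|$ here: if equality held, the wedge-product computation of Lemma~\ref{lem:BclassInvCount} applied to $e^{\ba-a}$, $\z_{,a}$, $\gamma'$ would force $\z_{,aa}=0$, contradicting Case~3. Hence all frame freedom is exhausted at first order by the null rotation of Proposition~\ref{prop:KW2ndOInv} that sets $\gamma'=0$, and the coframe obtained from \eqref{Kacoframe} by this null rotation is the invariant coframe used throughout.

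Next I would carry out the first-order count. Substituting $\z(a,u)$ into \eqref{SpinCoef1} gives $\tau=-e^{\ba-a}/(\z+\bz)$ and $\alpha=\tfrac{\bar\tau}{4}+\sqrt{\bar\tau/\tau}\,\bz_{,\ba}^{-1}$, so $a-\ba=\tfrac12\ln(\bar\tau/\tau)$ and $|\tau|^{-1}=\z+\bz=Z(a)+\bar Z(\ba)+2\,\mathrm{Re}(c_0)$ are functionally independent, while $\z_{,a}=Z'(a)$, after inverting $|\tau|^{-1}$ for $a+\ba$, is a function of $(a-\ba,|\tau|^{-1})$ alone, giving the classifying relation $\z_{,a}=i\tilde z_0$ and the count $(0,2,\dots)$. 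For the second iteration I would evaluate the invariants of Proposition~\ref{prop:KW2ndOInv}: because $\z_{,au}=0$ one has $M_1=\z_{,aa}\bigl(\tfrac{\tau}{\bar\tau}B'-e^{-a-\ba}iF_{3,u}\bigr)$, so $M_1'=M_1M_3^{-1}$ strips off the $a$-only factor $M_3=\z_{,aa}$ and $M_1'$, $\bar M_1'$ are the only new candidates; the remaining invariants $M_0$, $M_2$, $\lambda'$, $\mu'$, $\nu'$ are, as in the proof of Lemma~\ref{lem:BCmetric}, expressible through $M_1'$, $\bar M_1'$ and the first-order invariants. Since $B'$ is linear in $v$ with coefficient a nonzero multiple of $DB'=\tfrac{16|\tau|^2}{25|\tau|^2-16|\alpha|^2}(|\tau|+\bz_{,\ba}^{-1})$, which is nonzero because $|\tau|+\bz_{,\ba}^{-1}=0$ would force $\z_{,a}=0$, I can solve $M_1'$ for $v$, giving $v=\tilde Z_2(a-\ba,|\tau|^{-1},M_1',\bar M_1')$ and $a+\ba=\tilde Z_1(a-\ba,|\tau|^{-1})$; and the $4$-form $d(a-\ba)\wedge d|\tau|^{-1}\wedge dM_1'\wedge d\bar M_1'$ has coefficient proportional to $F_{3,uu}$ (the computation in Lemma~\ref{lem:BCmetric}), hence is nonzero, so exactly four functionally independent invariants are present at second order.

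Finally, a four-dimensional spacetime with trivial isotropy admits at most four functionally independent Cartan invariants, so the count cannot increase at the third iteration; computing $\Delta M_1'$, recording the classifying relation $\Delta M_1'=i\tilde z_3(a-\ba,|\tau|^{-1},M_1',\bar M_1')$, and observing that the (trivial) isotropy is unchanged, confirms termination with invariant count $(0,2,4,4)$.

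The step I expect to be the main obstacle is the second iteration: verifying that the full list $M_0,M_1,M_2,M_3,\lambda',\mu',\nu'$ of Proposition~\ref{prop:KW2ndOInv} collapses, under the special dependence $\z(a,u)=Z(a)+c_0+iF_3(u)$ together with the null-rotation transformation laws for $\lambda',\mu',\nu'$, onto $M_1'$, $\bar M_1'$ modulo the first-order invariants, and that this pair has full rank $2$ relative to them, equivalently that the $F_{3,uu}$ coefficient of the relevant $4$-form does not vanish on the generic branch. In particular $\nu'$ carries $\Delta\bar B$, $\delta\bar B$, $\bar\delta B$ together with the metric data $g_3$, $g_4$ and $(f+\bar f)$, so one must check that its residual $u$-dependence is already accounted for by $v$ and $F_3$ inside $M_1'$; this is exactly the wedge-product bookkeeping carried out in the proof of Lemma~\ref{lem:BCmetric}.
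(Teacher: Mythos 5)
Your plan follows the paper's own route: Proposition \ref{prop:0244C} is established in the text by the computation of Lemma \ref{lem:BCmetric} (building on Case 3 of Lemma \ref{lem:Bclass}, Lemma \ref{lem:BclassInvCount} and Proposition \ref{prop:KW2ndOInv}), and your reconstruction is the same argument with the same invariants: the form $\z=Z(a)+c_0+iF_3(u)$, the reduction $M_1'=M_1M_3^{-1}=\tfrac{\tau}{\bar\tau}B'-ie^{-a-\ba}F_{3,u}$ with $M_2=M_1'+\bar M_1'$ and $M_0=\bz_{,\ba}M_1'-\z_{,a}\bar M_1'$, the non-vanishing of $DB'$, solving $M_1'$ for $v$, and the $F_{3,uu}$ coefficient of the quadruple wedge product \eqref{4wwij}.

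The one point that needs repair is your claim that the metric lands in the $(0,2,4,4)-2$ class \emph{precisely} when $F_{3,uu}\neq 0$. By Lemma \ref{lem:BCmetric} the exceptional $(0,2,3,3)-2$ subclass is cut out not by $F_3$ being linear alone, but by $F_3=C_0u$ \emph{together with} $g_3=c_1$ constant and $Re(g_4)=Im(c_1)C_0u$; these extra conditions come from the second-order invariant $V$ of \eqref{Vnu} built from $\nu'$. Consequently the $(0,2,4,4)-2$ class also contains metrics with $F_{3,uu}=0$ but $u$-dependent $g_3$ or $g_4$, and on that sub-branch your key step fails: $M_1'$ then depends only on $a,\ba,v$, so the $4$-form $d(a-\ba)\wedge d|\tau|^{-1}\wedge dM_1'\wedge d\bar M_1'$ vanishes and the fourth functionally independent invariant must be extracted from $\nu'$ via the wedge product $da\wedge d\ba\wedge dv\wedge dV$ — exactly the step you defer at the end of your plan. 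So either restrict the statement to the generic sub-branch $F_{3,uu}\neq 0$ (which is what the proposition's list $a-\ba,|\tau|^{-1};M_1',\bar M_1'$ implicitly describes) or carry out the $V$-analysis to cover the remainder of the class; as written the plan proves the proposition only where $F_{3,uu}\neq 0$.
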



\begin{prop} \label{prop:0144} 
The metric belonging to the $(0,1,4,4)$ class has the canonical form for $f(\z,u)$
\beq & f(\z,u) = \frac{C_0^2}{16} e^{\frac{-4i(\z+C_1)}{C_0}} + f_1(u) \z + f_2(u). & \nonumber \eeq
\noindent where $f_1$ and $f_2$ may be any set of functions except those listed in the remaining invariant classes $(0,1,3,4,4)$, $(0,1,3,3)$ and $(0,1,2,2)$.  

Using the special coordinates $a=\frac14 ln( f_{,\z \z})$, the four functionally independent invariants may be constructed from the spin-coefficients in \eqref{SpinCoef1} even though the last three invariants arise at second order:
\beq & a-\ba;~~v,~~N_0,~~N_1 & \nonumber \eeq
\noindent where $N_0$ and $N_1$ are defined in equation \eqref{AclassNSharpNoG}. The classifying functions at first, and second order are:
\beq & \z_{,a}=iC_0,~~|\tau|^{-1} = iC_0(a-\ba)+2C_1; &\nonumber \\ 
&a+\ba = \tilde{Z}_0(a-\ba, N_0, N_1);& \nonumber \\
& \Delta N_0 = \tilde{Z}_1(a-\ba, v, N_0, N_1),~~\Delta N_1 = \tilde{Z}_2(a-\ba, v, N_0, N_1).& \nonumber \eeq

\noindent The invariant coframe is found at first order by applying a null rotation to the coframe \eqref{Kaframe} with parameter $B'$ satisfying the conditions: $ \gamma+B'\alpha + \frac54 \bar{B}'\tau = 0$ which is explicitly given in proposition \ref{prop:KW2ndOInv}.
\end{prop}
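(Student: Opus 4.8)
The statement to prove is Proposition~\ref{prop:0144}, which characterizes the $(0,1,4,4)$ class of vacuum Kundt waves: it asserts that the metric function has the exponential-plus-linear form inherited from Lemma~\ref{lem:Aclass}, that the functionally independent invariants are $a-\ba$ (first order) together with $v$, $N_0$, $N_1$ (second order), lists the classifying functions, and describes the invariant coframe.

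\textbf{Proof plan.} The plan is to assemble this proposition from the machinery already developed in Sections~\ref{Leq6}--\ref{Sharpz}. First I would invoke Lemma~\ref{lem:Aclass}: any vacuum Kundt wave whose first-order invariants $\{\alpha,\bar\alpha,\tau,\bar\tau\}$ are functionally dependent on a single invariant must have $f(\z,u)$ of the form \eqref{AclassMetric}, and by the coordinate transformation used in Lemma~\ref{lem:AclassInvCount} (and again in Theorem~\ref{Aclassq4}) one may remove $G(u)$ to reach the normalized form \eqref{AclassNoG}, i.e. $f = \tfrac{C_0^2}{16}e^{-4i(\z+C_1)/C_0} + f_1(u)\z + f_2(u)$. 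Then $\z_{,a} = iC_0$ and $|\tau|^{-1} = iC_0(a-\ba) + 2C_1$, so at first order the only functionally independent invariant is $a-\ba$, giving the ``$(0,1,\dots)$'' start; Corollary~\ref{upper bound5} shows $|\alpha|\neq\tfrac54|\tau|$, so all isotropy is fixed at first order by the null rotation setting $\gamma'=0$, which is exactly the coframe described at the end of the statement.

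\textbf{Second order.} Next I would identify the second-order invariants. From the general computation preceding Lemma~\ref{lem:AclassInvCount}, the surviving second-order Cartan invariants (beyond those functionally dependent on $\tau$) reduce to $\xi$ in \eqref{xi} — equivalently $v$ after stripping $\tau$-dependent pieces — together with the invariant $N$ from $(D^2\Psi)_{51';11'}$, whose real and imaginary parts give $N_0$ and $N_1$ as in \eqref{AclassNSharpNoG}. The key point, which distinguishes $(0,1,4,4)$ from the proper sub-branches, is that for \emph{generic} $f_1,f_2$ the triple wedge products $d(a-\ba)\wedge d\xi \wedge dN_0$, $d(a-\ba)\wedge d\xi\wedge dN_1$, and $d\xi\wedge dN_0\wedge dN_1$ are all non-zero, so $\{a-\ba, v, N_0, N_1\}$ are four functionally independent invariants — hence $q=2$ already produces $4$. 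Conversely, from Theorem~\ref{Aclassq4} the quadruple wedge $d(a-\ba)\wedge d\xi\wedge dN_0\wedge dN_1$ vanishes precisely on the solution families \eqref{AclassG1c}, and from the argument there the further degenerations $F_0 = Cu^{-2}$ give $(0,1,3,3)$ and the fully $u$-independent case gives $(0,1,2,2)$; so the excluded functions are exactly those appearing in the $(0,1,3,4,4)$, $(0,1,3,3)$, $(0,1,2,2)$ classes, as claimed. Finally the classifying functions are read off: $a+\ba$ is the only remaining first-order frame component and is a function of $(a-\ba, N_0, N_1)$ (it cannot depend on $v$ since nothing at first order does), and at second order the only new frame derivatives are $\Delta N_0$ and $\Delta N_1$, expressed as functions of $(a-\ba, v, N_0, N_1)$; closing the algorithm at $q=4$ follows from the $(0,1,4,4,4)$ termination established in Lemma~\ref{lem:CaseCounting} and Appendix~D.

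\textbf{Main obstacle.} The routine part is the wedge-product bookkeeping; the genuinely delicate step is pinning down \emph{exactly} which $(f_1,f_2)$ must be excluded, i.e. verifying that the non-generic locus where the second-order invariant count drops below four coincides precisely with the union of the $(0,1,3,4,4)$, $(0,1,3,3)$, and $(0,1,2,2)$ families and nothing more. This requires carefully tracking the chain of integrations in the proof of Theorem~\ref{Aclassq4} (the constraints $F_y F_{x,u} - F_{y,u}F_x = 0$ and its companions, then $F_0 = C_4 u^{-2}$) and checking that each degeneration lands in an already-named class — so the $(0,1,4,4)$ class is genuinely the complement. Everything else is a direct transcription of the invariants computed in \eqref{SpinCoef1}, \eqref{xi}, \eqref{AclassNSharpNoG} into the special coordinate $a = \tfrac14\ln f_{,\z\z}$.
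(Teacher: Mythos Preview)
Your proposal is correct and follows essentially the same route as the paper: Proposition~\ref{prop:0144} is a summary statement in Appendix~A, and its content is assembled exactly as you describe from Lemma~\ref{lem:Aclass}, Corollary~\ref{upper bound5}, the second-order analysis in \S\ref{Aclassq4upb}, and the degeneration analysis in Theorem~\ref{Aclassq4}. One small slip: the invariant count here is $(0,1,4,4)$, so the algorithm terminates at $q=3$, not $q=4$, and Lemma~\ref{lem:CaseCounting} (which concerns the $q=5$ cases) is not the relevant reference for termination --- the stopping is simply because four functionally independent invariants already appear at second order and the isotropy was fixed at first order.
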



\begin{prop} \label{prop:01344A}
The metric belonging to the $(0,1,3,4,4)-1.0$ class has the canonical form for $f(\z,u)$
\beq & f(\z,u) = \frac{C_0^2}{16} e^{\frac{-4i(\z+C_1)}{C_0}} + F_y[(C_2 + i) \z + 2C_3 + ln(F_y^{ \frac{C_0}{2}})] & \nonumber \eeq
\noindent where $F_y$ may be any function except $Cu^{-2}$.  

Using the special coordinates $a=\frac14 ln( f_{,\z \z})$, the four functionally independent invariants may be constructed from the spin-coefficients in \eqref{SpinCoef1}:
\beq & a-\ba;~~v,~~N_2;~~F(u) = \frac{F_{y,u}}{F_y\frac32}	 & \nonumber \eeq

\noindent where $N_2$ is defined in equation \eqref{q4A}. The classifying functions at first, and second order are:
\beq & \z_{,a}=iC_0,~~|\tau|^{-1} = iC_0(a-\ba)+2C_1; &\nonumber \\ 
&N_1 = \left[ \frac{C_2}{|\tau|} + \frac{C_0^2|\tau|^2-2}{C_0|\tau|^2} \right] N_2;& \nonumber \\
&F_y = \tilde{Z}_0(F),~~a+\ba = \frac12 ln\left( \frac{N_2}{F_y} \right).& \nonumber \eeq

\noindent The invariant coframe is found at first order by applying a null rotation to the coframe \eqref{Kaframe} with parameter $B'$ satisfying the conditions: $ \gamma+B'\alpha + \frac54 \bar{B}'\tau = 0$ which is explicitly given in proposition \ref{prop:KW2ndOInv}.
\end{prop}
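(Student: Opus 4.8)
The plan is to read this proposition off as the explicit, coordinate-level specialization of Theorem~\ref{Aclassq4}. The stated $f(\z,u)$ is exactly the $F_y\neq 0$ branch of \eqref{AclassG1c} after the $G$-removing coordinate change already performed in the proof of Theorem~\ref{Aclassq4} (cf.\ \eqref{AclassNoG}), so that first step amounts to citing what is established. First I would pass to the coordinate $a=\tfrac14\ln(f_{,\z\z})$ and invoke Lemma~\ref{lem:Aclass}: with $G$ removed one has $\z = iC_0a+C_1$, hence $\z_{,a}=iC_0$ is constant and $|\tau|^{-1}=\z+\bz = iC_0(a-\ba)+2C_1$ depends on $a-\ba$ alone. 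Since $\alpha$ and $\tau$ are already Cartan invariants relative to \eqref{Kacoframe}, this shows the unique functionally independent first-order invariant is $a-\ba$, and records $\z_{,a}$ and $|\tau|^{-1}$ as the non-generic dependent invariants at that order.

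Next I would fix the residual isotropy. By Corollary~\ref{upper bound5} we have $|\alpha|\neq\tfrac54|\tau|$ on this class, so the null rotation \eqref{Icoframe} with the parameter $B'$ of Proposition~\ref{prop:KW2ndOInv} sets $\gamma'=0$ and rigidifies the frame; the invariant coframe asserted in the proposition is precisely this one. Then I would build the second-order Cartan invariants as the frame derivatives \eqref{Kaframe} of the first-order ones together with the transformed spin coefficients $\mu',\lambda',\nu'$ of \eqref{InvSpinC}. Using the constraints $F_x=C_2F_y$ and $\mathrm{Re}(f_2)=\bigl[C_3+\tfrac{C_0}{2}\ln F_y\bigr]F_y$ derived in Theorem~\ref{Aclassq4}, I would peel off from these the parts functionally dependent on $\tau$ and $a-\ba$, leaving the invariant $\xi$ of \eqref{xi} (which, via the $a-\ba$ relation, supplies $v$) and $N_2=F_y e^{-2a-2\ba}$, with $N_0,N_1$ reduced to functions of $N_2$ through \eqref{q4A}. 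This produces exactly the two new functionally independent invariants $v$ and $N_2$, i.e.\ the count $3$ at second order, together with the dependent relations $a+\ba=\tfrac12\ln(N_2/F_y)$ and $N_1=\bigl[C_2/|\tau|+(C_0^2|\tau|^2-2)/(C_0|\tau|^2)\bigr]N_2$.

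For the third and fourth iterations I would compute the frame derivative of $\tilde N=N_2=F_y e^{-2a-2\ba}$; as in the proof of Theorem~\ref{Aclassq4} it is proportional to $F_{y,u}e^{-3a-3\ba}$, so the scale-balanced combination $(\Delta\tilde N)/\tilde N^{3/2}$ equals the pure function $F(u)=F_{y,u}/F_y^{3/2}$. This is a genuinely new invariant exactly when $F_y$ is not a solution of $F_{y,u}=-2\sqrt{C^{-1}}\,F_y^{3/2}$, i.e.\ when $F_y\neq Cu^{-2}$ (the excluded case collapses to the $(0,1,3,3)$ metric of Theorem~\ref{Aclassq4}); hence the count is $4$ at third order, with the dependent relation $F_y=\tilde Z_0(F)$ following by local inversion. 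Finally, since $\{a-\ba,\,v,\,N_2,\,F\}$ already determine the four essential coordinates — $a+\ba$ and $u$ being recovered from $a+\ba=\tfrac12\ln(N_2/F_y)$ and $F_y=\tilde Z_0(F)$ — every curvature component at fourth order is a function of these four, so the count stabilizes at $4$ and the algorithm terminates with $(0,1,3,4,4)$. The main obstacle is the second- and third-order bookkeeping: correctly extracting the new invariants $\xi$, $N_2$, and $F$ from the bulky expressions for $\mu',\lambda',\nu'$ and $\Delta\tilde N$ by removing everything functionally dependent on $\tau$ and the lower-order invariants, and then verifying that the resulting four-element set is functionally independent — a check that rests on the wedge-product identities already used to pin down the metric in Theorem~\ref{Aclassq4}.
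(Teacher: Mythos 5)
Your proposal is correct and follows essentially the same route as the paper: the paper offers no separate proof of this appendix proposition, since it is exactly the compilation of Lemma \ref{lem:Aclass}, Corollary \ref{upper bound5}, Proposition \ref{prop:KW2ndOInv} and the $F_y\neq 0$ branch of the proof of Theorem \ref{Aclassq4}, with the same invariants $\xi$ (equivalently $v$), $N_2=F_ye^{-2a-2\ba}$ and $F=F_{y,u}/F_y^{3/2}$ and the same exclusion $F_y\neq Cu^{-2}$ that you identify. Your bookkeeping of the counts $(0,1,3,4,4)$, the dependent relations $N_1$, $a+\ba=\tfrac12\ln(N_2/F_y)$, $F_y=\tilde Z_0(F)$, and the frame fixing via the null rotation parameter $B'$ matches the paper's construction.
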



\begin{prop} \label{prop:01344B} 
The metric belonging to the $(0,1,3,4,4)-1.1$ class has the canonical form for $f(\z,u)$
\beq & f(\z,u) = \frac{C_0^2}{16} e^{\frac{-4i(\z+C_1)}{C_0}} +  F_x( \z+ C_2) & \nonumber \eeq
\noindent where $F_x$ may be any function except $Cu^{-2}$.  

Using the special coordinates $a=\frac14 ln( f_{,\z \z})$, the four functionally independent invariants may be constructed from the spin-coefficients in \eqref{SpinCoef1}:
\beq & a-\ba;~~v,~~N_2;~~F(u) = \frac{F_{x,u}}{F_x\frac32}	 & \nonumber \eeq 

\noindent where $N_2$ is defined in equation \eqref{q4A}. The classifying functions at first, and second order are:
\beq & \z_{,a}=iC_0,~~|\tau|^{-1} = iC_0(a-\ba)+2C_1; &\nonumber \\ 
&N_1 = \frac{-N_2}{C_2|\tau|};& \nonumber \\
&F_x = \tilde{Z}_0(F).& \nonumber \eeq

\noindent The invariant coframe is found at first order by applying a null rotation to the coframe \eqref{Kaframe} with parameter $B'$ satisfying the conditions: $ \gamma+B'\alpha + \frac54 \bar{B}'\tau = 0$ which is explicitly given in proposition \ref{prop:KW2ndOInv}.
\end{prop}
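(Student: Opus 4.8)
The idea is to specialise the construction behind Theorem~\ref{Aclassq4} to the branch $F_y = Im(f_1) = 0$ and then read off the Cartan invariants order by order, exactly as in Subsection~\ref{Aclassq4upb}. First I would note that in this branch the constraints \eqref{AclassG1c} reduce to $f_1 = F_x$ and $Re(f_2) = C_3 F_x$; after the coordinate change \eqref{KWcoordtransf} used in the proof of Theorem~\ref{Aclassq4} to remove $G(u)$, and a relabelling of the arbitrary functions, the metric function takes the asserted form $f(\z,u) = \frac{C_0^2}{16} e^{-4i(\z+C_1)/C_0} + F_x(\z+C_2)$, with the condition on $Re(f_2)$ absorbed into the constant $C_2$. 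Since the term $F_x(\z+C_2)$ is linear in $\z$, it drops out of $f_{,\z\z}$, so $a = \frac14 \ln f_{,\z\z} = -i(\z+C_1)/C_0$; hence $\z = iC_0 a - C_1$, which gives $\z_{,a} = iC_0$, $\z_{,u} = 0$, and $|\tau|^{-1} = \z+\bz = iC_0(a-\ba)+2C_1$ up to the sign of $C_1$.

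Feeding these into Proposition~\ref{prop:SpinCoef1}, and using $|\alpha| \neq \frac54 |\tau|$ for the $(0,1,\dots)$ class (Corollary~\ref{upper bound5}), the first-order invariants $\{\alpha,\tau,\bar\alpha,\bar\tau\}$ depend only on $a-\ba$, so the first-order count is $1$, with $\z_{,a} = iC_0$ and $|\tau|^{-1}$ recorded as the functionally dependent classifying data. I would then invoke the second- and third-order computations already carried out in Subsection~\ref{Aclassq4upb} and in the proof of Theorem~\ref{Aclassq4}: after the null rotation with parameter $B'$ of Proposition~\ref{prop:KW2ndOInv} that sets $\gamma' = 0$ (admissible precisely because $|\alpha| \neq \frac54 |\tau|$), the second-order invariants reduce, modulo functions of $a-\ba$, to $\xi = e^{-a-\ba} v$ of \eqref{xi} and the invariant $N_2$ of \eqref{q4A}, which in this branch is a constant multiple of $F_x\, e^{-2a-2\ba}$. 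Together with $a-\ba$ these are three functionally independent invariants at second order, and the identity $N_1 = -N_2/(C_2 |\tau|)$ together with the recovery of $a+\ba$ from $N_2$ are the remaining classifying functions. At third order the only genuinely new scalar is $\sqrt{2/|\tau|}\,\Delta\tilde{N} = F_{x,u}\, e^{-3a-3\ba}$ (up to constants, with $\tilde{N} = F_0(u)e^{-2a-2\ba}$ and $F_0 \propto F_x$), and the combination $F(u) = F_{x,u}/F_x^{3/2}$ is the fourth invariant; it is functionally independent of the previous three exactly when $F_x \neq Cu^{-2}$, which is the standing hypothesis (the excluded cases $F_x = Cu^{-2}$ and $F_x$ constant falling into the $(0,1,3,3)$ class and a $G_1$ metric respectively, as in the proof of Theorem~\ref{Aclassq4}). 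Since the manifold is four-dimensional the count cannot grow past four, so the algorithm terminates at fourth order with invariant count $(0,1,3,4,4)$ and zero-dimensional isotropy, the invariant coframe being the one obtained from \eqref{Kacoframe} by that null rotation.

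Most of this is bookkeeping once Theorem~\ref{Aclassq4} is in hand; the step that needs genuine care is the functional-independence count. One must verify that at second order $\xi$ and $N_2$ really do introduce the two remaining coordinates $v$ and $u$ — so the count is exactly $3$, with no hidden functional relation among the second-order scalars $\mu'$, $\lambda'$, $\nu'$ and the frame derivatives of the first-order invariants collapsing it — and, symmetrically, that at third order $F_{x,u}/F_x^{3/2}$ is the \emph{only} new invariant, so that the count is exactly $4$ and not larger. Both assertions follow from the explicit forms of the invariants above together with $F_x \neq Cu^{-2}$, but a careless accounting of which combinations are functionally independent is where the argument could break down.
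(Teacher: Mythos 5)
Your proposal is correct and follows essentially the same route as the paper: the proposition is obtained by specializing the computations of Lemma \ref{lem:Aclass}, Subsection \ref{Aclassq4upb} and the proof of Theorem \ref{Aclassq4} to the $F_y=0$ branch of \eqref{AclassG1c}, reading off $\z=iC_0a+$ const, the invariants $a-\ba$, $\xi$, $N_2\propto F_xe^{-2a-2\ba}$ and $F_{x,u}/F_x^{3/2}$, and excluding $F_x$ constant or $Cu^{-2}$ exactly as the paper does. Your handling of the constant-$F_x$ and $Cu^{-2}$ degenerations is, if anything, slightly more explicit than the proposition's own statement.
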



\end{section}

\begin{section}*{Appendix B: Vacuum Kundt Waves Admitting a Symmetry}

In this appendix we collect all of necessary invariants required to sub-classify the vacuum Kundt waves admitting one Killing vectors, by identifying the functionally independent invariants and those functionally dependent invariants that are not generic to all vacuum Kundt waves in this subclass, denoted using the arbitrary real-valued and complex-valued functions $\tilde{Z}$ and $\tilde{z}$ respectively. These functions constitute the essential classifying manifold, as all other curvature components to any order may be expressed in terms of these functions and their derivatives. In each list the use of a semi-colon indicates the separation between the set of invariants arising at each iteration of the Karlhede algorithm. 

\begin{prop} \label{prop:0233B}  
The metric belonging to the $(0,2,3,3)-1$ class has the canonical form for $f(\z,u)$
\beq & \frac{c^2}{16} e^{\frac{4(\z-C_0-iC_1u)}{c}} + c_2\z + Im(c_2)C_1u+C_3 & \nonumber \eeq
\noindent where $c, c_2$ and $C_0,C_1,C_3$ are arbitrary complex-valued and real-valued functions respectively.

Using the special coordinates $a=\frac14 ln( f_{,\z \z})$, the four functionally independent invariants may be constructed from the spin-coefficients in \eqref{SpinCoef1}:
\beq & a-\ba,~~a+\ba;~~v & \nonumber \eeq
\noindent where $M_0$ and $M_2$ are defined in proposition \ref{prop:KW2ndOInv}. The classifying functions at first,  second and third order are:
\beq & \z_{,a} = c,~~|\tau|^{-1} = Re(c)(a+\ba)+Im(c)(a-\ba)+C_0; &\nonumber \\ 
& C_1,~~c_2,~~C_3.  & \nonumber \eeq

\noindent The invariant coframe is found at first order by applying a null rotation to the coframe \eqref{Kaframe} with parameter $B'$ satisfying the conditions: $ \gamma+B'\alpha + \frac54 \bar{B}'\tau = 0$ which is explicitly given in proposition \ref{prop:KW2ndOInv}.
\end{prop}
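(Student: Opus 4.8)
\textbf{Proof proposal for Proposition \ref{prop:0233B}.}
The plan is to treat this as a bookkeeping exercise of exactly the type already carried out in Lemma \ref{lem:BBmetric}, specializing the general invariant analysis to the subclass of metrics \eqref{BBmetric} that were shown there to have invariant count $(0,2,3,3)$. First I would recall from the proof of Lemma \ref{lem:BBmetric} that the metrics in question are precisely those of the form $\tilde f(\z,u)=\frac{c^2}{16}e^{4(\z-C_0-iC_1u)/c}+c_2\z+\mathrm{Im}(c_2)C_1u+C_3$, so that $\z(a,u)=ca+C_0+iC_1u$ with $c$ a complex constant (with $\mathrm{Re}(c)\neq 0$) and $C_0,C_1,C_3$ real constants; the only genuine $u$-dependence left in $\z$ is the linear $iC_1u$ term. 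The first-order invariants then follow immediately from \eqref{SpinCoef1}: $\frac12\ln(\bar\tau/\tau)=a-\ba$, $\z_{,a}=c$ (a constant, hence not an invariant in the functional sense), and $|\tau|^{-1}=\z+\bz=\mathrm{Re}(c)(a+\ba)+\mathrm{Im}(c)(a-\ba)+C_0$. Since $|\tau|^{-1}$ is a function of $a-\ba$ and $a+\ba$ only, the two functionally independent first-order invariants are exactly $a-\ba$ and $a+\ba$, confirming the ``$(0,2,\dots)$'' start.

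Next I would invoke Proposition \ref{prop:KW2ndOInv} to list the candidate second-order invariants $\mu',\lambda',\nu',M_0,M_1,M_2,M_3$. Because $\z_{,aa}=0$ here, $M_1=M_3=0$ drop out, and $\mu'=\lambda'=0$ since $\mu=\lambda=0$ in \eqref{SpinCoef1}; substituting $B'$ from Proposition \ref{prop:KW2ndOInv} with $\z(a,u)=ca+C_0+iC_1u$ into $M_0$ and $M_2$ shows (as in Lemma \ref{lem:BBmetric}) that $M_2$, after peeling off the $\tau$-dependent coefficients, yields $v$ as the new functionally independent invariant, while $M_0$ adds nothing new. Thus the third iteration count is $3$ with the new invariant $v$, and $\nu'$ (equivalently the simplified invariant $V$ in \eqref{Vnu}, built from $g_1=c_2$ and $\mathrm{Re}(g_2)=\mathrm{Im}(c_2)C_1u$) must be checked to produce no further functionally independent invariant: since $g_1=c_2$ is constant and $\mathrm{Re}(g_2)$ is linear in $u$ with $u$ already recoverable from $|\tau|^{-1}$ and $a-\ba$ via the relation $\z-\bz=2iC_1u$, $V$ is a function of the already-established invariants. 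I would then record the classifying (functionally dependent) data: at first order $\z_{,a}=c$ and the formula for $|\tau|^{-1}$; at second/third order the constants $C_1$, $c_2$, $C_3$ appearing as the remaining essential parameters, together with the statement that the invariant coframe is obtained from \eqref{Kaframe} by the null rotation with parameter $B'$ fixing $\gamma'=0$.

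Finally I would verify termination: the isotropy is already trivial at first order (since $|\alpha|\neq\frac54|\tau|$ for all $(0,2,\dots)$ metrics by Lemma \ref{lem:BclassInvCount}), so the algorithm stops once the number of functionally independent invariants stabilizes, and we have shown it goes $0\to 2\to 3\to 3$, i.e.\ $q=3$. The main obstacle I anticipate is purely computational rather than conceptual: one must substitute the very explicit $\z(a,u)=ca+C_0+iC_1u$ into the somewhat lengthy expressions for $B'$, $M_0$, $M_2$ and $\nu'$ from Proposition \ref{prop:KW2ndOInv} and carefully identify which combinations reduce to functions of $a-\ba$, $a+\ba$ and $v$ — in particular confirming that $v$ genuinely appears (so the count rises to $3$) but that no fourth independent invariant sneaks in at third order from $\nu'$ or from the frame derivatives of $v$. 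Since all of these reductions have essentially been done in the proof of Lemma \ref{lem:BBmetric} for this exact subclass, the proof here amounts to transcribing those computations into the invariant-list format, so I expect no serious difficulty.
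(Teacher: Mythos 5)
Your overall route is the same as the paper's: Proposition \ref{prop:0233B} is simply the invariant-by-invariant record of the $(0,2,3,3)$ subcase isolated in Lemma \ref{lem:BBmetric}, and your plan — substitute $\z(a,u)=ca+C_0+iC_1u$ into \eqref{SpinCoef1} to get $a-\ba$ and $a+\ba$ at first order, note $M_1=M_3=0$, extract $v$ from $M_2$ (with $M_0$ adding nothing), and then check $\nu'$ through the simplified invariant $V$ of \eqref{Vnu} — is exactly the computation the paper performs. However, two of your intermediate claims are wrong as stated, and the second one matters. First, $\mu'$ and $\lambda'$ do not vanish: although $\mu=\lambda=0$ in \eqref{SpinCoef1}, the primed quantities in \eqref{InvSpinC} contain the null-rotation parameter $B$ and its frame derivatives, so they are generically nonzero; the correct statement (and the one the paper relies on) is that they are expressible in terms of the already-established invariants, hence contribute nothing new to the count.

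Second, and more seriously, your justification that $V$ produces no fourth functionally independent invariant — that ``$u$ is already recoverable from $|\tau|^{-1}$ and $a-\ba$ via $\z-\bz=2iC_1u$'' — is false, and if it were true it would contradict the proposition itself: in this subclass $|\tau|^{-1}$ is a function of $a$ and $\ba$ alone (it contains no $u$), and $\z-\bz$ is not itself a Cartan invariant, so $u$ is \emph{not} a function of the invariants; that is precisely why the count stabilizes at three and the spacetime admits the Killing vector $U-C_1T$ instead of being a $(0,2,4,4)$ metric. The correct argument is a cancellation specific to the canonical form: with $g_1=c_2$ constant and $Re(g_2)=Im(c_2)C_1u$, the $u$-dependent part of $c_2\z+\bar{c}_2\bz$ equals $iC_1u(c_2-\bar{c}_2)=-2Im(c_2)C_1u$, which cancels against $2Re(g_2)$, so that $V=c_2|\tau|^{-1}+c_2\,c\,a+\bar{c}_2\,\bar{c}\,\ba+2Re(c_2)C_0$ depends only on $a\pm\ba$ and is functionally dependent on the first-order invariants. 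With that repair (and the analogous dependence check for $\mu'$, $\lambda'$ and the frame derivatives of $v$), your transcription of Lemma \ref{lem:BBmetric} does yield the proposition, with $\z_{,a}=c$, $|\tau|^{-1}$, and the constants $C_1$, $c_2$, $C_3$ as the classifying data and the coframe fixed by the null rotation setting $\gamma'=0$.
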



\begin{prop} \label{prop:0233C} 
The metric belonging to the $(0,2,3,3)-2$ class has the canonical form for $f(\z,u)$
\beq & f(\z-C - i C_0u)+c_1\z + Im(c_1)C_0u+C_2 & \nonumber \eeq
\noindent where $C,C_0,C_2$,and $c_1$ are arbitrary real-valued and complex-valued constants.
Using the special coordinates $a=\frac14 ln( f_{,\z \z})$, the four functionally independent invariants may be constructed from the spin-coefficients in \eqref{SpinCoef1}:
\beq & \z_{,a},~~\bz_{,\ba};~~v. & \nonumber \eeq
\noindent The classifying functions at first, second and third order are:
\beq & a-\ba=i\tilde{Z}_0(\z_{,a},\bz_{,\ba}),~~\z+\bz = \tilde{Z}_1(\z_{,a},\bz_{,\ba}),~~C; &\nonumber \\ 
&a+\ba = \tilde{Z}_2(\z_{,a},\bz_{,\ba}),~~c_1,~~C_2.& \nonumber \eeq

\noindent The invariant coframe is found at first order by applying a null rotation to the coframe \eqref{Kaframe} with parameter $B'$ satisfying the conditions: $ \gamma+B'\alpha + \frac54 \bar{B}'\tau = 0$ which is explicitly given in proposition \ref{prop:KW2ndOInv}.
\end{prop}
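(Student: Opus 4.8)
The plan is to push the computation of the proof of Lemma \ref{lem:BCmetric} one iteration further, so as to list every functionally independent invariant and exhibit the full classifying manifold. First I would fix the coordinate picture: for a metric function of the stated form one has $f_{,\z\z} = f_{2,\z\z}(\z - C - iC_0u)$, so setting $a = \frac14\ln f_{,\z\z}$ gives $\z(a,u) = Z(a) + C + iC_0u$ with $Z$ the (holomorphic) local inverse of $\frac14\ln f_{2,\z\z}$; hence $\z_{,u} = iC_0$, $\z_{,a} = Z'(a)$, $\z_{,aa} = Z''(a)\neq 0$, which is precisely what puts the metric into Case 3 of Lemma \ref{lem:Bclass} rather than into the form \eqref{BBmetric}. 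A short computation then shows that in these coordinates all metric functions — in particular $2H = f + \bar f$ — are independent of $u$ once the constraints $g_3 = c_1\in\mathbb{C}$, $\mathrm{Re}(g_4) = \mathrm{Im}(c_1)C_0u$ of the statement are imposed, so that $\partial_u$ is a Killing vector; this observation will power the stabilization step at the end.

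Next I would run the first two iterations explicitly. Substituting $\z(a,u)$ into Proposition \ref{prop:SpinCoef1} produces $\tau$, $\alpha$ and $\gamma$; since $\z_{,a} = Z'(a)$ is a holomorphic, locally invertible function of $a$, it (with its conjugate $\bz_{,\ba} = \overline{\z_{,a}}$) already supplies two functionally independent invariants, and $a-\ba$ and $\z+\bz = |\tau|^{-1}$ are recovered from them up to the additive constant $2C$. One checks with Lemma \ref{lem:Collins} and Lemma \ref{lem:No022} that $|\alpha|\neq\frac54|\tau|$, so the entire residual isotropy is fixed at first order by the single null rotation with parameter $B'$ of Proposition \ref{prop:KW2ndOInv} imposing $\gamma + B'\alpha + \frac54\bar B'\tau = 0$; this determines the invariant coframe and the count $(0,2,\dots)$. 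For the second-order invariants of Proposition \ref{prop:KW2ndOInv}: $M_3 = \z_{,aa}$ is a function of $\z_{,a}$ and brings nothing new, while, exactly as in Lemma \ref{lem:BCmetric}, the surviving quadruple wedge-product constraint is the one proportional to $F_{3,uu}$, which vanishes here since $F_3 = C_0u$; the combinations $M_0, M_1, M_2$ therefore reduce to functions of $\z_{,a},\bz_{,\ba}$ together with one new invariant $v$, peeled off as the coefficient of the $v$-linear term of $M_2$ (the $v$-dependence entering through $B'$). Finally, reducing the transformed spin coefficient $\nu'$ modulo the previous invariants leaves $V = |\tau|^{-1}g_3 + g_3\z + \bar g_3\bz + 2\mathrm{Re}(g_4)$, and the constraints of the statement make $V$ a function of $a\pm\ba$ and $v$ alone; hence the count is $(0,2,3,\dots)$, with classifying data $a+\ba$, $c_1$, $C_2$.

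It then remains to verify stabilization, i.e. that the third iteration reproduces the second (count $3$, isotropy $0$, which cannot decrease). Because the metric is $u$-independent in the coordinates $(a,\ba,v,u)$, every Cartan invariant computed so far is a function of $(a,\ba,v)$ only, and the coefficients of the invariant frame $\{D',\Delta',\delta',\bar\delta'\}$ — built from $\tau$, $\alpha$, $v$, $B'$ and the derivatives of $\z(a,u)$ — are likewise functions of $(a,\ba,v)$; hence the invariant frame preserves the ring of functions of $(a,\ba,v)$, no fourth functionally independent invariant can appear, and the algorithm terminates with count $(0,2,3,3)$. Reading off the invariant list and classifying functions then gives the statement, together with the canonical form of the null rotation $B'$ from Proposition \ref{prop:KW2ndOInv}.

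I expect the main obstacle to be exactly this third-order stabilization: one must either argue cleanly, via the $u$-independence / Killing vector, that all higher invariants live on the three-dimensional leaf coordinatised by $(a,\ba,v)$, or else carry out the frame derivatives $\Delta'V$, $\Delta'M_i$, $\delta'M_i$ and check term by term that nothing functionally independent of $\{\z_{,a},\bz_{,\ba},v\}$ survives. A secondary subtlety is bookkeeping: $\mu',\lambda',\nu'$ and the $M_i$ are defined before $\gamma$ is set to zero, so the first-order-fixed $B'$ (which is linear in $v$) must be substituted consistently throughout, and one must track carefully which objects are genuinely new complex invariants rather than conjugates of quantities already counted.
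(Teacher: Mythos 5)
Your proposal is correct and takes essentially the same route as the paper: it reruns the computation behind Lemma \ref{lem:BCmetric} for the subcase $F_3=C_0u$, $g_3=c_1$, $Re(g_4)=Im(c_1)C_0u$ (first-order invariants $\z_{,a},\bz_{,\ba}$ with isotropy fixed by the null rotation $B'$ of Proposition \ref{prop:KW2ndOInv}, the $F_{3,uu}$ wedge-product constraint, $v$ peeled from $M_2$, and $\nu'$ reduced to $V$), and then terminates at $(0,2,3,3)$ by the same observation the paper makes, namely that all $u$-dependence has been removed so every Cartan invariant and the invariant frame live on the $(a,\ba,v)$ leaf. The only slip is a citation: the inequality $|\alpha|\neq\frac54|\tau|$ for the $(0,2,\dots)$ class follows from Lemma \ref{lem:BclassInvCount} (whose proof uses Lemma \ref{lem:Collins}), not from Lemma \ref{lem:No022}.
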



\begin{prop} \label{prop:0133A}
The metric belonging to the $(0,1,3,3)$ class has the canonical form for $f(\z,u)$
\beq & \frac{C_0^2}{16} e^{\frac{-4i(\z-iC_0u+C_1)}{C_0}} +c_3\z + Im(c_3)C_2u+iC_2 & \nonumber \eeq
\noindent where $C_0,C_1,C_2,$ and $c_3$ are arbitrary real and complex valued constant, respectively.

Using the special coordinates $a=\frac14 ln( f_{,\z \z})$, the four functionally independent invariants may be constructed from the spin-coefficients in \eqref{SpinCoef1}:
\beq & a-\ba;~~v,~~u^{-2}e^{-2a-2\ba} & \nonumber \eeq 

\noindent The classifying functions at first, and second order are:
\beq & \z_{,a}=iC_0,~~|\tau|^{-1} = iC_0(a-\ba)+2C_1; &\nonumber \\ 
& C_2,~~c_3.& \nonumber \eeq

\noindent The invariant coframe is found at first order by applying a null rotation to the coframe \eqref{Kaframe} with parameter $B'$ satisfying the conditions: $ \gamma+B'\alpha + \frac54 \bar{B}'\tau = 0$ which is explicitly given in proposition \ref{prop:KW2ndOInv}.
\end{prop}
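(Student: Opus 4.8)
The plan is to read off Proposition~\ref{prop:0133A} from the analysis already performed in Section~\ref{Sharpz}, since the $(0,1,3,3)$ metrics form one of the degenerate branches encountered in the proof of Theorem~\ref{Aclassq4}. Every vacuum Kundt wave with a single functionally independent invariant among $\{\alpha,\bar\alpha,\tau,\bar\tau\}$ at first order has metric function of the form \eqref{AclassMetric} by Lemma~\ref{lem:Aclass}, and after the coordinate change with $h_{,u}=e^{-2G/C_0}$ this reduces to \eqref{AclassNoG}, with $f_1(u)=F_x(u)+iF_y(u)$ and $f_2(u)$ still constrained by the wedge-product equations derived in the proof of Theorem~\ref{Aclassq4}. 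There the third-order invariant was shown to be of the form $\tilde N=F_0(u)e^{-2a-2\ba}$, and the branch in which its frame derivative produces nothing new — which is precisely the $(0,1,3,3)$ branch, as opposed to $(0,1,3,4,4)$ — was shown to force $F_0$ constant or $F_0=C_4u^{-2}$. So the first step is simply to invoke that dichotomy.

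Next I would dispose of the two sub-cases. If $F_0$ is constant, all of the metric functions in \eqref{AclassNoG} are $u$-independent, so $U=\partial_u$ is a Killing vector and the metric is one of the $G_1$ spacetimes \eqref{Aclass:G1a}--\eqref{Aclass:G1b} up to relabelling of constants; this is the stated canonical form specialised to constant residual coefficients. If $F_0=C_4u^{-2}$, I would apply the transformation \eqref{KWcoordtransf} with $h_{,u}=u^{-1}$ exactly as in the proof of Theorem~\ref{Aclassq4}, carefully tracking how $f_1$, $f_2$ and the gauge term transform under the induced change in $\z=\z(a,u)$; collecting the results yields the displayed canonical form $f(\z,u)=\tfrac{C_0^2}{16}e^{-4i(\z-iC_0u+C_1)/C_0}+c_3\z+Im(c_3)C_2u+iC_2$, and the surviving symmetry is the combination $U-C_0T$ of Table~\ref{table:KundtWavesG1}, which I would confirm by substituting directly into \eqref{RPFWmetric} together with the transformation rule \eqref{KWcoordtransf} for $H$.

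With the canonical form fixed, the remaining content is obtained by direct computation in the coordinates $a=\tfrac14\ln f_{,\z\z}$. Substituting the explicit $\z(a,u)$, which is affine in $a$ with purely imaginary slope $iC_0$, into Proposition~\ref{prop:SpinCoef1} gives $\z_{,a}=iC_0$ and $|\tau|^{-1}=iC_0(a-\ba)+2C_1$, so $a-\ba$ is the unique functionally independent first-order invariant and everything else at first order is classifying. After fixing the last frame freedom by the null rotation of Proposition~\ref{prop:KW2ndOInv} to set $\gamma'=0$, I would compute the second-order invariants: the invariant $\xi$ of \eqref{xi} collapses to a multiple of $v$ once $G$ has been removed, the $N$-type invariant $\tilde\nu$ collapses to a multiple of $u^{-2}e^{-2a-2\ba}$ under the constraint $F_0\propto u^{-2}$, and one checks that these are the only two genuinely new invariants, so that three functionally independent invariants are present at second order. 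A final check that the third-order frame derivatives of $v$ and $u^{-2}e^{-2a-2\ba}$ are again functions of the invariants already in hand closes the count at $(0,1,3,3)$ and simultaneously produces the classifying functions listed.

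The main obstacle I anticipate is organisational rather than conceptual: one must push the explicit null-rotation parameter $B'$ of Proposition~\ref{prop:KW2ndOInv} through the transformed spin-coefficients $\mu',\lambda',\nu'$ of \eqref{InvSpinC} and through the frame derivatives, and then argue — on the surviving constants $C_0,C_1,C_2,c_3$ — that precisely two new independent invariants appear at second order and nothing new at third. This is the same computation as in Theorem~\ref{Aclassq4} specialised to $F_0\propto u^{-2}$, so the real subtlety is only in verifying that the coordinate transformation with $h_{,u}=u^{-1}$ does not reintroduce a hidden $u$-dependence that would raise the count, and in checking that the constant-$F_0$ sub-case genuinely matches the displayed form rather than a strictly smaller family.
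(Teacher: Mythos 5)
Your proposal follows essentially the same route as the paper: the paper prints no separate proof of this appendix proposition, its content being exactly the specialization of the analysis in the proof of Theorem \ref{Aclassq4} (the dichotomy $F_0$ constant or $F_0=C_4u^{-2}$, the coordinate transformation with $h_{,u}=u^{-1}$, and direct evaluation of the invariants from Propositions \ref{prop:SpinCoef1} and \ref{prop:KW2ndOInv}), which is precisely what you do. The caveat you flag at the end, about whether the constant-$F_0$ subcase really lands in the displayed canonical family rather than a smaller one, is a looseness already present in the paper's own treatment of that branch and not a defect peculiar to your argument.
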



\end{section}

\begin{section}*{Appendix C: Vacuum Kundt Waves Admitting Two Symmetries}

In this appendix we collect all of necessary invariants required to sub-classify the vacuum Kundt waves admitting two Killing vectors, by identifying the functionally independent invariants and those functionally dependent invariants that are not generic to all vacuum Kundt waves in this subclass. These functions constitute the essential classifying manifold, as all other curvature components to any order may be expressed in terms of these functions and their derivatives. In each list the use of a semi-colon indicates the separation between the set of invariants arising at each iteration of the Karlhede algorithm.

\begin{prop} \label{prop:0122}
The metric belonging to the $(0,1,2,2)$ class has the canonical form for $f(\z,u)$
\beq & f(\z,u) = \frac{C_0^2}{16} e^{\frac{-4i(\z-iC_2+C_1)}{C_0}} & \nonumber \eeq
\noindent where $C_0$ and $C_1$ are arbitrary real-valued constants.

Using the special coordinates $a=\frac14 ln( f_{,\z \z})$, the four functionally independent invariants may be constructed from the spin-coefficients in \eqref{SpinCoef1}:
\beq & a-\ba;~~e^{-a-\ba}v.& \nonumber \eeq 

\noindent The classifying functions at first and second order are:
\beq & \z_{,a}=iC_0,~~|\tau|^{-1} = iC_0(a-\ba)+2C_1. &\nonumber \eeq

\noindent The invariant coframe is found at first order by applying a null rotation to the coframe \eqref{Kaframe} with parameter $B'$ satisfying the conditions: $ \gamma+B'\alpha + \frac54 \bar{B}'\tau = 0$ which is explicitly given in proposition \ref{prop:KW2ndOInv}.
\end{prop}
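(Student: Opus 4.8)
The plan is to obtain this proposition as the most degenerate member of the $(0,1,\dots)$ family, building on Lemmas~\ref{lem:Aclass} and~\ref{lem:AclassInvCount}. First I would note that an invariant count beginning with $(0,1,\dots)$ means the first-order invariants reduce to a single functionally independent one; in particular $\alpha,\bar\alpha,\tau,\bar\tau$ are functionally dependent on one invariant, so Lemma~\ref{lem:Aclass} applies and puts $f$ in the form~\eqref{AclassMetric}. By the computation in Corollary~\ref{upper bound5} this forces $|\alpha|\neq\frac54|\tau|$, so the null rotation~\eqref{NullRot} can be used to set $\gamma'=0$, which exhausts the residual isotropy; throughout, $a=\frac14\ln f_{,\z\z}$ serves as the canonical transverse coordinate.

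Next I would re-run the wedge-product chain from the proof of Lemma~\ref{lem:AclassInvCount}, now under the hypothesis that the count begins $(0,1,2,\dots)$, i.e.\ that exactly two functionally independent invariants appear at second order. Requiring the relevant triple wedge products to vanish kills, in succession, $F_y=Im(f_1)$ and $Re(f_2)$, then forces $F_x=Re(f_1)=\frac{G_{,uu}}{2C_0}+\frac{G_{,u}^2}{2C_0^2}+C_2$, then $G_{,uu}=0$, and finally makes the residual constant vanish; the coordinate transformation~\eqref{KWcoordtransf} with $h_{,u}=e^{-2G/C_0}$ then removes the now-linear $G(u)$, leaving $f=\frac{C_0^2}{16}\,e^{-4i(\z+C_1)/C_0}$. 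This is the form in the statement once one observes that the admissible shift $\z\mapsto\z+i\tilde C$ with $\tilde C\in\mathbb{R}$ merely rescales the amplitude of the exponential, which is what the additional constant in the statement records.

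With the canonical $f$ fixed, the remaining assertions are direct computations. Since $f_{,\z\z}$ is independent of $u$, solving $e^{4a}=f_{,\z\z}$ gives $\z=iC_0 a+\mathrm{const}$ with $\z_{,u}=0$; hence $\z_{,a}=iC_0$ and, from Proposition~\ref{prop:SpinCoef1}, $|\tau|^{-1}=\z+\bz=iC_0(a-\ba)+2C_1$, which reproduces the stated classifying functions. Because $\tau=-e^{\ba-a}/(\z+\bz)$ then depends only on $a-\ba$, the single first-order invariant is $a-\ba$. Putting $G_{,u}=0$ in~\eqref{xi} collapses the second-order invariant $\xi$ to $e^{-a-\ba}v$; by the last step of the proof of Lemma~\ref{lem:AclassInvCount}, every other Cartan invariant at second and third order (namely $\mu',\lambda',\nu'$, the invariant $DB$, and the frame derivatives of $\tau$ and of $\xi$) is a function of $a-\ba$ and $e^{-a-\ba}v$, so the count stabilises at $(0,1,2,2)$. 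The invariant coframe is the one obtained by applying to~\eqref{Kacoframe} the null rotation with parameter $B'$ of Proposition~\ref{prop:KW2ndOInv}; since $|\alpha|\neq\frac54|\tau|$ this fixes the frame completely already at first order, so no further frame fixing is needed. As a consistency check, the $u$-independence of $H$ and the scaling behaviour of $f$ under $\z\mapsto\z+i\tilde C$ reproduce the two Killing vectors recorded in Table~\ref{table:KundtWavesG1}, matching the two functionally independent invariants found above.

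Most of the labour, the wedge-product cascade reducing $f$ to its canonical form, is inherited from the proof of Lemma~\ref{lem:AclassInvCount}, so little new computation is required. The points that need care are: (i) checking that being in the $(0,1,2,2)$ branch, as opposed to $(0,1,3,\dots)$ or $(0,1,4,\dots)$, is precisely the vanishing of the residual constant isolated in that proof; and (ii) the gauge and coordinate bookkeeping, including the choice of branch of the logarithm and of the shift $\tilde C$, needed to pin down the unique canonical $f$ and to confirm that the isotropy is already trivial once $\gamma'=0$, so that the frame is rigid at first order. I expect (ii) to be the main practical obstacle.
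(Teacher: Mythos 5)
Your proposal is correct and follows essentially the same route as the paper: the canonical form and the termination at $(0,1,2,2)$ are exactly what the paper establishes via Lemma \ref{lem:Aclass}, Corollary \ref{upper bound5} and the wedge-product cascade in the proof of Lemma \ref{lem:AclassInvCount}, with the appendix proposition then read off by direct computation of $\z_{,a}$, $|\tau|^{-1}$ and $\xi=e^{-a-\ba}v$ in the coordinates $a=\frac14\ln f_{,\z\z}$. Your remark that the extra constant $C_2$ in the stated form just records the residual shift $\z\mapsto\z+i\tilde{C}$ (an amplitude rescaling) is consistent with the paper's conventions.
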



\end{section}

\begin{section}*{Appendix D: All Potential Invariant Counts for the Vacuum Kundt Waves}
To write down a potential case of the Karlhede algorithm up to a given iteration, p, we will use the following notation, $(t_1,t_2,...,t_p,...)$, where $t_i,~ i\in[1,p]$ denotes the number of functionally independent invariants at the i-th iteration of the Karlhede algorithm. We may map out all potential cases of the Karlhede algorithm, by using each potential invariant count as a node in a tree diagram where dashed lines indicate the existence of a non-trivial isotropy group at the previous iteration, while  solid lines indicate all isotropy has been fixed. 
 \begin{figure}[h] 
  \centering 
  \includegraphics[scale=0.35]{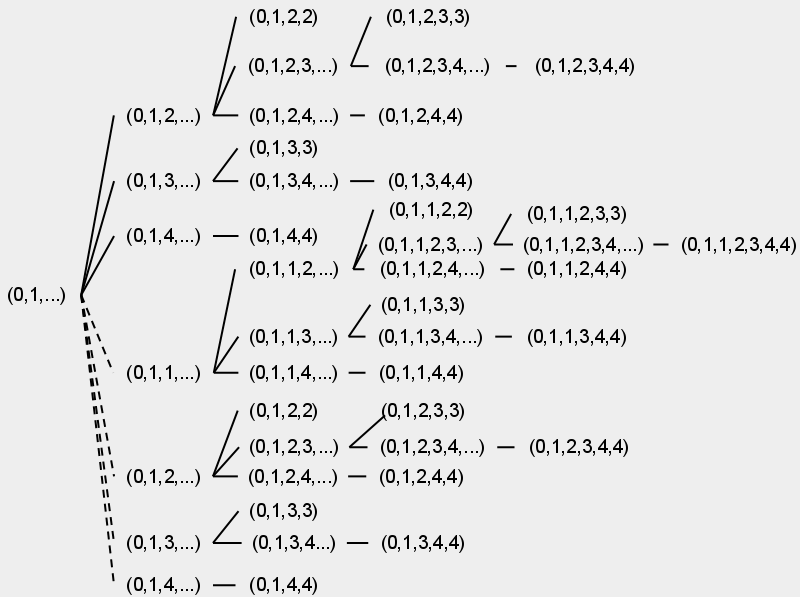} 
 \caption{Potential invariant-count tree for the case where one functionally independent invariant appears at first order of the algorithm } \label{0KarlAlg}
\end{figure}
 \begin{figure}[h] 
  \centering 
  \includegraphics[scale=0.35]{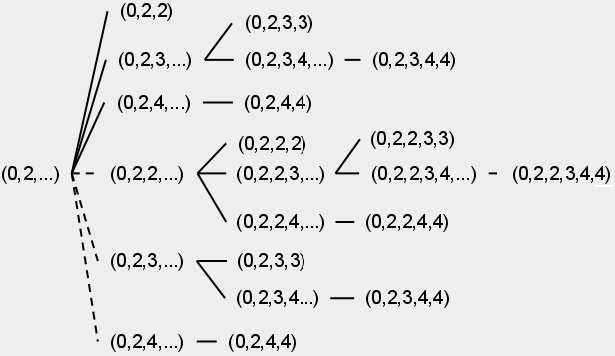} 
 \caption{Potential invariant-count tree for the case where two functionally independent invariants appear at first order of the algorithm } \label{2KarlAlg}
\end{figure}
  \begin{figure}[h] 
  \centering 
  \includegraphics[scale=0.35]{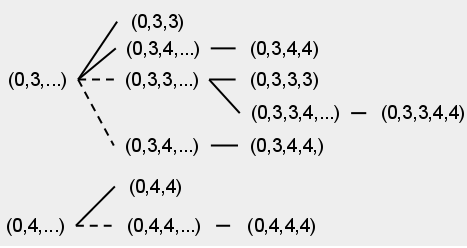} 
 \caption{ Potential invariant-count trees for the case where three or four functionally independent invariants appear at first order of the algorithm } \label{1KarlAlg}
\end{figure}
\newpage
\end{section}

\end{document}